\documentclass[11pt]{article}

\usepackage[a4paper,margin=1in]{geometry}
\usepackage[T1]{fontenc}
\usepackage{lmodern}
\usepackage{amsmath,amssymb,amsthm,bm}
\usepackage{booktabs}
\usepackage{array,tabularx}
\usepackage{graphicx}
\usepackage{caption}
\AddToHook{env/table/begin}{\normalsize}
\usepackage{float}
\usepackage{microtype}
\usepackage[authoryear,longnamesfirst]{natbib}
\usepackage[hidelinks]{hyperref}

\hypersetup{
  pdftitle={Calibrated Horizon-Weighted Local Projection Designs for Markov Switchbacks},
  pdfauthor={Makoto Nakakita and Teruo Nakatsuma},
  pdfsubject={Dynamic experimental design; local projections; Markov switchbacks}
}

\graphicspath{{figures/}}
\newtheorem{proposition}{Proposition}
\newtheorem{lemma}{Lemma}
\newtheorem{corollary}{Corollary}
\newtheorem{assumption}{Assumption}

\title{Calibrated Horizon-Weighted Local Projection Designs for Markov Switchbacks}
\author{Makoto Nakakita\\
RIKEN\\
\texttt{makoto.nakakita@riken.jp}
\and
Teruo Nakatsuma\\
Keio University}
\date{}

\begin{document}
\maketitle

\begin{abstract}
We study temporal assignment design for Markov switchback experiments when the reported object is a dynamic local-projection target. We develop a calibrated selector that chooses the feasible persistence minimizing the covariance, HAC, residual-bootstrap, or realized-schedule risk of the estimator and reporting object specified before the experiment. A balanced homoskedastic Markov benchmark yields a closed form because the lagged-assignment information matrix is AR(1)-Toeplitz with a tridiagonal inverse. The benchmark maps local-projection reporting weights into persistence recommendations within a prespecified first-order Markov class. Field recommendations replace the benchmark covariance with residualized, serially dependent, pilot-calibrated, or randomization-based risk. A semi-synthetic Low Carbon London evaluation uses observed half-hourly baseline dynamics and known injected responses to assess design risk. It evaluates the covariance calculations under realistic load autocovariance and identifies when calibrated covariance selection should replace the homoskedastic Markov formula. Near-boundary designs use randomization-first inference when many-spell normal approximations are unsupported.
\end{abstract}

\noindent\textbf{Keywords:} dynamic experimental design; local projections; switchback experiments; Markov assignment; optimum input design; demand response.

\medskip
\noindent\textbf{JEL classification:} C18; C26; C44; C90; Q41.

\section{Introduction}
Many economic experiments are dynamic. A demand-response signal can reduce electricity use immediately, trigger rebound later in the day, and reshape the load profile over subsequent periods. A platform pricing experiment can affect current participation, future search behavior, and congestion in later periods. A marketing intervention can have delayed effects that are invisible in a contemporaneous average treatment effect. In these settings, the object of interest is often a dynamic response curve rather than a single scalar effect.

Local projections provide a natural empirical language for reporting such responses. Researchers estimate horizon-specific effects $(g_0,\ldots,g_H)$, cumulative effects such as $\sum_{h=0}^H g_h$, or shape contrasts that distinguish immediate reductions from later rebound. Recent work has clarified the causal interpretation of local projections and related time-series objects: potential-system frameworks provide nonparametric content for time-series experiments, local projections, impulse responses, and SVARs \citep{CarlsonShephard2026}, and policy-path local projection frameworks define dynamic counterfactuals under sequential interventions \citep{Wang2026}.

We ask a complementary design question. If the researcher plans to report a dynamic response curve or a horizon-weighted functional of it, how should the treatment assignment path be designed? Existing switchback and time-series experimental-design papers largely target average effects, carryover robustness, covariate balance, or generic mean squared error. \citet{BojinovSimchiLeviZhao2023}, for example, study optimal switchback designs under assumptions on carryover order and provide randomization-based inference. Recent time-series A/B testing work considers full-history adaptive designs optimized for generic treatment-effect MSE \citep{WuEtAl2026}. These are natural objectives for many experiments, but they are not the same as optimizing precision for the multi-horizon local-projection object that the researcher ultimately reports.

We make the reported dynamic causal object the design target. We introduce the horizon-weighted local projection (HW-LP) criterion. Let $g=(g_0,\ldots,g_H)'$ be a design-invariant response curve and let $\theta_c=c'g$ be the scalar object of interest. The design risk is the asymptotic variance of $c'\hat g$ under the assignment mechanism. More generally, for a positive semidefinite matrix $W$, the risk is
\[
    \mathcal R_W(\pi)=\operatorname{tr}\{W V_\pi\},
\]
where $V_\pi$ is the design-dependent covariance matrix of the estimated response curve. This criterion covers isolated horizon effects, cumulative effects, delayed-response windows, rebound contrasts, and full-curve reporting.

The main analytical results are developed for balanced Markov switchback designs. Let $A_t\in\{0,1\}$, $\Pr(A_t=1)=1/2$, and $\Pr(A_t=A_{t-1})=s$. Write $r=2s-1$, so that $r=0$ is iid randomization, $r>0$ is persistent switching, and $r<0$ is alternating switching. For $X_t=(A_t-1/2,\ldots,A_{t-H}-1/2)'$, the population information matrix is
\[
    Q_H(r)=\frac14 (r^{|i-j|})_{i,j=0}^H.
\]
In the homoskedastic finite-memory model, $V(r)=\sigma^2 Q_H(r)^{-1}$. Because the inverse of this AR(1)-Toeplitz matrix is tridiagonal, the HW-LP risk for the contrast $c'g$ has the closed form
\[
    \mathcal R_c(r)=4\sigma^2\frac{a+br^2-2dr}{1-r^2},
\]
where $a=\sum_{j=0}^H c_j^2$, $b=\sum_{j=1}^{H-1}c_j^2$, and $d=\sum_{j=1}^{H}c_{j-1}c_j$. When $0<|d|<(a+b)/2$, the interior stationary point has the closed form
\[
    r_c^{\mathrm{int}}=\frac{(a+b)-\sqrt{(a+b)^2-4d^2}}{2d}.
\]
If $d=0$, iid assignment is optimal in the benchmark class; if $|d|=(a+b)/2$, the variance-only infimum is approached at the boundary and the feasible optimum is the nearest allowed endpoint.

The formulas imply that optimal switchback persistence is target-specific: isolated immediate effects favor iid assignment, smooth cumulative targets favor persistence, and field deployment should replace the closed-form benchmark with calibrated covariance selection when the benchmark assumptions fail. Within the benchmark Markov class, iid assignment is optimal for isolated horizon effects. Persistent treatment episodes are variance-favored for smooth cumulative effects. Alternating or moderate-persistence designs can be optimal for oscillating and rebound contrasts. The feasible set matters: at half-hour frequency, for example, $r=0.95$ implies an expected episode length of $2/(1-r)=40$ half-hours, while $r=0.7$ implies 6.67 half-hours. When carryover may extend beyond the estimated horizon, excessive persistence can also increase omitted-lag bias. We therefore add a bias-augmented sensitivity criterion that trades off target variance against worst-case omitted-carryover bias without claiming to be an exact finite-sample MSE under every omitted-tail data-generating process.

Figure~\ref{fig:risk-curves} summarizes how the reporting target changes the variance-minimizing persistence.
\begin{figure}[t]
    \centering
    \includegraphics[width=0.95\textwidth]{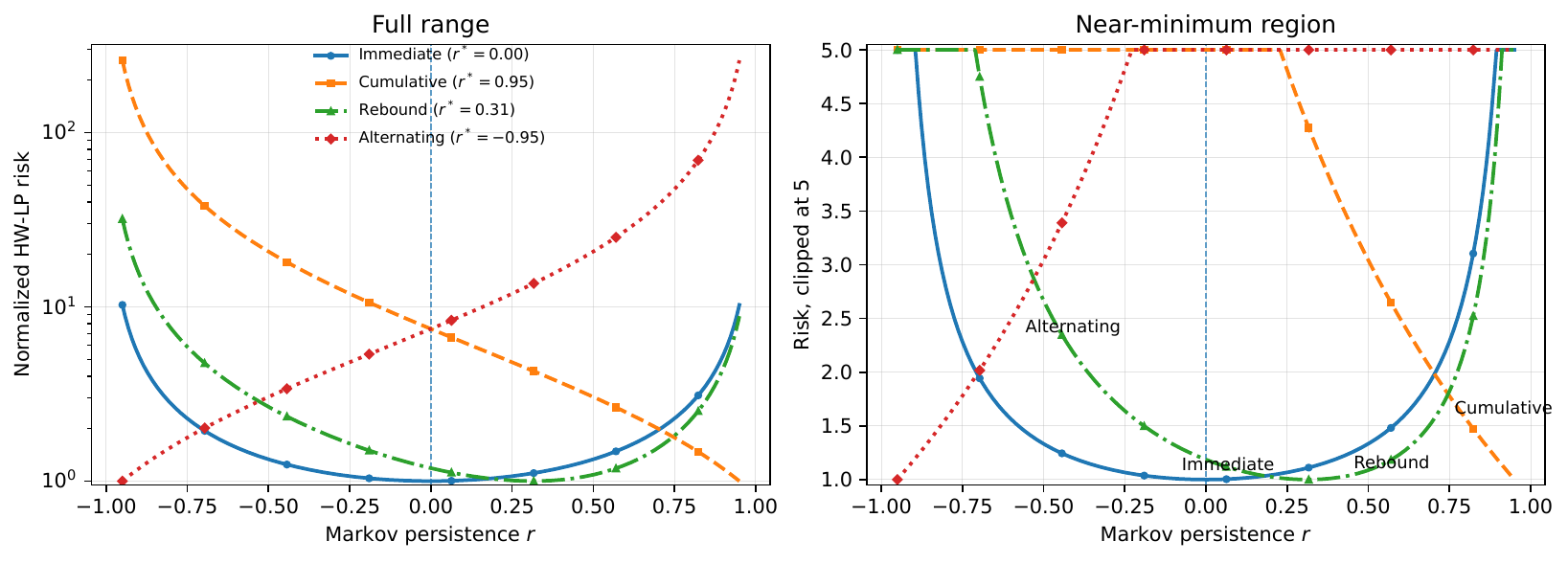}
    \caption{Target-specific risk and variance-minimizing persistence. Normalized HW-LP risk varies with the reporting target: isolated horizons favor iid assignment, smooth cumulative targets favor persistence, and oscillating contrasts can favor alternating designs.}
    \label{fig:risk-curves}
\end{figure}

We make three contributions. First, we formulate a horizon-weighted LP design criterion for a pre-specified reduced-form reporting object and show why path adjustment fixes that object while naive LP coefficients are design-dependent projections. Second, we derive a closed-form Markov benchmark for binary switchbacks and explain its frequency-domain interpretation, sparse active-share variants, and scope limits. The AR(1)-Toeplitz precision calculation is inherited from optimal input design; our contribution is the LP reporting-object translation, the binary Markov feasible class, and the operational mapping from target shape to persistence. Third, we turn the benchmark into a calibrated implementation rule by adding HAC and residualized covariance selection, pilot-regret bounds, local-bias carryover sensitivity, randomization-first near-boundary inference, and semi-synthetic covariance stress tests. The closed form is used as an interpretable benchmark for the persistence margin; field recommendations are based on calibrated covariance and randomization diagnostics when residual dependence, calendar controls, or feasibility constraints depart from the balanced homoskedastic Markov benchmark.

Section \ref{sec:target} defines the dynamic response targets and path-adjusted local projections. Section \ref{sec:markov} derives the HW-LP criterion and closed-form optimal persistence, with extensions in Appendix~\ref{app:design_extensions}. Section \ref{sec:robust} studies omitted-carryover misspecification. Section \ref{sec:simulations} presents estimation, design selection, and inference, with details in Appendix~\ref{app:inference_details}. Section \ref{sec:lcl} reports the empirical calibration, and Section~\ref{sec:limitations} discusses interpretation and external validity.

\section{Related literature}\label{sec:related}
The analysis relates to six literatures. The first is the literature on local projections and dynamic causal response objects. Local projections were introduced by \citet{Jorda2005} as a flexible way to estimate impulse responses. Subsequent work clarified that LPs and VARs share the same population impulse-response estimands under unrestricted lag structures, while LP inference can remain robust in persistent settings and at long horizons \citep{PlagborgMollerWolf2021,MontielOleaPlagborgMoller2021}. Other work studies the efficiency and finite-sample behavior of LP estimators, including comparisons with VAR-based impulse responses, smoothing across horizons, and practical inference procedures \citep{KilianKim2011,BarnichonBrownlees2019,InoueJordaKuersteiner2026}. Recent causal foundations show when time-series regressions, local projections, and policy-path responses can be interpreted as dynamic causal effects rather than purely predictive objects \citep{RambachanShephard2025,CarlsonShephard2026,Wang2026}. Related work on nonlinear environments shows that linear LPs with observed shocks or proxies can still identify weighted averages of causal effects, which is useful for interpreting response curves outside fully linear models \citep{KolesarPlagborgMoller2025}. Macroeconomic applications using LP-style response curves further illustrate why researchers often care about cumulative, delayed, and state-dependent dynamic effects rather than a single contemporaneous coefficient \citep{Ramey2016,AuerbachGorodnichenko2013,RameyZubairy2018}.

The second literature studies randomized experiments and temporal assignment. General potential-outcome treatments of randomized experiments emphasize that the estimand and assignment mechanism should be specified jointly \citep{ImbensRubin2015,AtheyImbens2017}. \citet{BojinovShephard2019} extend potential-outcome reasoning to single time-series experiments and exact randomization tests, while \citet{BojinovRambachanShephard2021} develop finite-population dynamic causal effects for panel experiments. Switchback designs with carryover have been analyzed by \citet{BojinovSimchiLeviZhao2023}, while minimax designs under habituation and clustered switchbacks under spatiotemporal interference are studied by \citet{BasseDingToulis2023} and \citet{JiaKallusYu2023}. More recent work connects regression-based and design-based inference, balances lagged outcomes and covariates through sequential rerandomization, and uses full-history reinforcement learning to optimize generic treatment-effect MSE \citep{LinDing2025,ZengEtAl2026,WuEtAl2026}. The design objective differs from contemporaneous average-effect, balance-score, and global-MSE criteria: it optimizes the covariance of the horizon-weighted response object that the researcher plans to report.

A related set of papers studies interference and exposure mappings in randomized experiments. Temporal carryover in this paper can be viewed as within-unit interference over time, so the exposure-mapping perspective is conceptually useful. Work on two-stage randomized designs, general interference, and unknown interference formalizes how treatment assignment can affect outcomes through exposure paths rather than through a single contemporaneous treatment indicator \citep{HudgensHalloran2008,AronowSamii2017,SavjeAronowHudgens2021}. Randomization tests under network interference also show how design-based inference changes when the null hypothesis is not sharp under the original assignment mechanism \citep{AtheyEcklesImbens2018}. The framework here is narrower: the exposure path is temporal, the response object is a local-projection target, and the main design margin is assignment persistence.

The third literature is optimal input design and optimal experimental design for dynamic systems. Classical optimal design gives general criteria for choosing informative experiments, including equivalence and optimality ideas that underlie many information-matrix calculations \citep{Kiefer1959,Fedorov1972,Pukelsheim2006}. Dynamic input-design work already shows that the input spectrum and autocorrelation shape the covariance of dynamic-parameter estimators; early and textbook treatments include \citet{Mehra1974}, \citet{GoodwinPayne1977}, \citet{Zarrop1979}, and \citet{Ljung1999}. The AR(1)-Toeplitz precision calculation used below is also a standard Gaussian-Markov precision fact, familiar from state-space and GMRF treatments \citep{RueHeld2005}. These precision-matrix results are standard. The contribution here is narrower: we translate the OID logic into a binary switchback problem whose target is a pre-specified LP reporting object rather than a transfer-function parameter; whose feasible class is constrained by run length, active share, and verifiability; and whose field recommendation can replace the homoskedastic information matrix by the residualized covariance of the estimator actually used. Thus the closed-form calculation below is an inherited precision-matrix benchmark, while the HW-LP contribution is the switchback-specific reporting, feasibility, and calibrated-implementation layer.

Applied design texts and Bayesian or nonlinear-design reviews emphasize that the target, loss function, and prior or model class determine the optimal design \citep{AtkinsonDonevTobias2007,ChalonerVerdinelli1995,PronzatoPazman2013}. Our calculations are adjacent to that literature, but the reported LP contrast is itself the object: it can be defined without specifying a full dynamic outcome model and remains meaningful when the design covariance is estimated by HAC, pilot, realized-schedule, or residual-bootstrap methods. This distinction is why the Toeplitz calculation is framed as a transparent benchmark and not as a claim of global dominance over continuous-valued, multisine, adaptive, finite-sequence, or fully optimized input designs.

The fourth literature is target-specific and data-driven switchback design. Empirical-Bayes and prior-data approaches study how carryover, seasonality, autocorrelation, and simultaneous experiments affect switchback MSE \citep{XiongChinTaylor2024}. Sequentially rerandomized switchbacks use lagged outcomes and covariates to improve balance over time \citep{ZengEtAl2026}. Online experimentation work highlights the operational importance of logging, randomization, long-run metrics, and platform constraints in large-scale A/B tests \citep{KohaviLongbothamSommerfieldHenne2009,KohaviTangXu2020,BakshyEcklesBernstein2014}. Long-term metric design in online platforms is especially close in spirit to the idea that the reported object should shape the experiment \citep{HohnholdOBrienTang2015}. The HW-LP criterion uses the same broad principle, but the tuning target is a pre-specified response curve, cumulative effect, delayed window, or rebound contrast.

The fifth literature is residential electricity demand response and dynamic pricing. The Low Carbon London data were produced by the United Kingdom's first residential dynamic time-of-use electricity-pricing trial and include smart-meter readings and tariff information \citep{StrbacEtAl2024,Schofield2015}. Reviews and field experiments show that households respond to time-varying prices, but the magnitude and timing of responses depend on tariff design, enabling technologies, and event duration \citep{FaruquiSergici2010,NewshamBowker2010,Wolak2011}. Behavioral and informational interventions also generate dynamic response patterns, including persistence, habituation, and rebound-like behavior \citep{Allcott2011,AllcottRogers2014,ItoIdaTanaka2018}. Additional studies of marginal-price perception, time-of-use tariff participation, and distributional effects emphasize that load shifting is shaped by information, salience, and household constraints \citep{Ito2014,Ozaki2018,YunusovTorriti2021}. The Low Carbon London exercise is not a causal estimate of the observed tariff; it uses realistic high-frequency load dynamics to evaluate how different assignment paths perform when the response path is known.

Finally, we contribute to the broader literature on experiment planning under multiple possible reporting targets. Researchers often enter a dynamic experiment knowing that they may report an immediate effect, cumulative effect, delayed window, or full response curve. The target-menu criterion introduced below provides a simple way to pre-specify this uncertainty without changing the estimand after observing the data. In that sense, the contribution is not another post-estimation adjustment for local projections, but a design-stage rule that ties the randomization path to the dynamic objects that will be reported.

\section{Identification and local-projection estimands}\label{sec:target}
\subsection{Dynamic response targets and finite-memory identification}
Potential-outcome notation defines the causal object before the LP design criterion is introduced, because the assignment path is dynamic and temporal carryover is a form of within-unit interference over time.

Let $a_{1:T}$ denote a candidate binary assignment path and let $Y_t(a_{1:T})$ be the potential outcome at time $t$. Our main target is an aggregate or single-series dynamic response, not a household-level heterogeneous-treatment-effect parameter. In applications with many households, $Y_t(a_{1:T})$ is the aggregate potential outcome under the common tariff or platform policy path. This convention rules out cross-sectional interference only if the target is an individual-level effect. If network, feeder, or congestion spillovers are part of the aggregate load response, they are included in the aggregate potential outcome; we then design for that aggregate response rather than for a no-spillover household estimand. This is a scope choice, not a proof that household-level interference is irrelevant. If the scientific object is an individual effect, a spillover effect, or an exposure-specific effect, the design should be expanded to a two-stage or clustered switchback design and an explicit spatial-temporal exposure mapping \citep{HudgensHalloran2008,AronowSamii2017,SavjeAronowHudgens2021,Leung2022}.

A further aggregation condition is required for design invariance. If the observed aggregate is a fixed linear average of household potential outcomes under the common policy path, then the aggregate response vector is the corresponding average of household response paths and does not change with the Markov persistence used to estimate it. If compliance, opt-out behavior, appliance automation, or missingness makes the set of effective contributors depend on the assignment path, the aggregate estimand becomes design-weighted. The HW-LP criterion still optimizes precision for the reported aggregate object, but the object should then be described as a policy-path response for the induced compliance regime rather than as a simple population average of household elasticities. This is the Jensen-style aggregation caveat: nonlinear participation and heterogeneous compliance can move the estimand when the design changes.

\begin{assumption}[Potential-outcome identification]\label{ass:po_identification}
For each experiment length $T$ and candidate assignment path $a_{1:T}$:
\begin{enumerate}
    \item[(i)] \emph{Well-defined aggregate exposure.} The observed outcome satisfies $Y_t=Y_t(A_{1:T})$. Either cross-sectional interference is absent for the unit-level estimand, or the reported estimand is the aggregate potential outcome that includes any spillovers induced by the common assignment path.
    \item[(ii)] \emph{No anticipation.} For any two paths with $a_{1:t}=a'_{1:t}$, $E\{Y_t(a_{1:T})\}=E\{Y_t(a'_{1:T})\}$.
    \item[(iii)] \emph{Sequential design exogeneity.} The assignment rule is implemented from the pre-specified randomization device and logged pre-treatment states. Conditional on the design history used by the rule, the innovation in $A_t$ is independent of future potential outcomes.
    \item[(iv)] \emph{Finite-memory linear projection.} After residualizing deterministic calendar controls and pre-treatment covariates, the mean potential-outcome path admits the projection
    \[
        E\{Y_t(a_{1:T})\}=\mu_t+\sum_{\ell=0}^{H_0} g_\ell a_{t-\ell},
    \]
    with $a_j=0$ for pre-experiment indices. The closed-form benchmark sets the reporting horizon $H\ge H_0$ or treats omitted lags through the local-bias sensitivity criterion in Section~\ref{sec:robust}.
\end{enumerate}
\end{assumption}

Assumption~\ref{ass:po_identification} separates identification from design. The potential-outcome restrictions define the response vector. The Markov assignment mechanism then determines the information matrix for estimating that vector. The response vector in the aggregate design is an aggregate reduced-form path response: it is the response of the chosen aggregation unit to a common assigned path. It is not an individual treatment effect, a structural price elasticity, a welfare primitive, or a household-level spillover estimand. If the scientific target is household elasticity, network spillover, feeder-level exposure, or structural welfare, the exposure mapping and design must be changed before applying the HW-LP criterion. The main asymptotic theory is written in a stationary super-population language because it yields compact covariance formulas. A finite-population or fixed-schedule reading is also possible: condition on the potential-outcome schedule and evaluate the randomization distribution induced by the assignment path. The realized-schedule diagnostic in Proposition~\ref{prop:realized_schedule} uses exactly this finite-design reading. The Low Carbon London exercise is semi-synthetic: it fixes the observed baseline load path, injects known response paths, and randomizes assignments to evaluate design risk. It therefore validates the variance and covariance arithmetic under realistic baseline dynamics, not causal recovery of an unknown historical tariff response.

Consider a single time-series experiment with binary assignment $A_t\in\{0,1\}$. Under Assumption~\ref{ass:po_identification}, the initial analytical setting can be written as the finite-memory additive response model
\begin{equation}
    Y_t=\mu_t+\sum_{\ell=0}^{H} g_\ell A_{t-\ell}+\varepsilon_t.
    \label{eq:finite_memory}
\end{equation}
The vector $g=(g_0,\ldots,g_H)'$ is the dynamic response curve. Throughout the analytical section, deterministic controls, intercepts, calendar effects, and pre-treatment covariates are treated as partialled out. Equivalently, all regressions below can be read after applying a Frisch--Waugh--Lovell residualization step to the outcome and to the lagged assignment vector. A researcher may care about the full curve or about a functional
\begin{equation}
    \theta_c=c'g=\sum_{\ell=0}^H c_\ell g_\ell.
    \label{eq:theta_c}
\end{equation}
The contrast $c=e_0$ targets the contemporaneous effect; $c=\mathbf 1$ targets the cumulative effect; weights concentrated on medium-run horizons target delayed effects; sign-changing weights target rebound or shape contrasts.

The notation below uses $Q_H$ for the assignment information matrix, $V_H$ for the covariance of the path-adjusted LP estimator, $\mathcal R$ for the feasible persistence set, $p_0$ for an active-share budget, and $r$ for assignment persistence.

The key distinction is between the causal target and the assignment design. The target $g$ or $c'g$ is fixed. The assignment mechanism changes the information available about that target.

The terminology ``local projection target'' refers to the response object the researcher plans to report: a vector of horizon-specific effects or a weighted functional of that vector. In the finite-memory additive model, the path-adjusted local projection and the distributed-lag representation estimate the same design-invariant response components. We use the distributed-lag form in the analytical benchmark because it exposes the assignment information matrix and yields closed-form design rules. In more general nonlinear, state-dependent, or heteroskedastic environments, the same design principle applies to the covariance matrix of the chosen local-projection estimating equations:
\[
    \hat r_W=\arg\min_{r\in\mathcal R}\operatorname{tr}\{W\hat V(r)\},
\]
where $\hat V(r)$ can be estimated from a pilot experiment, residual bootstrap, or calibrated baseline model. Thus the finite-memory results are not meant to exhaust all local-projection settings; they provide a transparent benchmark and comparative statics for choosing assignment persistence. The reported scalar object is $\theta_c=c'g$, where $c$ encodes the pre-specified horizon weights. Naive LP coefficients and path-adjusted coefficients coincide only under finite-memory restrictions that remove future-assignment mixtures; outside that benchmark, the calibrated implementation replaces the closed-form covariance by $\widehat V(r)$ for the estimator actually used.

\subsection{Path-adjusted local projections}
The first requirement for a design criterion is that the estimand does not change when the assignment persistence changes. In the finite-memory benchmark, the path-adjusted LP satisfies this requirement. Let $X_t=(\widetilde A_t,\ldots,\widetilde A_{t-H})'$ denote the centered lag vector and let $Q_X=E[X_tX_t']$.

\begin{lemma}[Target invariance of path-adjusted LP]\label{lem:target_invariance}
Suppose the finite-memory benchmark can be written after centering and residualizing deterministic controls as
\[
    Y_t=\alpha+X_t'g+\varepsilon_t,
    \qquad E[X_t\varepsilon_t]=0.
\]
If $Q_X$ is nonsingular, then the population coefficient from the path-adjusted LP, equivalently the distributed-lag regression of $Y_t$ on $X_t$, is $g$. Thus the response vector $g$ is invariant to the assignment persistence whenever the lagged-assignment covariance is full rank. Assignment design changes the information matrix used to estimate $g$, not the target itself.
\end{lemma}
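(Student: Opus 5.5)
The argument is the standard population least-squares (normal-equation) computation, with care taken over the intercept. I would define the population projection coefficients $(\alpha^*,\beta^*)$ of $Y_t$ on $(1,X_t')'$ as the minimizers of $E[(Y_t-a-X_t'b)^2]$. Because $X_t$ is centered, $E[X_t]=0$, so the intercept and slope blocks of the second-moment matrix decouple. The first-order conditions then read
\[
E[Y_t]=\alpha^*, \qquad E[X_tY_t]=Q_X\beta^*.
\]
Nonsingularity of $Q_X$ gives the unique solution $\beta^*=Q_X^{-1}E[X_tY_t]$.

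Next I would substitute the model $Y_t=\alpha+X_t'g+\varepsilon_t$ into $E[X_tY_t]$. This gives $E[X_tY_t]=E[X_t]\alpha+Q_Xg+E[X_t\varepsilon_t]=Q_Xg$. Here $E[X_t]\alpha=0$ by centering, and $E[X_t\varepsilon_t]=0$ by hypothesis. Hence $\beta^*=Q_X^{-1}Q_Xg=g$.

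The intercept condition needs $E[\varepsilon_t]=0$, which I would take as part of the normalization after residualizing controls. Even without it, the intercept would only absorb the mean of $\varepsilon_t$ and would not affect $\beta^*$, because of the block-diagonal structure. If the deterministic controls are partialled out explicitly, I would invoke Frisch--Waugh--Lovell: the slope on $X_t$ from the full regression equals the slope from regressing residualized $Y_t$ on residualized $X_t$. This reduces the problem to the displayed model.

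For the invariance claim, I would note that the persistence $r$ (or any assignment law) enters only through $Q_X$ and the joint law of $(X_t,\varepsilon_t)$. The right-hand side $g$ does not depend on it, since $g$ is defined by Assumption~\ref{ass:po_identification}(iv) through potential outcomes. Therefore every design with full-rank $Q_X$ yields the same population coefficient $g$. For the benchmark class, $Q_H(r)=\tfrac14(r^{|i-j|})$ is nonsingular exactly when $|r|<1$, since its determinant is $4^{-(H+1)}(1-r^2)^H$. The design therefore changes only $Q_X$, and hence the sampling covariance, which is $Q_X^{-1}\Omega Q_X^{-1}$ in general and $\sigma^2Q_X^{-1}$ under homoskedasticity.

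The only delicate point is the orthogonality $E[X_t\varepsilon_t]=0$ under a persistent design. I would justify it from sequential design exogeneity, Assumption~\ref{ass:po_identification}(iii), together with no anticipation: the assignments in $X_t$ are independent of the outcome noise. Since the lemma takes $E[X_t\varepsilon_t]=0$ as a hypothesis, this is a remark rather than a step of the proof.
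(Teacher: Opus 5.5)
Your proposal is correct and follows essentially the same route as the paper: compute the population least-squares coefficient $Q_X^{-1}E[X_tY_t]$, substitute the model, and use $E[X_t\varepsilon_t]=0$ to obtain $g$. Your handling of the intercept through centering and of the controls through Frisch--Waugh--Lovell spells out what the paper compresses into its ``residualized convention for constants and controls.''
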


A naive local projection of $Y_{t+h}$ on $A_t$, by contrast, can mix the effect of the initial assignment with the effects of subsequent assignments when the design induces persistence. The following proposition makes the source of the problem explicit.

\begin{proposition}[Naive local projections are design-dependent mixtures]\label{prop:naive_mixture}
Let $\widetilde A_t=A_t-EA_t$ and suppose the assignment process satisfies
\[
    E[\widetilde A_t\widetilde A_{t-k}]=\sigma_A^2r^{|k|},\qquad |r|<1.
\]
Consider the finite-memory response model written in centered assignments,
\[
    Y_{t+h}=\mu_h+\sum_{\ell=0}^{H}g_\ell \widetilde A_{t+h-\ell}+\varepsilon_{t+h},
    \qquad E[\varepsilon_{t+h}\widetilde A_t]=0.
\]
The population coefficient from the naive local projection of $Y_{t+h}$ on $\widetilde A_t$ is
\[
    \beta_h^{\mathrm{naive}}(r)
    =\frac{\operatorname{Cov}(Y_{t+h},\widetilde A_t)}{\operatorname{Var}(\widetilde A_t)}
    =\sum_{\ell=0}^{H}g_\ell r^{|h-\ell|}.
\]
Thus, unless $r=0$ or the response has only one nonzero component, the naive LP coefficient is a persistence-dependent mixture of nearby dynamic response components rather than the design-invariant component $g_h$.
\end{proposition}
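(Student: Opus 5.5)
The plan is a direct population covariance computation that uses only the stated autocovariance structure and the orthogonality condition. First, write $\operatorname{Var}(\widetilde A_t)=E[\widetilde A_t^2]=\sigma_A^2$, which is the $k=0$ case of the assumed autocovariance. Then I would compute $\operatorname{Cov}(Y_{t+h},\widetilde A_t)$ by substituting the centered finite-memory representation. The constant $\mu_h$ drops out because $\widetilde A_t$ has mean zero. The error term contributes $E[\varepsilon_{t+h}\widetilde A_t]=0$ by hypothesis. That leaves
\[
\operatorname{Cov}(Y_{t+h},\widetilde A_t)=\sum_{\ell=0}^{H}g_\ell E[\widetilde A_{t+h-\ell}\widetilde A_t],
\]
where linearity of expectation over the finite sum is immediate.

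Next I would evaluate each cross-moment from the assumed autocovariance, taking the lag to be $k=h-\ell$. This uses stationarity and symmetry in the lag, both encoded in the absolute value $|k|$, so the term equals $\sigma_A^2 r^{|h-\ell|}$. The only point requiring care is that the lag may be negative, i.e.\ future assignments when $\ell<h$. The assumption is stated for all integer $k$ through $|k|$, so these terms are covered as well. Dividing by $\sigma_A^2>0$ gives
\[
\beta_h^{\mathrm{naive}}(r)=\sum_{\ell=0}^{H} g_\ell r^{|h-\ell|}.
\]
The ratio $\operatorname{Cov}/\operatorname{Var}$ is the population coefficient of a univariate regression with intercept, since the intercept absorbs $\mu_h$.

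For the interpretive clause, I would split into cases.
\begin{itemize}
\item If $r=0$, then $r^{|h-\ell|}$ equals $1$ when $\ell=h$ and $0$ otherwise, using the convention $0^0=1$. Hence $\beta_h^{\mathrm{naive}}=g_h$ when $h\le H$, and $\beta_h^{\mathrm{naive}}=0$ otherwise.
\item If $g$ has a single nonzero component $g_m$, then $\beta_h^{\mathrm{naive}}=g_m r^{|h-m|}$. This is still a rescaling that depends on $r$, but it involves no mixing across components.
\item Otherwise, with $r\neq 0$ and at least two nonzero components, the coefficient is a nondegenerate combination of those components with weights $r^{|h-\ell|}$ that depend on persistence.
\end{itemize}

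No step is a real obstacle. The only subtlety is the bookkeeping for negative lags, which means confirming that the assumed autocovariance applies to leads as well as lags, and stating the $0^0=1$ convention at $r=0$.
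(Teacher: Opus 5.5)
Your proposal is correct and follows the paper's proof essentially step for step: substitute the centered finite-memory representation, eliminate the constant and error terms using $E\widetilde A_t=0$ and $E[\varepsilon_{t+h}\widetilde A_t]=0$, evaluate each cross-moment as $\sigma_A^2r^{|h-\ell|}$, and divide by $\sigma_A^2$. Your additional points make the paper's informal interpretive clause more precise: leads ($\ell<h$) are covered by the $|k|$ form, the $0^0=1$ convention handles $r=0$, and a single nonzero component still gives the $r$-dependent rescaling $g_m r^{|h-m|}$ rather than $g_h$.
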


The proposition does not imply that the naive LP coefficient is inconsistent for its projected estimand. Rather, the projected estimand itself changes with assignment persistence. This leads to a simple reporting rule. If the pre-analysis plan defines a design-invariant dynamic response component, cumulative effect, or shape contrast, the analysis should use the path-adjusted LP or the equivalent distributed-lag regression. If the scientific object is instead the projection induced by the implemented policy path--for example, the average response to the whole bundled tariff path rather than to a single initial assignment--then the naive LP can be the appropriate estimand, but the estimand should be described as design-specific. Our design problem is the first case: the response object is fixed before choosing assignment persistence.

Proposition~\ref{prop:naive_mixture} and Lemma~\ref{lem:target_invariance} are the reason the design target is defined through path-adjusted local projections. The goal is not to optimize precision for a response object that changes with the assignment persistence, but to fix the dynamic response component and then choose the assignment path that estimates it well. To isolate dynamic response components, we use the path-adjusted local projection, equivalently the distributed-lag representation
\begin{equation}
    Y_t=\alpha+g_0 A_t+g_1 A_{t-1}+\cdots+g_H A_{t-H}+u_t.
    \label{eq:path_adjusted}
\end{equation}
In richer models, the horizon-$h$ regression can be written as
\begin{equation}
    Y_{t+h}=\alpha_h+\beta_h A_t+\Gamma_h'A_{t+1:t+h}+\Lambda_h'H_t+u_{t+h,h},
    \label{eq:future_path_adjusted}
\end{equation}
where future assignments and history controls adjust for design-induced path correlation. The finite-memory representation in \eqref{eq:path_adjusted} is the workhorse for the closed-form design results below.

\section{Markov assignment and the HW-LP information criterion}\label{sec:markov}
\subsection{Assignment information and design feasibility}
We focus first on balanced Markov switchback designs:
\begin{equation}
    \Pr(A_t=1)=\frac12,\qquad \Pr(A_t=A_{t-1})=s.
    \label{eq:balanced_markov}
\end{equation}
Define $r=2s-1$ and $\widetilde A_t=A_t-1/2$. The parameter $r$ is the assignment persistence: $r=0$ is iid Bernoulli assignment, $r>0$ favors longer treatment episodes, and $r<0$ favors alternation. Designs are chosen from a feasible interval $\mathcal R=[\underline r,r_{\max}]\subset(-1,1)$, which can encode minimum dwell-time, fatigue, safety, switching-cost, or operational constraints. Since $s=(1+r)/2$, the expected treatment-episode length is
\[
    E[\text{run length}]=\frac{1}{1-s}=\frac{2}{1-r}.
\]
This translation is useful in applications: at half-hour frequency, $r=0.7$ corresponds to about 3.33 hours, while $r=0.95$ corresponds to about 20 hours. Let $X_t=(\widetilde A_t,\widetilde A_{t-1},\ldots,\widetilde A_{t-H})'$.

\begin{proposition}[Information matrix under unbalanced Markov switchback]\label{prop:unbalanced}
Let $p=\Pr(A_t=1)\in(0,1)$, define $\widetilde A_t=A_t-p$, and let $r$ be the second eigenvalue of the two-state transition matrix. Equivalently,
\[
    \Pr(A_t=1\mid A_{t-1}=1)=p+(1-p)r,
    \qquad
    \Pr(A_t=1\mid A_{t-1}=0)=p(1-r).
\]
The transition probabilities are valid when
\[
    r_{\min}(p)\equiv \max\{-p/(1-p),-(1-p)/p\}\le r\le 1.
\]
The deterministic boundary values can lead to non-mixing or nearly singular designs. All inverse and large-sample covariance statements below therefore use feasible persistence sets contained in the interior, $\mathcal R_p\subset(r_{\min}(p),1)$. For the stationary chain,
\begin{equation}
    E[\widetilde A_t\widetilde A_{t-k}]=p(1-p)r^{|k|},
    \qquad
    Q_H(p,r)=p(1-p)(r^{|i-j|})_{i,j=0}^{H}.
    \label{eq:unbalanced_Q}
\end{equation}
Consequently, if $p$ is fixed by operational constraints, the variance-only optimal persistence for a contrast $c'g$ is the balanced-case rule while the information scale is multiplied by $p(1-p)$. The run-length formulas are asymmetric when $p\ne1/2$. Conditional on entering the active state, the active spell length $L_1$ is geometric on $\{1,2,\ldots\}$ with exit probability
\[
    \Pr(A_{t+1}=0\mid A_t=1)=(1-p)(1-r),
\]
so
\[
    E[L_1]=\{(1-p)(1-r)\}^{-1}.
\]
Similarly, the inactive spell length $L_0$ satisfies $E[L_0]=\{p(1-r)\}^{-1}$. When $p=1/2$, active and inactive spell lengths coincide and reduce to the balanced formula $2/(1-r)$.
\end{proposition}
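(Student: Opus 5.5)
The plan is to work directly with the two-state transition matrix
\[
P=\begin{pmatrix}1-p(1-r) & p(1-r)\\ (1-p)(1-r) & p+(1-p)r\end{pmatrix},
\]
where rows index $A_{t-1}=0,1$, and to read every claim off its spectral structure.

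\textbf{Validity region.} Write $P=(1-r)\mathbf 1\pi'+rI$ with $\pi=(1-p,p)'$. From this form, $\pi$ is stationary, since $\pi'P=(1-r)\pi'+r\pi'=\pi'$, and $r$ is the second eigenvalue. The latter follows because $P$ acts as $rI$ on vectors $v$ with $\pi'v=0$.

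Validity of $P$ requires each entry to lie in $[0,1]$:
\begin{itemize}
\item $p(1-r)\le 1$ and $(1-p)(1-r)\le 1$, i.e.\ $r\ge 1-1/p$ and $r\ge 1-1/(1-p)$. These are exactly $r\ge-(1-p)/p$ and $r\ge-p/(1-p)$.
\item The remaining entries are nonnegative precisely when $r\le1$.
\end{itemize}
This yields $r_{\min}(p)\le r\le 1$.

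I would then note that at $r=1$ the chain is reducible, so it is non-mixing, and $Q_H$ becomes rank one. At $r=r_{\min}(p)$ one state transitions deterministically. This motivates restricting to the interior set $\mathcal R_p$.

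\textbf{Autocovariances.} Since $P^k=(1-r^k)\mathbf 1\pi'+r^kI$, we get $\Pr(A_t=1\mid A_{t-k}=1)=p+(1-p)r^k$. Hence
\[
E[A_tA_{t-k}]=p\{p+(1-p)r^k\},
\]
and subtracting $p^2$ gives
\[
E[\widetilde A_t\widetilde A_{t-k}]=p(1-p)r^{|k|}.
\]
Stacking the lags in $X_t$ then gives $Q_H(p,r)=p(1-p)(r^{|i-j|})$ immediately, so \eqref{eq:unbalanced_Q} holds.

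\textbf{Variance-only rule.} In the homoskedastic benchmark, $V=\sigma^2Q_H^{-1}$. The unbalanced matrix equals the balanced matrix times the scalar $4p(1-p)$, so the risk $c'Vc$ is the balanced risk divided by $4p(1-p)$. Because this factor does not depend on $r$, the minimizer over any persistence set is unchanged, apart from the feasibility constraint $r\in\mathcal R_p$, which can clip it to an endpoint. This is the balanced-case rule with the information scale multiplied by $p(1-p)$.

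\textbf{Run lengths.} Given $A_t=1$, each subsequent period exits independently with probability $1-P_{11}=(1-p)(1-r)$ by the Markov property. The active spell length $L_1$ is therefore geometric on $\{1,2,\ldots\}$, so $E[L_1]=\{(1-p)(1-r)\}^{-1}$. Symmetrically, $P_{01}=p(1-r)$ gives $E[L_0]=\{p(1-r)\}^{-1}$. Setting $p=1/2$ recovers $2/(1-r)$ for both spells.

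\textbf{Main obstacle.} The computations are routine. The only delicate step is making precise what is claimed at the boundary values: singularity at $r=1$, and the degenerate but still stationary chain at $r_{\min}(p)$. I would handle this by stating these as reasons for the interior restriction rather than proving them as sharp theorems.
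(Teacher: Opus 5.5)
Your proposal is correct and follows essentially the paper's argument. The paper iterates the AR(1)-type conditional mean $E[\widetilde A_t\mid\widetilde A_{t-1}]=r\widetilde A_{t-1}$ to obtain $p(1-p)r^{|k|}$, which is the same fact as your identity $P^k=(1-r^k)\mathbf 1\pi'+r^kI$; your geometric spell-length argument is the one given in the proof of Corollary~\ref{cor:budgeted_sparse}; and your rank-one-plus-identity form of $P$ additionally verifies the validity region, the stationarity of $\pi$, and the identification of $r$ as the second eigenvalue, which the paper only asserts.
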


This unbalanced extension is important in demand-response settings, where high-price or event signals are often sparse. Sparse assignment shares do not change the Toeplitz persistence logic, but they reduce information by the factor $p(1-p)$ and make active-run-length constraints explicit.

\begin{corollary}[Budgeted sparse-signal HW-LP designs]\label{cor:budgeted_sparse}
Fix an active-share budget $p_0\in(0,1)$ and consider the unbalanced Markov class
\[
    \Pi(p_0,\mathcal R)=\{\text{stationary two-state Markov assignments}: \Pr(A_t=1)=p_0,r\in\mathcal R\},
\]
where $\mathcal R\subset(r_{\min}(p_0),1)$ is a compact interior feasible set.
In the homoskedastic finite-memory benchmark, the contrast risk is
\[
    \mathcal R_c(p_0,r)=\frac{\sigma^2}{p_0(1-p_0)}\,c'R_H(r)^{-1}c,
\]
where $R_H(r)=(r^{|i-j|})_{i,j=0}^{H}$. Hence, conditional on the active-share budget $p_0$, the variance-optimal persistence is the same closed-form rule as in Proposition~\ref{prop:optimal}. If an experiment imposes a maximum expected active spell length $L_{1,\max}$, where a spell is conditional on starting in the active state and includes the first active period, then the feasible set must also satisfy
\[
    r \le 1-\{(1-p_0)L_{1,\max}\}^{-1}.
\]
If it also imposes a maximum expected inactive spell length $L_{0,\max}$, the additional constraint is $r\le 1-\{p_0L_{0,\max}\}^{-1}$. Thus sparse demand-response designs can use the same HW-LP persistence logic after separating the event budget $p_0$ from the temporal persistence parameter $r$, while the feasible set records operational restrictions on active and inactive spell lengths.
\end{corollary}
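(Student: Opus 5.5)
The plan is to reduce everything to Proposition~\ref{prop:unbalanced} and then read off the spell-length constraints. The corollary has three parts: the risk formula, the invariance of the optimal persistence to the budget, and the feasibility bounds.

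\emph{Risk formula.} By Proposition~\ref{prop:unbalanced}, any chain in $\Pi(p_0,\mathcal R)$ has lagged-assignment information $Q_H(p_0,r)=p_0(1-p_0)R_H(r)$. Because $\mathcal R$ lies in the interior $(r_{\min}(p_0),1)$ and $|r|<1$ there, $R_H(r)$ is the correlation matrix of a stationary AR(1)-type sequence. It is positive definite, with determinant $(1-r^2)^H$. So Lemma~\ref{lem:target_invariance} applies, and the path-adjusted LP targets the same $g$ for every design in the class. In the homoskedastic finite-memory benchmark the asymptotic covariance is $V=\sigma^2Q_H^{-1}$, exactly as in the balanced case. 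Inverting the scalar multiple gives $\sigma^2\{p_0(1-p_0)\}^{-1}R_H(r)^{-1}$, and sandwiching with $c$ yields the stated $\mathcal R_c(p_0,r)$.

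\emph{Optimal persistence.} With $p_0$ fixed, the factor $\sigma^2/\{p_0(1-p_0)\}$ is a positive constant that does not depend on $r$. Hence $\arg\min_{r\in\mathcal R}\mathcal R_c(p_0,r)=\arg\min_{r\in\mathcal R}c'R_H(r)^{-1}c$. This is the criterion whose minimizer Proposition~\ref{prop:optimal} characterizes, since in the balanced case $Q_H=\tfrac14R_H$ and the criterion differs only by the constant $4\sigma^2$. Through the tridiagonal inverse this gives $(a+br^2-2dr)/(1-r^2)$, so the same interior root or endpoint rule applies. Compactness of $\mathcal R$ guarantees the minimum is attained.

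\emph{Spell-length constraints.} Proposition~\ref{prop:unbalanced} gives the exit probability from the active state as $(1-p_0)(1-r)$. An active spell, counted from its first active period, is therefore geometric with mean $E[L_1]=\{(1-p_0)(1-r)\}^{-1}$. For $r<1$ this mean is increasing in $r$. The requirement $E[L_1]\le L_{1,\max}$ is then equivalent to $1-r\ge\{(1-p_0)L_{1,\max}\}^{-1}$, which rearranges to the displayed bound. The inactive constraint follows the same way from $E[L_0]=\{p_0(1-r)\}^{-1}$. Intersecting these upper bounds with $\mathcal R$ produces a set that is still compact and still interior, provided it is nonempty. The argmin statement then carries over to the restricted set unchanged.

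The only step needing care is confirming that the risk formula in Proposition~\ref{prop:optimal} depends on the design only through $R_H(r)$, up to a scale factor. In other words, the balanced derivation never used $p=1/2$ beyond the factor $\tfrac14$. I expect this to be immediate from the form of $Q_H(p,r)$; the remaining steps are algebra.
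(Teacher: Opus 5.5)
Your proposal is correct and follows essentially the same route as the paper: Proposition~\ref{prop:unbalanced} gives $Q_H(p_0,r)=p_0(1-p_0)R_H(r)$, the scale factor $\{p_0(1-p_0)\}^{-1}$ leaves the minimizer of $c'R_H(r)^{-1}c$ unchanged, and the spell caps follow from the geometric exit probabilities $(1-p_0)(1-r)$ and $p_0(1-r)$. Your additional checks only make explicit what the paper's short proof leaves implicit: positive definiteness of $R_H(r)$, monotonicity of $E[L_1]$ in $r$, and that the capped set remains a compact interior interval whenever $\mathcal R$ is an interval, which is what the projection rule of Proposition~\ref{prop:optimal} requires.
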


\begin{corollary}[Active-share budgets and information scale]\label{cor:active_share_budget}
Suppose the active share can be chosen subject to $p\in[\underline p,\bar p]$ and $\bar p\le 1/2$, while the feasible persistence set is fixed. In the homoskedastic benchmark, the contrast variance is proportional to $[p(1-p)]^{-1}$. Hence the variance-efficient active share is the largest feasible share, $p=\bar p$. If the feasible set contains $1/2$, the variance-efficient share is $p=1/2$. Operational constraints therefore enter in two separable ways: the active-share budget controls the information scale, while the persistence parameter controls how that information is distributed across horizons.
\end{corollary}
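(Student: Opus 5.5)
The plan is to start from Proposition~\ref{prop:unbalanced} and Corollary~\ref{cor:budgeted_sparse}, which already factor the homoskedastic contrast risk as
\[
    \mathcal R_c(p,r)=\frac{\sigma^2}{p(1-p)}\,c'R_H(r)^{-1}c .
\]
Here $R_H(r)=(r^{|i-j|})_{i,j=0}^{H}$ depends only on $r$, and the scalar $p(1-p)$ depends only on the active share. With the feasible persistence set held fixed, the $r$-dependent factor $c'R_H(r)^{-1}c$ therefore does not move with $p$. For every fixed $r$, minimizing $\mathcal R_c(p,r)$ over $p$ reduces to maximizing $\phi(p)=p(1-p)$ over $[\underline p,\bar p]$. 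This holds uniformly in $r$, so it also holds after minimizing over $r$. The same argument applies to the matrix criterion $\operatorname{tr}\{W V\}=\sigma^2\operatorname{tr}\{W R_H(r)^{-1}\}/\{p(1-p)\}$.

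The next step is elementary calculus on $\phi$. Since $\phi'(p)=1-2p>0$ for $p<1/2$, $\phi$ is strictly increasing on $(0,1/2]$. Under $\bar p\le 1/2$, the maximizer on $[\underline p,\bar p]$ is therefore $\bar p$. For the second claim, if $1/2$ is feasible (the relevant case being $\bar p=1/2$, or more generally any feasible interval containing $1/2$), then $1/2$ is the global maximizer of $\phi$ on $(0,1)$ and hence the optimum. The separability conclusion then follows: the joint minimizer is $(p^\ast,r^\ast)$, with $p^\ast$ as above and $r^\ast$ the argmin of $c'R_H(r)^{-1}c$ from Proposition~\ref{prop:optimal}, so $r^\ast$ is independent of $p$.

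The main obstacle is a hidden coupling. The statement says the feasible persistence set is fixed, but Proposition~\ref{prop:unbalanced} requires $\mathcal R_p\subset(r_{\min}(p),1)$, and $r_{\min}(p)=-p/(1-p)$ for $p\le1/2$ varies with $p$. I would handle this by making the hypothesis explicit: $\mathcal R\subset(r_{\min}(\underline p),1)$. Because $r_{\min}$ is decreasing in $p$ on $(0,1/2]$, this single interior set is then valid for every $p\in[\underline p,\bar p]$, and the product structure of the feasible set $[\underline p,\bar p]\times\mathcal R$ is genuine.

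I would also note the scope of the separability claim. Spell-length constraints of the kind in Corollary~\ref{cor:budgeted_sparse} depend on $p$ and would break this separability, so they are excluded by the fixed-$\mathcal R$ assumption.
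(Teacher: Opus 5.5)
Your proposal is correct and follows the paper's proof. Both read off from Corollary~\ref{cor:budgeted_sparse} that, for fixed $r$ and $c$, the risk is a positive constant times $[p(1-p)]^{-1}$, and then use that $p(1-p)$ is increasing on $(0,1/2]$ and maximized at $p=1/2$. Your added hypothesis $\mathcal R\subset(r_{\min}(\underline p),1)$, justified because $r_{\min}(p)=-p/(1-p)$ is decreasing on $(0,1/2]$, makes explicit a feasibility condition that the paper's one-line proof leaves implicit in ``the feasible persistence set is fixed,'' and it is exactly what the product structure behind the separability claim requires.
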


For sparse demand-response signals, strong negative persistence may be infeasible because the chain cannot alternate frequently while keeping a very small active share. The lower bound $r_{\min}(p_0)$ is therefore not a technicality: it rules out assignment patterns that would violate the event budget. This matters most for sign-changing or oscillating targets, for which the unconstrained benchmark may prefer negative persistence. If the active share is $p_0=0.10$, for example, the Markov feasibility lower bound is only $r_{\min}=-0.111$, so strongly alternating designs are unavailable regardless of their variance ranking in the balanced benchmark. Positive persistence is usually feasible, but active- and inactive-spell caps may impose upper bounds below one.

Table~\ref{tab:sparse-budget} reports the information scale and active-run constraints for representative active-share budgets.

\begin{table}[htbp]
    \centering
    \caption{Sparse-signal information scale and active-run constraints.}
    \label{tab:sparse-budget}
    \begin{tabular}{rrrr}
\toprule
Active share $p_0$ & Info. scale $4p_0(1-p_0)$ & $r_{\max}$, 3h cap & $r_{\max}$, 6h cap \\
\midrule
0.05 & 0.190 & 0.825 & 0.912 \\
0.10 & 0.360 & 0.815 & 0.907 \\
0.14 & 0.482 & 0.806 & 0.903 \\
0.25 & 0.750 & 0.778 & 0.889 \\
0.50 & 1.000 & 0.667 & 0.833 \\
\bottomrule
\end{tabular}

    \begin{minipage}{0.92\linewidth}
    Notes: The information scale is $4p_0(1-p_0)$, normalized to one at balanced assignment $p_0=1/2$. The last two columns report the largest feasible persistence $r$ implied by $E[L_1]\le L_{1,\max}$ for $L_{1,\max}=6$ and $12$ active half-hours, where \(L_1\) counts the active spell including the first active period. These calculations separate the event budget from the temporal persistence choice.
    \end{minipage}
\end{table}

Population information scaling does not by itself describe the finite-path risk of a sparse experiment. Appendix Table~\ref{tab:sparse_finite_path} therefore simulates the cumulative-target design at the three-hour active-spell cap using the exact realized lagged-assignment Gram matrix. With $T=2{,}000$ and $p_0=0.05$, the median risk ratio is 1.034 and the p90 ratio is 1.817, while the p10 path contains only 12 active spells; at $T=17{,}520$ the p90 ratio falls to 1.197. None of the simulated Gram matrices is singular. Thus sparse short experiments should screen candidate randomization paths before launch even when the population Markov information matrix is nonsingular.

\begin{proposition}[Information matrix under balanced Markov switchback]\label{prop:info}
Suppose $A_t$ follows the stationary balanced Markov chain in \eqref{eq:balanced_markov}. Then
\[
    Q_H(r)\equiv E[X_tX_t']=\frac14\left(r^{|i-j|}\right)_{i,j=0}^{H}.
\]
For $|r|<1$,
\[
    Q_H(r)^{-1}=\frac{4}{1-r^2}
    \begin{pmatrix}
        1 & -r & 0 & \cdots & 0\\
        -r & 1+r^2 & -r & \cdots & 0\\
        0 & -r & 1+r^2 & \cdots & 0\\
        \vdots & \vdots & \vdots & \ddots & -r\\
        0 & 0 & 0 & -r & 1
    \end{pmatrix}.
\]
\end{proposition}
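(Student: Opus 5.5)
The plan has two parts: compute the autocovariances of the centered chain, then check directly that the stated tridiagonal matrix inverts the Toeplitz matrix.

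\emph{Autocovariances.} Stationarity with $\Pr(A_t=1)=1/2$ gives $\widetilde A_t\in\{-1/2,1/2\}$, so $E[\widetilde A_t^2]=1/4$. Since $\Pr(A_t=A_{t-1})=s$, we have $E[\widetilde A_t\mid A_{t-1}]=(s-\tfrac12)\cdot 2\widetilde A_{t-1}=r\widetilde A_{t-1}$. This is the balanced case of the eigenvalue statement in Proposition~\ref{prop:unbalanced}. Iterating the conditional expectation by the tower property yields $E[\widetilde A_t\mid A_{t-k}]=r^k\widetilde A_{t-k}$. Hence $E[\widetilde A_t\widetilde A_{t-k}]=r^{|k|}/4$, and stationarity makes the $(i,j)$ entry of $E[X_tX_t']$ depend only on $|i-j|$. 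This gives $Q_H(r)=\tfrac14 R_H(r)$ with $R_H(r)=(r^{|i-j|})$. Alternatively, one can simply invoke \eqref{eq:unbalanced_Q} with $p=1/2$.

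\emph{Inverse.} Let $M$ be the displayed tridiagonal matrix. It suffices to show $R_H(r)M=(1-r^2)I_{H+1}$, since then $Q_H^{-1}=4R_H^{-1}=\tfrac{4}{1-r^2}M$. I will compute column by column, using the geometric structure $r^{|i-j|}$.

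\begin{itemize}
\item \emph{Interior column} $j$ with $1\le j\le H-1$. Row $i$ of the product equals $-r\,r^{|i-j+1|}+(1+r^2)r^{|i-j|}-r\,r^{|i-j-1|}$. If $i<j$, the three powers are $r^{j-i-1},r^{j-i},r^{j-i+1}$, and the combination equals $r^{j-i}(-1+1+r^2-r^2)=0$. The case $i>j$ is symmetric. If $i=j$, we get $-r^2+1+r^2-r^2=1-r^2$.
\item \emph{First column} $j=0$. Row $i$ is $r^{i}-r\,r^{|i-1|}$. This is $0$ for $i\ge1$ and $1-r^2$ for $i=0$.
\item \emph{Last column} $j=H$. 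This case is symmetric to the first column.
\end{itemize}

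Since $|r|<1$, we have $1-r^2\neq0$, and $R_H$ is then invertible with the claimed inverse. The case $H=0$ is trivial: $Q_0=1/4$, and $M$ reduces to the scalar $1$ under the convention that the corner entries are $1$ and there are no interior entries. This gives $\tfrac{4}{1-r^2}$, which is wrong, so the displayed pattern implicitly requires $H\ge1$. I would state this restriction explicitly, noting that for $H=0$ the inverse is just $4$.

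The only real obstacle is the boundary columns and the sign bookkeeping in the absolute-value exponents. The rest is routine. An alternative, equally short route is to note that $R_H$ is the correlation matrix of a stationary Gaussian AR(1) process, whose precision matrix is known to be tridiagonal (the standard GMRF fact cited from \citet{RueHeld2005}). I would use the direct verification above so the argument stays self-contained.
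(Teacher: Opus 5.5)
Your proof is correct and follows the paper's route: you derive $E[\widetilde A_t\widetilde A_{t-k}]=r^{|k|}/4$ from the second eigenvalue (your identity $E[\widetilde A_t\mid A_{t-1}]=r\widetilde A_{t-1}$ is that eigenvector statement), and you then verify the tridiagonal inverse by direct multiplication, working it out column by column where the paper only asserts it. Your side remark that the displayed pattern implicitly needs $H\ge1$ is also right, since for $H=0$ one has $Q_0^{-1}=4$ rather than $4/(1-r^2)$.
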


\emph{Proof.} See Appendix~\ref{app:proofs}.

\begin{lemma}[Uniform mixing for compact Markov design classes]\label{lem:uniform_markov_mixing}
Fix $p\in[\underline p,\bar p]\subset(0,1)$ and let the feasible persistence set satisfy
\[
    \mathcal R_p\subset\{r:r_{\min}(p)+\eta\le r\le r_{\max}\},
    \qquad \eta>0,\quad r_{\max}<1.
\]
Define the modulus bound $\rho_{\max}=\sup_{p,r\in\mathcal R_p}|r|<1$. Then the stationary two-state Markov assignment process is geometrically alpha-mixing uniformly over the feasible class: there is a finite constant $C$ such that
\[
    \alpha_A(k;p,r)\le C \rho_{\max}^k,
    \qquad k\ge1.
\]
The same bound holds for the finite lag vector $X_t=(\widetilde A_t,\ldots,\widetilde A_{t-H})$ up to replacing $k$ by $k-H$. If the outcome innovation or residual process is independent of assignment conditional on calendar controls, or is jointly alpha-mixing with coefficients summable uniformly over the same class, then standard product and measurable-transform inequalities for mixing arrays imply that the score process $S_t(r)=X_t(r)\varepsilon_t(r)$ satisfies the uniform mixing part of Assumption~\ref{ass:hac}; see, for example, \citet[Ch.~14]{Davidson1994} and \citet[Ch.~3]{White2001}. This is separate from the nonsingularity of $Q_H(r)$; compact interior feasible sets give both uniform mixing and uniformly bounded inverse information matrices, but for different reasons.
\end{lemma}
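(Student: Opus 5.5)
The plan is to compute the $k$-step transition matrix of the two-state chain explicitly and bound the alpha-mixing coefficient by the maximal total-variation distance between the $k$-step conditional law and the stationary law. For the chain with stationary share $p$ and second eigenvalue $r$ in Proposition~\ref{prop:unbalanced}, the transition matrix is $P=(1-r)\mathbf 1\pi'+rI$ with $\pi=(1-p,p)'$. Since $\mathbf 1\pi'$ is idempotent and commutes with $I$, we get
\[
P^k=\mathbf 1\pi'+r^k(I-\mathbf 1\pi').
\]
Hence $\Pr(A_{t+k}=1\mid A_t=a)-p=r^k(a-p)$, and the total-variation distance from stationarity is at most $|r|^k\max\{p,1-p\}\le|r|^k$.

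The mixing bound then follows from the Markov property. For a stationary Markov chain, $\alpha_A(k)$ equals the alpha coefficient between $\sigma(A_t)$ and $\sigma(A_{t+k})$, because past and future are conditionally independent given the boundary states. That coefficient is bounded by
\[
\sum_a\pi_a\,\|P^k(a,\cdot)-\pi\|_{TV}\le|r|^k.
\]
Uniformity over the class comes from the compactness assumption. The restriction $r\ge r_{\min}(p)+\eta$ together with $r\le r_{\max}<1$ gives $|r|\le\rho_{\max}$. Here $\rho_{\max}<1$ needs a short check: $r_{\min}(p)\ge-1$, with equality only at $p=1/2$, so the lower end is bounded away from $-1$ by $\eta$. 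Taking $C=1$ then yields $\alpha_A(k;p,r)\le\rho_{\max}^k$.

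For the lag vector, $X_t$ is a measurable function of $(A_t,\ldots,A_{t-H})$. The $\sigma$-fields generated by $\{X_s:s\le t\}$ and $\{X_s:s\ge t+k\}$ are contained in those of $\{A_s:s\le t\}$ and $\{A_s:s\ge t+k-H\}$. This gives $\alpha_X(k)\le\alpha_A(k-H)\le\rho_{\max}^{k-H}$, which can be absorbed as $C=\rho_{\max}^{-H}$.

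For the score process $S_t=X_t\varepsilon_t$ there are two cases. If $\varepsilon$ is independent of $A$, then the mixing coefficient of the joint process is bounded by the sum $\alpha_X(k)+\alpha_\varepsilon(k)$, by the standard result for independent processes. In the jointly mixing case, $S_t$ is a measurable function of finitely many coordinates, so the same $\sigma$-field inclusion argument applies. Summability holds uniformly because the geometric bound does not depend on $(p,r)$. I would cite Davidson (Ch.~14) for both facts rather than reprove them.

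The main obstacle is ensuring that $\rho_{\max}<1$ genuinely holds on the whole class, especially near $p=1/2$ and negative $r$. This is where the $\eta$ margin is essential. I would state clearly that nonsingularity of $Q_H$ is a separate fact: it follows from the Toeplitz determinant $(p(1-p))^{H+1}(1-r^2)^H$ being bounded away from zero on the same compact set, which is why the lemma says the two properties hold for different reasons.
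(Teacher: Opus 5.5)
Your proposal is correct and follows the paper's proof essentially step for step. Both arguments use contraction of $P^k$ at rate $|r|^k$ on the centered subspace, bound alpha by the total-variation distance to stationarity, shift by $H$ for the lag vector, and cite the standard closure results for the score. Your explicit formula $P^k=\mathbf 1\pi'+r^k(I-\mathbf 1\pi')$ sharpens this slightly: it gives $C=1$ without needing $p$ bounded away from $0$ and $1$ for the mixing step, and your Markov reduction $\alpha_A(k)=\alpha(\sigma(A_t),\sigma(A_{t+k}))$ makes explicit a step the paper leaves implicit.
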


\begin{proposition}[Design-dependent covariance]\label{prop:covariance}
Let $X_t=(\widetilde A_t,\ldots,\widetilde A_{t-H})'$ and suppose that, after the stated residualization of deterministic controls, the benchmark outcome satisfies
\[
    Y_t=X_t'g+\varepsilon_t .
\]
Assume the balanced Markov assignment has $|r|<1$ and $T^{-1}\sum_{t=1}^T X_tX_t'\to_p Q_H(r)$. Suppose further that the errors are conditionally mean-zero, conditionally homoskedastic, and conditionally serially uncorrelated given the full assignment path:
\[
    E[\varepsilon_t\mid A_{-\infty:\infty}]=0,\qquad
    E[\varepsilon_t^2\mid A_{-\infty:\infty}]=\sigma^2,\qquad
    E[\varepsilon_t\varepsilon_s\mid A_{-\infty:\infty}]=0\quad(t\ne s).
\]
Then
\[
    \sqrt T(\hat g-g)\Rightarrow N(0,V_H(r)),\qquad V_H(r)=\sigma^2Q_H(r)^{-1}.
\]
More generally, if the score process has long-run covariance
\[
    \Omega_H(r)=\sum_{k=-\infty}^{\infty}E[X_t\varepsilon_t\varepsilon_{t-k}X_{t-k}'],
\]
then
\[
    \sqrt T(\hat g-g)\Rightarrow N\{0,Q_H(r)^{-1}\Omega_H(r)Q_H(r)^{-1}\}.
\]
\end{proposition}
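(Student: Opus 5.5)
The plan is the standard OLS decomposition $\sqrt T(\hat g-g)=\bigl(T^{-1}\sum_t X_tX_t'\bigr)^{-1}\,T^{-1/2}\sum_t X_t\varepsilon_t$, treated factor by factor. The denominator converges in probability to $Q_H(r)$ by hypothesis. By Proposition~\ref{prop:info}, $Q_H(r)$ is nonsingular for $|r|<1$ and has the explicit tridiagonal inverse, so the continuous mapping theorem gives $\bigl(T^{-1}\sum_t X_tX_t'\bigr)^{-1}\to_p Q_H(r)^{-1}$. By Slutsky's theorem, the whole argument then reduces to a central limit theorem for the score sum $T^{-1/2}\sum_t X_t\varepsilon_t$.

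\textbf{Homoskedastic case.} I would condition on the full assignment path $A_{-\infty:\infty}$. Given the path, the regressors $X_t$ are fixed, and the stated conditional moments give:
\[
    E[X_t\varepsilon_t\mid A]=0,\qquad
    \operatorname{Var}\Bigl(T^{-1/2}\sum_t X_t\varepsilon_t\,\Big|\,A\Bigr)=\sigma^2\,T^{-1}\sum_t X_tX_t'\to_p\sigma^2Q_H(r).
\]
The cross terms vanish because the errors are conditionally serially uncorrelated.

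\begin{itemize}
    \item Take any fixed $\lambda$. The scalar sum $T^{-1/2}\sum_t\lambda'X_t\varepsilon_t$ is a martingale difference array with respect to the filtration generated by $A$ and $\varepsilon_{1:t}$.
    \item Apply a martingale CLT. Its conditional-variance condition is exactly the convergence above. Its Lindeberg condition follows from boundedness of $X_t$ (entries in $\{\pm1/2\}$) together with a uniform $2+\delta$ moment or uniform integrability of $\varepsilon_t^2$.
    \item The Cramér--Wold device then gives $T^{-1/2}\sum_t X_t\varepsilon_t\Rightarrow N(0,\sigma^2Q_H(r))$.
    \item Sandwiching yields $Q^{-1}(\sigma^2Q)Q^{-1}=\sigma^2Q_H(r)^{-1}$.
\end{itemize}

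\textbf{General case.} Here I would invoke a CLT for stationary mixing processes applied to the score $S_t=X_t\varepsilon_t$. Lemma~\ref{lem:uniform_markov_mixing} supplies geometric alpha-mixing of $X_t$ for $|r|<1$. Combined with mixing of the innovations and a moment condition (the setting of Assumption~\ref{ass:hac}), $S_t$ is mixing with summable coefficients. The long-run variance $\Omega_H(r)=\sum_k E[S_tS_{t-k}']$ then converges absolutely, and the CLT of \citet[Ch.~24]{Davidson1994} or \citet[Ch.~5]{White2001} gives $T^{-1/2}\sum_t S_t\Rightarrow N(0,\Omega_H(r))$. Slutsky gives the sandwich form $Q^{-1}\Omega Q^{-1}$. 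This also shows the homoskedastic formula is the special case $\Omega=\sigma^2Q$.

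\textbf{Main obstacle.} The delicate step is the CLT for the score. The statement lists only second-moment conditions, so I must add a mild higher-moment or Lindeberg condition. In the homoskedastic case I must also verify that conditioning on the entire assignment path, which includes future assignments, is compatible with the martingale structure. The plan is to work conditionally on $A$ throughout, where the $X_t$ are constants. The difficulty is that conditional uncorrelatedness gives a martingale difference sequence only if I strengthen it to $E[\varepsilon_t\mid A,\varepsilon_{1:t-1}]=0$. Failing that, I would treat the conditional problem as a triangular array of uncorrelated terms and appeal to the general-case mixing CLT instead, noting that $\Omega$ collapses to $\sigma^2Q_H(r)$.
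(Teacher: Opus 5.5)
Your route is the paper's route: the least-squares expansion, Slutsky with $(T^{-1}\sum_t X_tX_t')^{-1}\to_p Q_H(r)^{-1}$, a CLT for the score with covariance $\sigma^2Q_H(r)$, and $\Omega_H(r)$ in place of $\sigma^2Q_H(r)$ under serial correlation. The paper does not derive the score CLT from the displayed moment conditions. It simply asserts the CLT ``in the homoskedastic martingale-difference benchmark,'' so you are right that extra conditions are needed.

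However, one step in your homoskedastic argument fails: the conditional-variance condition of the martingale CLT is not ``exactly the convergence above.'' Take $\mathcal F_t$ generated by $A_{-\infty:\infty}$ and $\varepsilon_{1:t}$. The condition then requires $T^{-1}\sum_t(\lambda'X_t)^2E[\varepsilon_t^2\mid A,\varepsilon_{1:t-1}]\to_p \sigma^2\lambda'Q_H(r)\lambda$, or, in the other standard form, $T^{-1}\sum_t(\lambda'X_t)^2\varepsilon_t^2\to_p \sigma^2\lambda'Q_H(r)\lambda$. By contrast, $E[\varepsilon_t^2\mid A]=\sigma^2$ only fixes the expectation of that sum given $A$, not its probability limit.

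This matters, because the displayed hypotheses, plus your MDS strengthening and any moment bound, still do not give a Gaussian limit. Let $\varepsilon_t=\sigma\zeta\xi_t$, where $\xi_t$ is iid Rademacher and $\zeta$ is bounded, independent of $A$ and $\{\xi_t\}$, with $E\zeta^2=1$ and $\zeta^2$ nondegenerate. Then:
\begin{itemize}
\item all three conditional moments given $A$ hold;
\item $E[\varepsilon_t\mid A,\varepsilon_{1:t-1}]=0$;
\item $\Omega_H(r)=\sigma^2Q_H(r)$.
\end{itemize}
Yet $T^{-1/2}\sum_tX_t\varepsilon_t\Rightarrow\sigma\zeta Z$ with $Z\sim N(0,Q_H(r))$ independent of $\zeta$. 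This is a scale mixture of normals, so $\sqrt T(\hat g-g)$ is not asymptotically $N(0,\sigma^2Q_H(r)^{-1})$.

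You therefore need one of two further conditions:
\begin{itemize}
\item homoskedasticity relative to the martingale filtration, $E[\varepsilon_t^2\mid A,\varepsilon_{1:t-1}]=\sigma^2$, which is what the paper's ``homoskedastic martingale-difference benchmark'' tacitly supplies; or
\item an ergodicity or mixing condition on $\varepsilon_t$ that makes the realized score variance converge to a constant.
\end{itemize}
The same example shows that your fallback cannot rest on conditional uncorrelatedness plus $\Omega_H(r)=\sigma^2Q_H(r)$. The mixing assumption on the innovations must be imposed explicitly; the example violates it, since $\varepsilon_t^2=\sigma^2\zeta^2$ for every $t$. With either addition, the rest of your argument goes through as written.
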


\subsection{Covariance estimation under serial dependence}
The clean covariance in Proposition~\ref{prop:covariance} is a benchmark. In applications with deterministic calendar controls or other pre-specified residualization variables $Z$, the relevant finite-sample information is not the raw Toeplitz matrix but the residualized matrix
\[
    Q_{H,Z}(r)=\operatorname*{plim}_{T\to\infty}T^{-1}X_H(r)'M_ZX_H(r),\qquad
    M_Z=I-Z(Z'Z)^{-1}Z'.
\]
The closed-form $Q_H(r)$ rule is exact for the no-control homoskedastic benchmark and remains useful as a persistence diagnostic. The calibrated implementation below instead evaluates the residualized design matrix, or its bootstrap/HAC analogue, for the realized or pilot-calibrated schedule. Thus calendar residualization is not treated as innocuous: it is precisely one reason we recommend switching from the analytical rule to the calibrated selector when finite-sample covariance differs materially from the Toeplitz benchmark.

The following proposition records the primitive conditions used for the calibrated implementation. It is stated for a compact persistence set and a standard Bartlett/Newey--West estimator; other kernels can be used under the usual moment and bandwidth restrictions familiar from data-dependent HAC procedures \citep{Andrews1991,AndrewsMonahan1992}.

\begin{assumption}[Uniform HAC primitives]\label{ass:hac}
For each $r\in\mathcal R$, where $\mathcal R$ is a fixed compact interior persistence set satisfying Lemma~\ref{lem:uniform_markov_mixing}, let $S_t(r)=X_t(r)\varepsilon_t(r)$ be the residualized path-adjusted LP score. The array $\{S_t(r):r\in\mathcal R\}$ is strictly stationary, has uniformly bounded $4+\delta$ moments, and is alpha-mixing with coefficients satisfying $\sum_{k\ge1} k^{1/2}\alpha(k)^{\delta/(4+\delta)}<\infty$ uniformly over $r\in\mathcal R$. The map $r\mapsto S_t(r)$ is Lipschitz in $L_2$ uniformly in $t$. The long-run covariance
\[
    \Omega_H(r)=\sum_{k=-\infty}^{\infty}E[S_t(r)S_{t-k}(r)']
\]
exists and $Q_H(r)$ is nonsingular uniformly on $\mathcal R$.
\end{assumption}

\begin{proposition}[Uniform HAC covariance for HW-LP risk]\label{prop:hac_uniform}
Let Assumption~\ref{ass:hac} hold. Let
\[
    \widehat\Omega_H(r)=\sum_{|k|\le b_T}\left(1-\frac{|k|}{b_T+1}\right)
    \widehat\Gamma_k(r),
    \qquad
    \widehat\Gamma_k(r)=T^{-1}\sum_{t=|k|+1}^{T}S_t(r)S_{t-|k|}(r)'
\]
be the Bartlett HAC estimator, and assume $b_T\to\infty$ and $b_T/T\to0$. Then
\[
    \sup_{r\in\mathcal R}\|\widehat\Omega_H(r)-\Omega_H(r)\|=o_p(1).
\]
If also $\sup_{r\in\mathcal R}\|\widehat Q_H(r)-Q_H(r)\|=o_p(1)$, then
\[
    \sup_{r\in\mathcal R}\left\|\widehat Q_H(r)^{-1}\widehat\Omega_H(r)\widehat Q_H(r)^{-1}
    -Q_H(r)^{-1}\Omega_H(r)Q_H(r)^{-1}\right\|=o_p(1).
\]
Consequently, $\sup_{r\in\mathcal R}|\operatorname{tr}\{W\widehat V(r)\}-\operatorname{tr}\{WV(r)\}|=o_p(1)$ for any fixed finite-dimensional $W$.
\end{proposition}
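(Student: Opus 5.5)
The proof has three parts: pointwise consistency of the Bartlett HAC estimator, an upgrade to uniformity over the compact set $\mathcal R$ by stochastic equicontinuity, and the sandwich and trace statements by continuous mapping. The score $S_t(r)$ is treated as observed. If it is built from estimated residuals $\hat\varepsilon_t(r)=\varepsilon_t(r)-X_t(r)'(\hat g(r)-g)$, the extra terms are of order $b_T\,\sup_r\|\hat g(r)-g\|$ times moment bounds. These vanish when $b_T/\sqrt T\to0$, and otherwise they are handled by the usual Andrews (1991) argument.

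\emph{Step 1 (pointwise).} Fix $r$ and decompose
\[
\widehat\Omega_H(r)-\Omega_H(r)=\sum_{|k|\le b_T}w_k\{\widehat\Gamma_k(r)-\Gamma_k(r)\}+\sum_{|k|\le b_T}(w_k-1)\Gamma_k(r)-\sum_{|k|>b_T}\Gamma_k(r),
\]
where $w_k=1-|k|/(b_T+1)$.

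For the two bias terms, Davydov's covariance inequality with $4+\delta$ moments gives $\|\Gamma_k\|\le C\alpha(k)^{\delta/(4+\delta)}$, and the summability condition in Assumption~\ref{ass:hac} makes this summable. Since $w_k\to1$ for each fixed $k$ and $|w_k-1|\le1$, dominated convergence sends both bias terms to zero. The bound is uniform in $r$ because the mixing coefficients and moment bounds are uniform.

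For the variance term, the standard fourth-moment calculation for products $S_tS_{t-k}'$ under mixing uses the $4+\delta$ moments and the $k^{1/2}$-weighted summability. It gives $E\|\widehat\Gamma_k(r)-\Gamma_k(r)\|^2\le C/T$ uniformly in $k$ and $r$. Hence the variance term is $O_p(b_T/\sqrt T)$ after Minkowski's inequality, uniformly in $r$. A sharper route, which treats the weighted sum as a single kernel estimator, gives $E\|\cdot\|^2=O(b_T/T)$. I would use this sharper bound so that only $b_T/T\to0$ is needed.

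\emph{Step 2 (uniformity).} Cover $\mathcal R$ by a finite $\epsilon$-grid $\{r_1,\dots,r_N\}$, which exists by compactness. On the grid, uniform convergence follows from Step 1 because $N$ is finite. Between grid points, bound $\|\widehat\Omega_H(r)-\widehat\Omega_H(r_j)\|$ and $\|\Omega_H(r)-\Omega_H(r_j)\|$ using the $L_2$-Lipschitz property of $r\mapsto S_t(r)$ together with Cauchy--Schwarz:
\[
E\|S_tS_{t-k}'(r)-S_tS_{t-k}'(r_j)\|\le C|r-r_j|.
\]
The main obstacle is that summing this bound over $|k|\le b_T$ naively yields $Cb_T\epsilon$, which diverges. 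I would try first to avoid this by letting the grid mesh $\epsilon_T\to0$ with $\epsilon_Tb_T\to0$ and $N_T=O(\epsilon_T^{-1})$. A union bound with the Step 1 moment bounds (Chebyshev, using $O(b_T/T)$) then requires $N_Tb_T/T\to0$. If that rate is too tight, I would fall back on a stronger Lipschitz moment via H\"older's inequality with $4+\delta$ moments and a chaining argument. For the population term $\Omega_H$, continuity in $r$ follows because the tail $\sum_{|k|>K}\|\Gamma_k\|$ is uniformly small and each $\Gamma_k(r)$ is Lipschitz.

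\emph{Step 3 (sandwich and trace).} Assumption~\ref{ass:hac} gives $\inf_r\lambda_{\min}(Q_H(r))>0$, and Lemma~\ref{lem:uniform_markov_mixing} and Proposition~\ref{prop:info} make the inverse explicit and bounded. Combined with $\sup_r\|\widehat Q_H-Q_H\|=o_p(1)$, this gives $\widehat Q_H(r)$ invertible for all $r$ with probability approaching one. The identity
\[
\widehat Q^{-1}-Q^{-1}=\widehat Q^{-1}(Q-\widehat Q)Q^{-1}
\]
then gives uniform convergence of the inverses, with uniformly bounded norms. Expanding $\widehat Q^{-1}\widehat\Omega\widehat Q^{-1}-Q^{-1}\Omega Q^{-1}$ telescopically into three terms, each is a product of a uniformly bounded factor and a uniformly $o_p(1)$ factor, using $\sup_r\|\Omega_H(r)\|<\infty$. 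Finally, $|\operatorname{tr}\{W(\widehat V-V)\}|\le\|W\|_F\|\widehat V-V\|_F$, which is $o_p(1)$ uniformly in $r$ since $W$ is fixed.
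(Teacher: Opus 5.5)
Your three steps match the paper's own justification: pointwise HAC consistency, an upgrade to uniformity by stochastic equicontinuity, then continuous mapping for the sandwich and trace. The paper gives no appendix proof, only the remark after the statement that pointwise consistency suffices on a finite grid and that the Lipschitz and mixing conditions supply equicontinuity on a continuum. Your Step~3 is fine. The genuine gap is Step~2: it does not deliver the uniform claim under the stated condition $b_T/T\to0$.

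\begin{itemize}
\item Your shrinking-grid scheme needs $\epsilon_Tb_T\to0$ and $N_Tb_T/T\to0$ with $N_T\gtrsim\epsilon_T^{-1}\gg b_T$. Together these force $b_T^2=o(T)$, whereas the proposition allows, for example, $b_T=T^{3/4}$.
\item The bound $E\|\widehat\Omega_H(r)-\widehat\Omega_H(r_j)\|\le Cb_T|r-r_j|$ holds for each fixed $r$ separately. An $L_2$-Lipschitz condition does not by itself control the supremum over the continuum between grid points.
\item The fallback does not reach the real obstruction. The problem is not the moment order; it is the triangle inequality over $2b_T+1$ lags. Chaining applied to increments of size $b_T|r-r'|$ still carries the factor $b_T$.
\end{itemize}

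The missing idea is a modulus of continuity for $\widehat\Omega_H$ that is free of $b_T$. The Bartlett kernel provides one because it is positive semidefinite. Put $S_t(r)=0$ for $t\notin\{1,\ldots,T\}$ and $\mathcal S_j(r)=\sum_{t=j}^{j+b_T}S_t(r)$. Reading $\widehat\Gamma_{-k}=\widehat\Gamma_k'$, one has exactly $\widehat\Omega_H(r)=\{T(b_T+1)\}^{-1}\sum_{j=1-b_T}^{T}\mathcal S_j(r)\mathcal S_j(r)'$. Write $D_t=S_t(r)-S_t(r')$ and let $\widehat\Omega_D$ be the Bartlett estimator built from $D_t$. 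Cauchy--Schwarz then gives
\[
\|\widehat\Omega_H(r)-\widehat\Omega_H(r')\|\le 2\{\operatorname{tr}\widehat\Omega_D\}^{1/2}\max_{u\in\{r,r'\}}\{\operatorname{tr}\widehat\Omega_H(u)\}^{1/2},
\]
\[
E\operatorname{tr}\widehat\Omega_D\le L^2|r-r'|^2+2\sum_{k\ge1}\min\bigl\{L^2|r-r'|^2,\,C\alpha(k)^{\delta/(4+\delta)}\bigr\}.
\]
The second bound uses Cauchy--Schwarz at short lags and Davydov's inequality at long lags; the latter needs the array to be jointly mixing across $r$.

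This bound does not depend on $T$ or $b_T$, and it vanishes as $|r-r'|\to0$. Since $\{\operatorname{tr}\widehat\Omega_D\}^{1/2}$ is a random pseudometric on $\mathcal R$, a fixed-mesh grid plus dyadic chaining yields stochastic equicontinuity. Second-moment chaining needs the bound to be $O(|r-r'|^{1+\eta})$. That holds, for example, if the Lipschitz condition is imposed in $L_{2+\delta/2}$ rather than $L_2$. Only pointwise consistency at finitely many points is then required, hence only $b_T/T\to0$.

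Two smaller points:
\begin{itemize}
\item Your ``sharper'' $O(b_T/T)$ variance bound is Andrews's (1991) result under absolutely summable fourth-order cumulants. Its standard mixing sufficient condition weights $\alpha(k)^{\delta/(4+\delta)}$ by $k^2$, not by the $k^{1/2}$ of Assumption~\ref{ass:hac}. Either verify the cumulant condition or cite a cumulant-free consistency theorem valid for $b_T=o(T)$, such as de Jong and Davidson (2000).
\item The displayed estimator must be read with $\widehat\Gamma_{-k}=\widehat\Gamma_k'$. Taken literally, it converges to $\Gamma_0+2\sum_{k\ge1}\Gamma_k$, which differs from $\Omega_H(r)$ whenever $\Gamma_k$ is asymmetric. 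Your decomposition already assumes the transposed convention.
\end{itemize}
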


\noindent The proposition provides the baseline form of the calibrated rule used below. For a finite grid of candidate persistences, the same conclusion requires only pointwise HAC consistency for each grid point. For a continuous persistence interval, the Lipschitz and mixing conditions give the stochastic equicontinuity needed to pass from pointwise consistency to uniform consistency. The uniform statement is for a fixed compact interior set; it is not a claim that the same bandwidth works as $r_{\max}$ drifts to one. If a sequence of feasible sets has $r_{\max,T} \uparrow 1$, the truncation lag must cover the growing dependence length, for example by requiring $b_T(1-r_{\max,T})\to\infty$ and $b_T/T\to0$. This also makes clear why the effective sample size is closer to $T_{\mathrm{eff}}(r)=T(1-r)/(1+r)$ than to $T$ when $r$ is large.

In implementation we therefore use a candidate-specific bandwidth diagnostic rather than a single iid-oriented bandwidth. The starting value is
\[
    b_{0T}(r)=\left\lceil 1.3 T^{1/3}\frac{1+|r|}{1-|r|}\right\rceil,
\]
and the finite-sample default rounds the capped value to the nearest integer according to
\[
    b_T(r)=\max\!\left\{1,\left\lfloor
    \min\{b_{0T}(r),\,T/4,\,0.25T_{\mathrm{eff}}(r)\}+\frac12
    \right\rfloor\right\},
\]
with sensitivity checks at one half and twice the capped value. If the cap binds, the report treats the design as a high-persistence finite-sample warning rather than as evidence that a very long HAC window is reliable. This formula is not an optimal-bandwidth theorem; it is a conservative operational rule that expands the lag window as assignment persistence approaches the boundary while keeping the truncation lag close to one quarter of the effective sample or below. The effective-sample-size column is likewise a scalar AR(1) variance-inflation diagnostic, not an exact sample size for every vector LP contrast; its role is to flag when persistence makes iid-oriented HAC choices unreliable. Figure~\ref{fig:bandwidth-diagnostics} reports the scale for the LCL sample length.

\begin{figure}[htbp]
\centering
\includegraphics[width=0.86\textwidth]{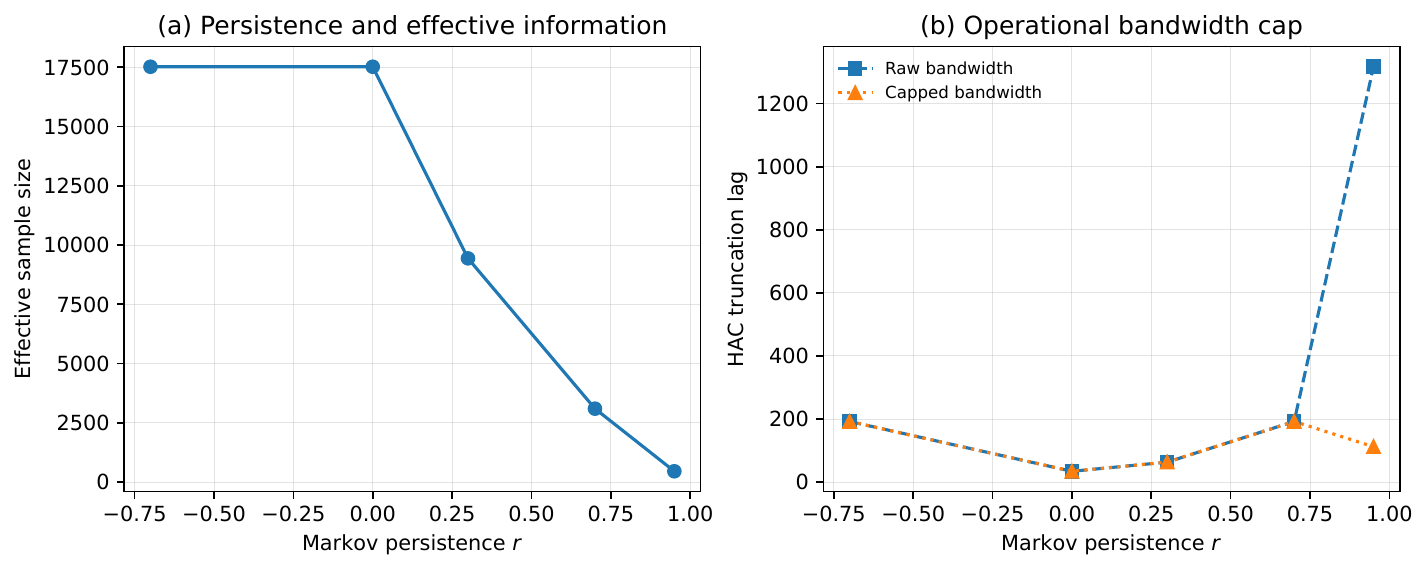}
\caption{Persistence, effective sample size, and candidate-specific HAC bandwidths for $T=17{,}520$ half-hours. The effective-sample-size curve uses the scalar AR(1) variance-inflation diagnostic $T(1-r)/(1+r)$ for positive persistence and is capped at $T$ for negative persistence. The capped bandwidth is the nearest integer to the minimum of $b_{0T}(r)$, $T/4$, and $0.25T_{\mathrm{eff}}(r)$, so cap binding is visible as a finite-sample warning rather than as a recommendation for an extremely long HAC window.}
\label{fig:bandwidth-diagnostics}
\end{figure}

The LCL calibrated implementation also examines sensitivity to the HAC truncation lag. Table~\ref{tab:bandwidth_sensitivity} recomputes the residualized aggregate cumulative-target selector under bandwidths equal to one quarter, one half, one, two, and four times the capped candidate-specific rule. The selected persistence remains at the upper fine-grid endpoint in all five cases, while the relative risk of iid assignment rises from 3.43 to 4.36 as the longer bandwidths emphasize persistent low-frequency residual variation. This small diagnostic supports the operational interpretation of the bandwidth rule: the cap is a finite-sample safeguard, and the substantive calibrated recommendation is not driven by a single truncation lag.

\begin{table}[t]
\centering
\caption{Bandwidth sensitivity of the calibrated cumulative-target selector.}
\label{tab:bandwidth_sensitivity}
\begin{tabular}{ccccc}
\toprule
Bandwidth multiplier & $b_T(0.90)$ & selected $r^\star$ & iid rel. risk & $r=0.90$ rel. risk \\
\midrule
0.25 & 58 & 0.90 & 3.428 & 1.000 \\
0.50 & 116 & 0.90 & 3.939 & 1.000 \\
1.00 & 231 & 0.90 & 4.174 & 1.000 \\
2.00 & 462 & 0.90 & 4.295 & 1.000 \\
4.00 & 924 & 0.90 & 4.356 & 1.000 \\
\bottomrule
\end{tabular}
\begin{minipage}{0.92\linewidth}
Notes: The table recomputes the residualized aggregate LCL cumulative-target selector under Bartlett bandwidths equal to one quarter, one half, one, two, and four times the capped candidate-specific default. The persistence grid is $[-0.90,0.90]$ in increments of 0.01. Relative risks are normalized by the oracle risk under the same bandwidth multiplier.
\end{minipage}
\end{table}

\subsection{Target-specific risk and persistence choice}
The design criterion for a contrast $c'g$ is $\mathcal R_c(r)=c'V_H(r)c$. In the homoskedastic benchmark,
\begin{equation}
    c'Q_H(r)^{-1}c
    =\frac{4}{1-r^2}\left[c_0^2+c_H^2+(1+r^2)\sum_{j=1}^{H-1}c_j^2-2r\sum_{j=1}^{H}c_{j-1}c_j\right].
    \label{eq:closed_form_risk}
\end{equation}
Define $a=\sum_{j=0}^{H}c_j^2$, $b=\sum_{j=1}^{H-1}c_j^2$, and $d=\sum_{j=1}^{H}c_{j-1}c_j$. Then
\begin{equation}
    \mathcal R_c(r)=4\sigma^2\frac{a+br^2-2dr}{1-r^2}.
    \label{eq:Rc}
\end{equation}

\begin{proposition}[Closed-form optimal persistence with feasible-set constraints]\label{prop:optimal}
Let $c\ne0$, let $m=a+b$, and consider $r_c^\star\in\arg\min_{r\in\mathcal R}\mathcal R_c(r)$ for a feasible interval $\mathcal R=[\underline r,r_{\max}]\subset(-1,1)$. If $d=0$, the unconstrained minimizer is $r=0$. If $0<|d|<m/2$, the unique stationary point in $(-1,1)$ is
\[
    r^{\mathrm{int}}_c=\frac{m-\sqrt{m^2-4d^2}}{2d},
\]
and the constrained optimum is the projection of $r^{\mathrm{int}}_c$ onto $\mathcal R$. If $|d|=m/2$, the risk has no finite minimizer in $(-1,1)$; its infimum is approached as $r\to \operatorname{sign}(d)$. In that boundary case, the constrained optimum is the feasible endpoint closest to $\operatorname{sign}(d)$. The case $|d|>m/2$ cannot occur for horizon-weight vectors because $|d|\le m/2$.
\end{proposition}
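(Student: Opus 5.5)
The plan is to work directly with the closed form \eqref{eq:Rc}, which follows from Proposition~\ref{prop:info} and is already recorded in \eqref{eq:closed_form_risk}. Drop the positive constant $4\sigma^2$ and study $f(r)=N(r)/(1-r^2)$ with $N(r)=a+br^2-2dr$ on $(-1,1)$. The whole argument rests on the sign of $f'$.

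\emph{Derivative.} By the quotient rule,
\[
f'(r)=\frac{(2br-2d)(1-r^2)+2r(a+br^2-2dr)}{(1-r^2)^2}.
\]
Expanding, the $r^3$ terms cancel and the numerator becomes $2mr-2d-2dr^2$. Hence
\[
f'(r)=-\frac{2q(r)}{(1-r^2)^2},\qquad q(r)=dr^2-mr+d .
\]
So $f$ is decreasing exactly where $q>0$ and increasing where $q<0$. Note that $m=a+b\ge a>0$ because $c\ne0$.

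\emph{Bound on $d$.} Before the case split I would prove $|d|\le m/2$. This is the elementary inequality
\[
2|d|\le\sum_{j=1}^H 2|c_{j-1}c_j|\le\sum_{j=1}^H (c_{j-1}^2+c_j^2)=\sum_{j=0}^{H-1}c_j^2+\sum_{j=1}^{H}c_j^2=a+b=m .
\]
It disposes of the final sentence of the proposition. If $H=0$, then $d=0$ trivially.

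\emph{Case analysis.}
\begin{enumerate}
\item[(i)] If $d=0$, then $f'(r)=2mr/(1-r^2)^2$. This is negative for $r<0$ and positive for $r>0$, so $r=0$ is the unique unconstrained minimizer.
\item[(ii)] If $0<|d|<m/2$, the quadratic $q$ has two real roots $(m\pm\sqrt{m^2-4d^2})/(2d)$. Their product is $1$ and they share a sign, so exactly one root lies in $(-1,1)$, namely the one with the minus sign; this is $r_c^{\mathrm{int}}$. Since $q(0)=d$, for $d>0$ we get $q>0$ on $(-1,r_c^{\mathrm{int}})$ and $q<0$ on $(r_c^{\mathrm{int}},1)$. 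For $d<0$ the signs flip, but the conclusion is the same after checking that $q(r)$ changes sign from positive to negative as $r$ passes the root; the case $d<0$ can also be reduced to $d>0$ via $c_j\mapsto(-1)^jc_j$, which maps $d\mapsto-d$ and $r\mapsto-r$. In either case $f$ is strictly decreasing and then strictly increasing, so $f$ is strictly unimodal on $(-1,1)$. The minimizer over an interval $[\underline r,r_{\max}]$ is therefore the projection of $r_c^{\mathrm{int}}$ onto that interval.
\item[(iii)] If $|d|=m/2$, then $q(r)=d(r-\operatorname{sign}(d))^2$. Thus $f$ is monotone on $(-1,1)$, decreasing when $d>0$ and increasing when $d<0$, with no stationary point in the interior. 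The infimum is approached as $r\to\operatorname{sign}(d)$, and on a compact feasible interval the minimizer is the endpoint nearest $\operatorname{sign}(d)$.
\end{enumerate}

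\emph{Main obstacle.} I expect the only delicate step to be the sign bookkeeping in case (ii): verifying that the chosen root lies in $(-1,1)$, and that $q$ changes sign in the direction that makes this root a minimum rather than a maximum, uniformly for both signs of $d$. I would handle it first by the $d\mapsto-d$, $r\mapsto-r$ symmetry, and then by the root-product identity together with $q(0)=d$ and $q(\pm1)=2d\mp m$, whose signs pin down where the root sits.
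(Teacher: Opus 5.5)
Your proposal is correct and follows the paper's route: differentiate the closed form to get the stationarity quadratic $dr^2-mr+d=0$, split on $d$, pick the root inside $(-1,1)$, and bound $|d|$ via $2|c_{j-1}c_j|\le c_{j-1}^2+c_j^2$. Your sign analysis of $f'=-2q/(1-r^2)^2$ adds something the paper omits. For $|d|<m/2$, $q(-1)=2d+m>0>2d-m=q(1)$ handles both signs of $d$ at once, and for $|d|=m/2$ you have $q=d(r-\operatorname{sign}(d))^2$. Together these give the unimodality and monotonicity that the paper leaves implicit when it asserts the projection and nearest-endpoint rules.
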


\begin{corollary}[Target dependence of optimal switchback]\label{cor:target_dependence}
The optimal Markov persistence depends on the horizon-weight vector. If $c=e_j$, so the target is a single horizon effect, then $d=0$ and $r_c^\star=0$. If $c=(1,\ldots,1)$, so the target is a cumulative effect, then $|d|=(a+b)/2$ and the variance-only criterion is minimized at the upper feasible endpoint. If $c$ is oscillating, then $d<0$ and alternating assignments can be optimal.
\end{corollary}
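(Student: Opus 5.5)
The plan is to read each claim directly off the closed-form risk \eqref{eq:Rc} and Proposition~\ref{prop:optimal}, by computing the three summary statistics $a=\sum_j c_j^2$, $b=\sum_{j=1}^{H-1}c_j^2$ and $d=\sum_j c_{j-1}c_j$ for each class of horizon-weight vector. Throughout I take $H\ge1$, since for $H=0$ every target is a scalar multiple of $e_0$. I also assume, as the statement implicitly does, that the feasible interval $\mathcal R$ contains $0$ in the isolated-horizon case and contains negative values in the oscillating case.

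\emph{Isolated horizons.} For $c=e_j$, every product $c_{\ell-1}c_\ell$ has at most one nonzero factor, so $d=0$. The $d=0$ branch of Proposition~\ref{prop:optimal} then gives the unconstrained minimizer $r=0$, which is feasible when $0\in\mathcal R$. To make this self-contained, I would also note that $\mathcal R_c(r)=4\sigma^2(a+br^2)/(1-r^2)$ with $a\ge b\ge 0$ is even in $r$. Its derivative in $r^2$ has numerator $b+a-b\cdot 0$, more precisely $(a+b)>0$ up to a positive factor, so the risk is minimized at $r=0$.

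\emph{Cumulative target.} For $c=\mathbf 1$ we get $a=H+1$, $b=H-1$, $m=2H$ and $d=H$, so $|d|=m/2$ and we are in the boundary branch of Proposition~\ref{prop:optimal}. Rather than relying only on the infimum statement, I would verify monotonicity directly. The numerator factors as
\[
H+1-2Hr+(H-1)r^2=(1-r)\{H+1-(H-1)r\},
\]
so the risk simplifies to
\[
\mathcal R_{\mathbf 1}(r)=4\sigma^2\,\frac{H+1-(H-1)r}{1+r}.
\]
The derivative has numerator $-(H-1)(1+r)-\{H+1-(H-1)r\}=-2H<0$. The risk is therefore strictly decreasing on $(-1,1)$, and the variance-only optimum over $\mathcal R=[\underline r,r_{\max}]$ is $r_{\max}$.

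\emph{Oscillating targets.} I would define ``oscillating'' as adjacent weights having predominantly opposite signs, so that $d<0$; the canonical example is $c_j=(-1)^j$, which gives $d=-H$. Two observations then deliver the claim. First, $\mathcal R_c(r)-\mathcal R_c(-r)=-16\sigma^2dr/(1-r^2)$, which is positive for $r>0$ when $d<0$. So every $r>0$ is dominated by $-r$, and any minimizer over a set symmetric about zero is nonpositive. Second, when $0<|d|<m/2$, the interior stationary point $r^{\mathrm{int}}_c=(m-\sqrt{m^2-4d^2})/(2d)$ has numerator positive and denominator negative, so it is strictly negative. When $|d|=m/2$, as for the alternating vector, the infimum is approached as $r\to-1$. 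In both cases, projecting onto $\mathcal R$ gives a strictly negative optimum whenever $\underline r<0$, which is exactly the sense in which alternating assignments can be optimal.

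The main obstacle is not computational but definitional. The corollary's phrase ``oscillating'' must be pinned down as $d<0$; the claim must be stated under the feasibility caveat, which is especially relevant given the sparse-design lower bound $r_{\min}(p_0)$; and the cumulative case must be handled by the explicit factorization, because the boundary branch of Proposition~\ref{prop:optimal} asserts only where the infimum lies, not monotonicity on the whole interval.
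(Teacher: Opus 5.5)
Your proposal is correct and follows the paper's route. The paper treats this corollary as an immediate consequence of the three branches ($d=0$, $0<|d|<m/2$, $|d|=m/2$) of Proposition~\ref{prop:optimal}, once $a,b,d$ are computed for $e_j$, $\mathbf 1$ and sign-alternating weights; your factorization and your reflection identity $\mathcal R_c(r)-\mathcal R_c(-r)=-16\sigma^2dr/(1-r^2)$ are valid self-contained checks. The monotonicity gap you worry about in the cumulative case is already closed in the paper's proof, because at $|d|=m/2$ the derivative numerator $(a+b)r-d(1+r^2)$ becomes $-d\{r-\operatorname{sign}(d)\}^2$, which has strictly constant sign on $(-1,1)$.
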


The boundary case in Proposition~\ref{prop:optimal} is substantive rather than pathological. For cumulative targets, the variance-only benchmark will always choose the largest admissible persistence, so the definition of $r_{\max}$ is part of the design. We therefore set $\mathcal R$ before evaluating the target rule by translating operational constraints into persistence bounds. For balanced designs, an upper bound on expected run length $L_{\max}$ gives $r_{\max}\le1-2/L_{\max}$. For unbalanced event designs, active and inactive caps give $r_{\max}\le1-\{(1-p_0)L_{1,\max}\}^{-1}$ and $r_{\max}\le1-\{p_0L_{0,\max}\}^{-1}$. A minimum expected number of active spells $N_{1,\min}$ over a planned length $T$ gives the additional diagnostic constraint
\[
    T p_0(1-p_0)(1-r)\ge N_{1,\min}.
\]
When a target lands on the boundary, we report the endpoint choice and treat sensitivity to a tighter endpoint as a design diagnostic, not as a new estimand.

\begin{lemma}[Active-spell lower bound]\label{lem:spell_lower_bound}
Let $N_{1T}=\sum_{t=2}^{T}\mathbf 1\{A_{t-1}=0,A_t=1\}$ be the number of active spells initiated in a stationary two-state Markov assignment with active share $p$ and persistence $r$. Then
\[
    T^{-1}N_{1T}\to p(1-p)(1-r)\quad a.s.
\]
For every fixed compact interior class with $p\in[\underline p,\bar p]$ and $r\le r_{\max}<1$, there is a constant $c>0$ such that $N_{1T}\ge cT$ with probability approaching one uniformly over the class. If $r=r_T$ approaches one, a transparent sufficient many-spell scaling is $T p(1-p)(1-r_T)\to\infty$; if this quantity is bounded, the design has only a bounded expected number of active episodes and normal approximations are not supported by a many-episode argument even when the information matrix is nonsingular.
\end{lemma}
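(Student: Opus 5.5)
The almost-sure limit is an ergodic theorem for the pair chain; the uniform lower bound is a concentration bound with constants controlled over the compact class. Define the bivariate process $Z_t=(A_{t-1},A_t)$. Since $A_t$ is a stationary two-state Markov chain, $Z_t$ is a stationary Markov chain on $\{0,1\}^2$. When $|r|<1$ it is irreducible and aperiodic on its support, so it is ergodic. The indicator $f(Z_t)=\mathbf 1\{A_{t-1}=0,A_t=1\}$ is bounded. Birkhoff's (or the Markov-chain strong law) then gives
\[
T^{-1}N_{1T}\to E f(Z_t)=\Pr(A_{t-1}=0)\Pr(A_t=1\mid A_{t-1}=0)\quad\text{a.s.}
\]
By Proposition~\ref{prop:unbalanced}, this limit equals $(1-p)\cdot p(1-r)$, which is the stated $p(1-p)(1-r)$. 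Dropping the first term or the index $t=1$ does not affect the limit.

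For the uniform lower bound, take $\mu(p,r)=p(1-p)(1-r)$. On the class $p\in[\underline p,\bar p]$, $r\le r_{\max}<1$ we have
\[
\mu\ge \mu_{\min}=\min\{\underline p(1-\underline p),\bar p(1-\bar p)\}(1-r_{\max})>0.
\]
Set $c=\mu_{\min}/2$. The event $N_{1T}<cT$ implies $T^{-1}N_{1T}-\mu<-\mu_{\min}/2$, so Chebyshev reduces the task to a uniform bound on $\operatorname{Var}(T^{-1}N_{1T})$. Write the summands as $f_t=f(Z_t)$. Then $\operatorname{Var}(N_{1T})\le T\sum_{k}|\operatorname{Cov}(f_t,f_{t+k})|$. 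Lemma~\ref{lem:uniform_markov_mixing} gives geometric mixing with rate $\rho_{\max}$ for $A_t$. This transfers to $Z_t$ with a one-step shift, so for the bounded $f_t$,
\[
|\operatorname{Cov}(f_t,f_{t+k})|\le 4\alpha(k-1)\le 4C\rho_{\max}^{k-1}.
\]
It follows that
\[
\operatorname{Var}(T^{-1}N_{1T})\le T^{-1}\Bigl(1+8C/(1-\rho_{\max})\Bigr)\to0
\]
uniformly over the class, and hence $\Pr(N_{1T}<cT)\to0$ uniformly.

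\textbf{Main obstacle.} The delicate step is uniformity near the negative boundary $r\to r_{\min}(p)$, where $\rho_{\max}$ could approach one. Lemma~\ref{lem:uniform_markov_mixing} handles this only if the class is interior on that side, i.e.\ with the $\eta$ margin. The lemma statement restricts only $r\le r_{\max}$, so I would either impose the interior condition explicitly or, as a fallback, compute the covariance directly. For the two-state chain,
\[
\operatorname{Cov}(f_t,f_{t+k})=O(|r|^{k-1})
\]
by the spectral decomposition of the transition matrix. If $|r|$ is close to one at the negative end, the transitions $0\to1$ are nearly deterministic. In that case $N_{1T}$ is close to $T\mu$ pathwise, so the deviation remains controlled, and I would verify this case separately.

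For the drifting case $r_T\uparrow1$ I would repeat the Chebyshev argument with the same computation. The mean is $T\mu_T$, and the covariances now decay at rate $r_T^{k}$, giving $\operatorname{Var}(N_{1T})=O\bigl(T\mu_T/(1-r_T)\bigr)$. This yields relative concentration $N_{1T}/(T\mu_T)\to_p1$ provided $T\mu_T(1-r_T)\to\infty$. A sharper route is renewal counting of geometric spell lengths, which gives variance of order $T\mu_T$ and hence concentration whenever $T\mu_T\to\infty$; that matches the stated sufficient scaling. When $T\mu_T$ is bounded, $E N_{1T}=(T-1)\mu_T$ is bounded. Then $N_{1T}$ is tight, and with positive probability there are only finitely many episodes, so no many-spell CLT can apply. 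This is a remark rather than a theorem.
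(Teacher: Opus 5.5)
Your argument is correct. For the almost-sure limit it is the same as the paper's proof, which also applies the ergodic theorem to $B_t=\mathbf 1\{A_{t-1}=0,A_t=1\}$ with mean $(1-p)p(1-r)$.

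The routes differ on the uniform clause. The paper only observes that $p(1-p)(1-r)$ is bounded below on the compact interior class, and then concludes that $N_{1T}\ge cT$ with probability approaching one for any smaller $c$. Read literally, that is a pointwise statement: almost-sure convergence under each fixed law does not control the rate across the class. Your Chebyshev bound, with variance constants governed by $\rho_{\max}$, supplies the missing uniform rate $\sup\Pr(N_{1T}<cT)=O(T^{-1})$. The cost is importing Lemma~\ref{lem:uniform_markov_mixing}. Its $\eta$-margin is licensed by the words ``compact interior class,'' so your negative-boundary worry is not an actual gap. For $r_T\to1$, the paper merely records that the expected count is of order $Tp(1-p)(1-r_T)$. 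Your concentration sketch therefore goes beyond what the paper proves.

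Your fallback of computing covariances directly is the cleanest way to settle both loose ends. For every $k\ge1$ (the value is $0$ at $k=1$),
\[
\Pr(A_{t+k-1}=0\mid A_t=1)=(1-p)(1-r^{k-1}).
\]
The Markov property then gives $E[f_tf_{t+k}]=\mu^2(1-r^{k-1})$, hence $\operatorname{Cov}(f_t,f_{t+k})=-\mu^2r^{k-1}$ for all $k\ge1$. Therefore
\[
\operatorname{Var}(N_{1T})=(T-1)\mu(1-\mu)-2\mu^2\sum_{k=1}^{T-2}(T-1-k)r^{k-1}\le(T-1)\mu .
\]
The lag sum is nonnegative for $r\ge0$. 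For $r<0$ it is an alternating sum of nonincreasing magnitudes starting with a positive term, so it is also nonnegative.

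This bound holds for every admissible $(p,r)$. Neither the mixing lemma nor the $\eta$-margin is needed, and no renewal argument is needed either. Chebyshev gives $N_{1T}/\{(T-1)\mu_T\}\to_p1$ whenever $T\mu_T\to\infty$, which is exactly the stated scaling.

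The same computation shows that your intermediate bound $\operatorname{Var}(N_{1T})=O(T\mu_T/(1-r_T))$ was not justified as written. The generic inequality $|\operatorname{Cov}(f_t,f_{t+k})|\le4\alpha(k-1)$ only yields $O(T/(1-r_T))$; the extra factor $\mu_T$ comes from the structure of the rare indicators.
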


\begin{corollary}[Linear reporting transforms]\label{cor:linear_transforms}
Let the object reported by the researcher be a lower-dimensional linear transform of the response curve,
\[
    \psi=L g,
\]
and let the reporting loss be represented by a positive semidefinite matrix $W_\psi$. Then the design risk for $\psi$ can be written in the HW-LP form
\[
    \operatorname{tr}\{W_\psi L V_H(r)L'\}
    =\operatorname{tr}\{L'W_\psi L V_H(r)\}.
\]
Thus cumulative effects, average effects over a delayed window, endpoint effects, policy-path summaries, and full-curve reporting can all be represented by choosing the effective horizon-weight matrix $W=L'W_\psi L$. The design problem therefore depends on the reported dynamic object, not on the particular parameterization used to describe it.
\end{corollary}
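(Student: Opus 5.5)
The plan is to reduce everything to the cyclic property of the trace and the definition of the design covariance of a linear transform. First, because $\psi=Lg$ is a fixed linear map of the response curve, the path-adjusted estimator $\hat\psi=L\hat g$ inherits its limiting distribution from Proposition~\ref{prop:covariance}. By the continuous mapping theorem (or simply linearity of the Gaussian limit), $\sqrt T(\hat\psi-\psi)\Rightarrow N(0,LV_H(r)L')$. This holds whether $V_H(r)$ is the homoskedastic benchmark $\sigma^2Q_H(r)^{-1}$ or the sandwich form $Q_H(r)^{-1}\Omega_H(r)Q_H(r)^{-1}$. The reporting risk for $\psi$ under loss $W_\psi$ is therefore $\operatorname{tr}\{W_\psi LV_H(r)L'\}$.

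Second, I would apply cyclicity of the trace, $\operatorname{tr}(ABC)=\operatorname{tr}(CAB)$, with $A=W_\psi$, $B=LV_H(r)$ and $C=L'$. The dimensions are conformable: $L$ is $q\times(H+1)$ and $W_\psi$ is $q\times q$. This gives $\operatorname{tr}\{L'W_\psi LV_H(r)\}$, which is exactly $\mathcal R_W(r)$ with $W=L'W_\psi L$.

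Third, I would check that $W$ is admissible in the HW-LP class, meaning symmetric positive semidefinite. Symmetry is immediate. For positive semidefiniteness, $x'L'W_\psi Lx=(Lx)'W_\psi(Lx)\ge0$ for every $x$, so $W$ is a legitimate HW-LP weight matrix.

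Fourth, I would instantiate the examples. Cumulative effects take $L=\mathbf 1'$; delayed-window averages take $L$ equal to normalized indicator weights on the window; endpoint effects take $L=e_H'$; full-curve reporting takes $L=I$. Scalar contrasts are the rank-one case $W=cc'$, which recovers $c'V_H(r)c$.

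For the parameterization-invariance claim, suppose the same object is written as $\psi=\tilde L\tilde g$ with $\tilde g=Mg$ for an invertible reparameterization $M$. Then $\tilde L=LM^{-1}$ and $V_{\tilde g}=MV_HM'$, so the risk is unchanged. I do not expect a genuine obstacle here. The only point needing care is stating that the estimator of $\psi$ is $L\hat g$, so that its covariance really is $LV_HL'$, rather than a separately estimated object whose covariance could differ.
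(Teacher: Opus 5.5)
Your proposal is correct and takes essentially the paper's route. The paper gives no separate proof because the identity is just cyclicity of the trace applied to the covariance $LV_H(r)L'$ of $L\hat g$, as in your second step. Your extra checks are sound elaborations the paper leaves implicit: that $L'W_\psi L$ is positive semidefinite, and that the risk is unchanged under invertible reparameterizations of $g$ when the least-squares estimator transforms accordingly.
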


The scalar contrast criterion is the leading design object, but the same covariance logic applies to matrix-weighted curve losses, transformed response objects, and finite candidate menus. Appendix~\ref{app:design_extensions} develops these extensions in detail. It gives the frequency-domain interpretation, operational thresholds, numerical-stability triggers, higher-order and non-Markov benchmarks, finite-sequence coordinate-exchange designs, target-menu robust rules, and links to classical information-matrix criteria. The central implementation distinction is unchanged: the Toeplitz formula explains the target-shape margin, whereas field deployment should compare the covariance of the estimator and reporting object over the feasible design menu.

The appendix also quantifies when the first-order Markov benchmark is adequate. Finite-sequence diagnostics compare population and realized information matrices, target-menu calculations report the cost of using a common persistence across several reporting objects, and spectral diagnostics identify high-persistence, large-horizon configurations that require regularization or a narrower target menu. These diagnostics qualify the benchmark rather than alter the core closed-form result.

\section{Omitted-carryover misspecification}\label{sec:robust}
A persistent design can be variance-efficient for a smooth cumulative target, but persistence also correlates nearby treatment lags. This creates a second risk: if the analyst estimates a model with maximum horizon $H$ but the true carryover extends beyond $H$, omitted treatment lags can bias the estimated dynamic response. In geometric-tail benchmarks, this omitted-carryover bias is governed by the buffer horizon $K$, not by the assignment persistence: the cumulative omitted-tail bias is essentially invariant to $r$ over the persistence range, while persistence reallocates variance across horizons and the buffer horizon controls bias. The buffer controls how much unmodeled tail can enter the reported contrast.

Let $X_t=(\widetilde A_t,\ldots,\widetilde A_{t-H})'$ collect the included lags and let $Z_t=(\widetilde A_{t-H-1},\ldots,\widetilde A_{t-H-K})'$ collect $K$ omitted lags. Suppose
\[
    Y_t=X_t'g+Z_t'\gamma+\varepsilon_t,
\]
but the analyst estimates the truncated model using only $X_t$.

\begin{proposition}[Omitted-carryover bias under Markov switchback]\label{prop:omitted_bias}
Let $Q_{XX}(r)=E[X_tX_t']$ and $Q_{XZ}(r)=E[X_tZ_t']$. The population coefficient from the truncated regression of $Y_t$ on $X_t$ is
\[
    \bar g_H(r)=g+Q_{XX}(r)^{-1}Q_{XZ}(r)\gamma.
\]
If $\|\gamma\|_2\le M$, the worst-case squared bias for the contrast $c'g$ is
\[
    M^2\left\|Q_{ZX}(r)Q_{XX}(r)^{-1}c\right\|_2^2.
\]
Under the Markov Toeplitz benchmark,
\[
    Q_{ZX}(r)Q_{XX}(r)^{-1}c=c_H(r,r^2,\ldots,r^K)',
\]
and therefore
\begin{equation}
    \sup_{\|\gamma\|_2\le M}\left(c'Q_{XX}(r)^{-1}Q_{XZ}(r)\gamma\right)^2
    =M^2 c_H^2\sum_{m=1}^{K}r^{2m}.
    \label{eq:closed_form_tail_bias}
\end{equation}
\end{proposition}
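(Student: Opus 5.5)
The proof has three parts: the population projection, a Cauchy--Schwarz step for the worst case, and an explicit Toeplitz computation.

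\emph{Population coefficient.} Since $Q_{XX}(r)$ is nonsingular for $|r|<1$ by Proposition~\ref{prop:info}, the population coefficient of the truncated regression is $Q_{XX}(r)^{-1}E[X_tY_t]$. Substituting $Y_t=X_t'g+Z_t'\gamma+\varepsilon_t$ and using $E[X_t\varepsilon_t]=0$ gives
\[
\bar g_H(r)=g+Q_{XX}(r)^{-1}Q_{XZ}(r)\gamma .
\]
The bias in the contrast is therefore $c'Q_{XX}^{-1}Q_{XZ}\gamma=v'\gamma$, where
\[
v=Q_{ZX}Q_{XX}^{-1}c,
\]
using symmetry of $Q_{XX}$ and $Q_{ZX}=Q_{XZ}'$.

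\emph{Worst case.} By Cauchy--Schwarz, $\sup_{\|\gamma\|_2\le M}(v'\gamma)^2=M^2\|v\|_2^2$. The supremum is attained at $\gamma=Mv/\|v\|$, or at any feasible $\gamma$ when $v=0$. This gives the general worst-case formula.

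\emph{Toeplitz computation.} This is the only substantive step. I would avoid multiplying out the tridiagonal inverse and argue through the regression structure instead. Under the balanced Markov chain, for each omitted lag $\widetilde A_{t-H-m}$ the Markov property gives
\[
E[\widetilde A_{t-H-m}\mid \widetilde A_t,\ldots,\widetilde A_{t-H}]=r^m\widetilde A_{t-H}.
\]
Conditioning on the whole block only involves the nearest included lag, and for a two-state chain the conditional mean is exactly linear: $E[\widetilde A_{s-m}\mid A_s]=r^m\widetilde A_s$ by reversibility. Hence the best linear predictor of $Z_t$ given $X_t$ is $Q_{ZX}Q_{XX}^{-1}X_t$. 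Its $m$-th row must equal $r^m e_H'$, which means $Q_{ZX}Q_{XX}^{-1}=(r,r^2,\ldots,r^K)'e_H'$. Multiplying by $c$ yields $v=c_H(r,\ldots,r^K)'$.

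As a purely algebraic check, I would also verify $Q_{ZX}=(r,\ldots,r^K)'\,e_H'Q_{XX}$ directly. The $(m,j)$ entry of $Q_{ZX}$ is $\tfrac14 r^{H+m-j}=r^m\cdot\tfrac14 r^{H-j}$, and $\tfrac14 r^{H-j}$ is exactly row $H$ of $Q_{XX}$. So the identity holds without invoking the Markov property.

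\emph{Conclusion and main obstacle.} Taking $\|v\|_2^2=c_H^2\sum_{m=1}^K r^{2m}$ and combining with the worst-case step gives \eqref{eq:closed_form_tail_bias}. The main obstacle is keeping the index conventions straight: the omitted lags are ordered from $H+1$ up to $H+K$, which must be matched against $c$'s last coordinate $c_H$. Once that is settled, the factorization of $Q_{ZX}$ makes the calculation one line.
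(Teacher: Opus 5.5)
Your proposal is correct and follows the paper's proof essentially step by step: the population projection using $E[X_t\varepsilon_t]=0$, Cauchy--Schwarz over $\|\gamma\|_2\le M$, and the fact that the linear projection of each omitted lag on $X_t$ is $r^m\widetilde A_{t-H}$, so that row $m$ of $Q_{ZX}Q_{XX}^{-1}$ is $r^m e_H'$. Your direct factorization $Q_{ZX}=(r,\ldots,r^K)'e_H'Q_{XX}$ is a useful addition, because it checks from the Toeplitz entries alone the projection property that the paper only asserts.
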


The formula has two interpretations. If the omitted tail is fixed, the expression in Proposition~\ref{prop:omitted_bias} is a population bias diagnostic: it measures how sensitive the reported contrast is to carryover beyond the estimated horizon. If the omitted tail is local, $\gamma=\delta/\sqrt T$ with $\|\delta\|_2\le M_0$, then it yields a dimensionally consistent scaled-MSE design criterion,
\begin{equation}
    \mathcal S^{\mathrm{local}}_{c}(r;M_0)=c'V_H(r)c+M_0^2\left\|Q_{ZX}(r)Q_{XX}(r)^{-1}c\right\|_2^2.
    \label{eq:local_bias_risk}
\end{equation}
Equivalently, $T\,E\{c'(\hat g-g)\}^2$ is approximated by \eqref{eq:local_bias_risk} under the local-bias sequence. The local sequence is used because it puts variance and misspecification sensitivity on the same asymptotic scale; a fixed nonlocal tail would dominate the variance term and turn the design problem into pure worst-case bias avoidance. In applications $M_0$ should be pre-specified. We use three defaults: $M_0=0$ for the variance-only benchmark, a pilot-calibrated value based on the Euclidean norm of the last observed or fitted response-tail block, and a sensitivity grid such as $M_0\in\{0,0.5,1,2\}$ times that pilot norm. If no pilot tail is credible, the recommended report is the frontier over $M_0$ rather than a single robust design. For a fixed nonlocal tail, we report the second term as a bias sensitivity rather than as a finite-sample MSE component. The closed-form tail expression shows which targets are most vulnerable: in the Markov benchmark, omitted carryover beyond the estimated horizon affects the reported contrast only through the last included reporting weight $c_H$.

\begin{corollary}[Buffered path-adjusted local projections]\label{cor:buffer}
Suppose the researcher wants to report a target over horizons $0,\ldots,K_0$ but estimates the path-adjusted regression through a longer horizon $H>K_0$, assigning zero reporting weight to buffer horizons $K_0+1,\ldots,H$. Then $c_H=0$, and omitted carryover beyond $H$ has zero asymptotic bias for the reported target in the Markov Toeplitz benchmark.
\end{corollary}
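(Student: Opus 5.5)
The corollary follows directly from Proposition~\ref{prop:omitted_bias}, so the plan is to specialize that result rather than redo the projection algebra. First I fix the setup. The estimated regression uses lags $0,\ldots,H$, and the reporting vector $c\in\mathbb R^{H+1}$ has $c_\ell$ arbitrary for $\ell\le K_0$ and $c_\ell=0$ for the buffer horizons $K_0<\ell\le H$. Since $H>K_0$, the last coordinate satisfies $c_H=0$. The omitted block $Z_t$ collects lags $H+1,\ldots,H+K$ for an arbitrary $K$, and $\gamma$ is arbitrary. Under the Markov Toeplitz benchmark, Proposition~\ref{prop:omitted_bias} gives the asymptotic bias of $c'\hat g$:
\[
c'\{\bar g_H(r)-g\}=c'Q_{XX}(r)^{-1}Q_{XZ}(r)\gamma=\{Q_{ZX}(r)Q_{XX}(r)^{-1}c\}'\gamma=c_H\sum_{m=1}^K r^m\gamma_m .
\]
Setting $c_H=0$ makes this zero for every $\gamma$, every $K$ and every $|r|<1$. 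It follows that the worst-case expression \eqref{eq:closed_form_tail_bias} and the bias term in \eqref{eq:local_bias_risk} also vanish.

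The one step worth checking directly is the identity $Q_{ZX}Q_{XX}^{-1}c=c_H(r,\ldots,r^K)'$, since everything depends on it. I would verify it by viewing it as a statement about best linear prediction of AR(1) lags.

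1. Each row of $Q_{ZX}Q_{XX}^{-1}$ contains the coefficients of the linear projection of $\widetilde A_{t-H-m}$ on $X_t$.
2. The autocovariances $r^{|k|}$ are those of an AR(1)-type process, and the backward-time process has the same Markov structure. So the projection of $\widetilde A_{t-H-m}$ on $(\widetilde A_t,\ldots,\widetilde A_{t-H})$ loads only on the nearest included lag $\widetilde A_{t-H}$, with coefficient $r^m$.
3. Alternatively, one can check it algebraically with the tridiagonal inverse from Proposition~\ref{prop:info}. The row of covariances $(r^{H+m},\ldots,r^m)$ is $r^m$ times the last row of $R_H(r)$, so multiplying by $R_H(r)^{-1}$ gives $r^m e_H'$. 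The scale factor $1/4$ cancels.

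Hence $Q_{ZX}Q_{XX}^{-1}c=(r^m c_H)_{m=1}^K$, and the conclusion follows.

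The only real obstacle is the interpretation of ``asymptotic bias.'' For the fixed-tail reading, the population coefficient $\bar g_H(r)$ is the probability limit of $\hat g$ by a law of large numbers, using the mixing from Lemma~\ref{lem:uniform_markov_mixing}. Its contrast equals $c'g$ exactly. For the local reading $\gamma=\delta/\sqrt T$, the bias term $\sqrt T\,c'Q_{XX}^{-1}Q_{XZ}\gamma$ is likewise identically zero. I would also note that the result is specific to the first-order Markov Toeplitz structure and depends on the screening property above. Under higher-order or residualized designs, the projection no longer loads on a single lag, so the claim would need the calibrated analogue.
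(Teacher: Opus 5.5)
Your proposal is correct and follows the paper's argument: set $c_H=0$ in Proposition~\ref{prop:omitted_bias}, so the projection vector $Q_{ZX}(r)Q_{XX}(r)^{-1}c=c_H(r,\ldots,r^K)'$ vanishes and the bias term is zero. Your direct check of that identity (each covariance row is $r^m$ times the last row of $R_H(r)$) and your observation that the bias is zero for every $\gamma$, not only in the worst case, are harmless strengthenings of the same one-line argument.
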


Corollary~\ref{cor:buffer} gives a practical robustness rule: estimate beyond the horizons that will be reported. Buffer horizons can absorb residual carryover while leaving the reported contrast unchanged. They are not free: increasing the estimated horizon enlarges the nuisance part of the regression and can raise finite-sample variance through the inverse design matrix. The default procedure is to start with $H=K_0+2$, compute the calibrated target variance and the local-bias term, and then increase $H$ until the bias reduction is small relative to the induced variance increase. Under these empirical defaults, the search stops when the target variance rises by more than 10 percent or when the condition number of the residualized assignment information matrix exceeds the pre-specified cap. When $M_0=0$, \eqref{eq:local_bias_risk} reduces to the variance-only HW-LP criterion. When $M_0>0$, high persistence can become less attractive because it increases the correlation between included and omitted treatment lags.

\begin{figure}[t]
    \centering
    \includegraphics[width=0.92\textwidth]{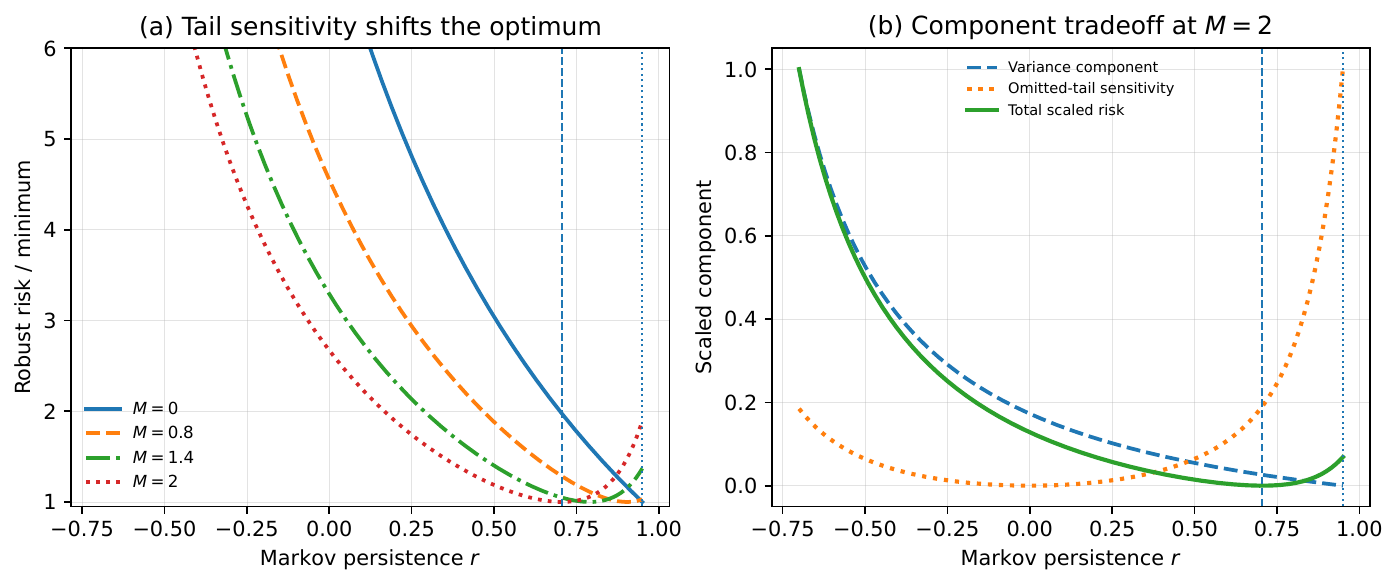}
    \caption{Unknown carryover creates a bias--variance design tradeoff. The left panel shows how increasing the omitted-tail radius shifts the cumulative-target optimum away from the variance-only boundary. The right panel decomposes the scaled components at the reported sensitivity radius; the vertical lines mark the variance-only and bias-augmented optima.}
    \label{fig:carryover-tradeoff}
\end{figure}

Figure~\ref{fig:carryover-tradeoff} reports the resulting bias--variance tradeoff.

\section{Estimation, design selection, and inference}\label{sec:simulations}
Implementation and inference require a reporting target, feasible design menu, and estimator class to be specified in advance. Pointwise inference, simultaneous full-curve reporting, post-selection inference, and estimator choice address distinct objects: fixed-target coverage, curve-level coverage, and design-selection uncertainty. The benchmark comparisons evaluate the design criterion without introducing a separate estimator. The benchmark DGP is
\[
    Y_t=\sum_{\ell=0}^{H}g_\ell A_{t-\ell}+u_t,
\]
with $H=8$. For scalar targets, the reported metric is the benchmark risk for $c'g$ under the relevant assignment persistence. This avoids comparing the same design under independent simulation draws: whenever the HW-LP rule selects $r=0$, it is displayed as the same iid design.

\subsection{Calibration and Monte Carlo design}
The analytical risk curves and benchmark tables are deterministic functions of the design covariance formulas. The Low Carbon London calibration uses 10,000 Monte Carlo replications per scenario-design pair and common random numbers across designs within each scenario. The replication count was fixed ex ante.

Table \ref{tab:mc-risk} reports relative target risk for five scalar response objects. The table focuses on scalar targets and excludes full-curve reporting, which is a matrix-weighted objective rather than a scalar contrast. Full-curve reporting is evaluated separately below with $W=I$.

\begin{table}[t]
    \centering
    \caption{Benchmark target risk by design.}
    \label{tab:mc-risk}
    \begin{tabular}{lllllll}
\toprule
Scenario & iid & alternating & moderate & persistent & HW-LP & HW-LP r \\
\midrule
Immediate & 1.000 & 1.961 & 1.099 & 1.961 & 1.000 & 0.000 \\
Cumulative & 7.468 & 38.447 & 4.404 & 2.001 & 1.000 & 0.950 \\
Delayed window & 1.342 & 6.375 & 1.017 & 1.464 & 1.000 & 0.382 \\
Rebound & 1.186 & 4.809 & 1.000 & 1.552 & 1.000 & 0.314 \\
Alternating & 7.468 & 2.001 & 13.158 & 38.447 & 1.000 & -0.950 \\
\bottomrule
\end{tabular}

    \begin{minipage}{0.92\linewidth}
    Notes: Entries are relative benchmark risks for the scalar target in each scenario. When the HW-LP rule selects $r=0$, it is identical to iid assignment and is displayed in the iid/HW-LP column.
    \end{minipage}
\end{table}

The benchmark results align with the theory. Isolated horizon effects favor balanced iid assignment. Smooth cumulative targets favor high persistence. Delayed and rebound-shaped targets favor intermediate or target-specific persistence. Oscillating contrasts favor negative persistence.

The closed-form Markov results use a homoskedastic benchmark to expose the assignment information matrix. In applications, serial dependence and calendar controls should be handled through the estimated covariance matrix of the residualized path-adjusted local-projection moments. The feasible implementation is therefore to compute $\hat V(r)$ from a pilot, HAC calculation, residual bootstrap, or calibrated baseline model and choose $r$ by minimizing $\operatorname{tr}\{W\hat V(r)\}$ over the feasible design class. The Low Carbon London evaluation below has two layers. The main scalar-target table reports the retained closed-form benchmark comparator and finite-candidate performance under realistic load dynamics. The delayed-reduction case, where the closed-form rule is not the retained best design, is treated as evidence that calibrated covariance selection is needed when residualized finite-sample covariance differs from the homoskedastic Toeplitz benchmark. The empirical evaluation retains the full response vector in each Monte Carlo draw, estimates design-specific covariance matrices, computes paired Monte Carlo standard errors, and evaluates observed tariff active indicators as fixed assignment schedules.

\subsection{Pointwise and near-unit inference}
The design criterion is an ex ante precision criterion, but it maps directly into standard inference for the reported target. Once the experiment is run under a pre-specified design, the contrast estimator satisfies
\[
    \sqrt T\{c'(\hat g-g)\}\Rightarrow N(0,c'V_\pi c),
\]
under the same score-covariance conditions used for the design calculation. A confidence interval for $c'g$ is therefore
\[
    c'\hat g \pm z_{1-\alpha/2}\sqrt{c'\widehat V_\pi c/T},
\]
with $\widehat V_\pi$ computed using the residualized HAC, bootstrap, or pilot-calibrated covariance appropriate for the realized assignment.

The variance interpretation should match the empirical design. The theoretical $V_\pi$ is a super-population long-run covariance for the stationary assignment-and-score process. A fixed-schedule diagnostic, by contrast, conditions on the realized assignment path and measures information in that finite design matrix. The Low Carbon London exercise is intermediate: the baseline load path is fixed from the observed data, while randomized assignments and injected responses are used to compare designs. In all cases the HW-LP rule optimizes the covariance of the estimator that will be used for inference, but the source of randomness should be stated in the pre-analysis plan and in the table notes. The design problem is thus separated from the inference problem: HW-LP chooses the assignment path to improve the covariance of estimators that are then reported with standard confidence intervals, bands, or design-based diagnostics.

Randomization-based inference is the primary finite-sample route in near-boundary regimes when the design produces few assignment episodes. Because the researcher controls the assignment law, a sharp-null Fisher test can be implemented by repeatedly drawing $A_{1:T}^{\ast}$ from the locked switchback design, recomputing the chosen statistic $\widehat{\theta}_c(A^{\ast},Y^{\mathrm{obs}})$, and comparing it with the observed statistic \citep{BojinovShephard2019,AtheyEcklesImbens2018}. Under the sharp null and the pre-specified exposure mapping, this test is exact conditional on the observed outcomes. Neymanian finite-population intervals or conservative randomization variances can also be reported when the target is an average over a fixed schedule of potential outcomes \citep{BojinovRambachanShephard2021,LinDing2025}. For $r>0.95$, a binding HAC cap, or an active-spell lower tail below the pre-specified threshold, the pre-analysis plan should make the Fisher or Neymanian randomization analysis primary and treat normal/HAC intervals as descriptive. Finite-sample covariate balance should also be logged for persistent designs. A simple diagnostic is the maximum standardized treated-control imbalance over pre-specified calendar covariates, such as hour-of-day, weekday, weekend, holiday, and season indicators; sequential rerandomization or rejection sampling can be added as a feasibility constraint when this imbalance exceeds a pre-specified threshold.

The inference route depends on the realized design diagnostics. For interior Markov designs with many active spells and a non-binding HAC cap, fixed-design HAC intervals or simultaneous bands are the default. When persistence exceeds 0.95, the HAC cap binds, or the active-spell lower tail falls below the pre-specified threshold, the primary finite-sample report should be a Fisher randomization test or a Neymanian randomization interval under the locked assignment law. Pilot-selected designs use fixed-design inference conditional on the selected design, with pilot uncertainty reported separately. If the same outcomes are used for tuning and inference, the protocol should use sample splitting, a simultaneous multiverse band, or selection-adjusted intervals. The pre-analysis plan should record the candidate grid, bandwidth rule, randomization draws, balance diagnostics, model menu, and family-wise error rule for the chosen route.

The fixed-$r$ theory does not provide local-to-unity inference guarantees. If the selected persistence is close to one, the effective number of independent assignment episodes can be small even when $T$ is large. This problem is analogous in spirit to persistent-regressor inference concerns in time-series LPs, although here the persistence comes from the assignment process rather than from the outcome dynamics \citep{Phillips1998,Mikusheva2007,PesaventoRossi2007}. Table~\ref{tab:near_unit_coverage} reports a finite-sample diagnostic under the homoskedastic Markov benchmark with $T=17{,}520$, $H=8$, cumulative target weights, and 2,000 Monte Carlo assignment draws. Under this ideal benchmark, conditional-assignment and Markov plug-in intervals remain close to nominal coverage, but the active-spell count falls sharply: at $r=0.99$ the fifth percentile is only 36 active spell starts. The operational implication is therefore not a new asymptotic correction, but a pre-analysis requirement: if $r>0.95$ or the capped bandwidth binds, the protocol should report a design-specific coverage, balance, and active-spell simulation before field deployment.

\begin{table}[t]
    \centering
    \caption{Near-unit-root finite-sample diagnostic for cumulative-target inference.}
    \label{tab:near_unit_coverage}
    \begin{tabular}{rrrrrrr}
\toprule
$r$ & $T_{\mathrm{eff}}$ & capped $b_T$ & mean spells & p5 spells & cond-SE cov. & Markov-SE cov. \\
\midrule
0.90 & 922 & 231 & 438.3 & 415 & 95.2\% & 95.1\% \\
0.95 & 449 & 112 & 219.3 & 203 & 94.9\% & 94.9\% \\
0.98 & 177 & 44 & 87.2 & 76 & 94.0\% & 94.0\% \\
0.99 & 88 & 22 & 43.7 & 36 & 94.5\% & 94.5\% \\
\bottomrule
\end{tabular}
    \begin{minipage}{0.92\linewidth}
    Notes: The table uses 2,000 Monte Carlo assignment draws under the balanced Markov benchmark, $T=17{,}520$, $H=8$, and cumulative target weights. The estimator draw follows the conditional Gaussian OLS law given the realized assignment. ``Cond-SE'' uses the realized assignment Gram matrix; ``Markov-SE'' uses the benchmark Markov information matrix. The table diagnoses many-spell behavior but does not establish local-to-unity validity.
    \end{minipage}
\end{table}

\subsection{Simultaneous and estimator-specific inference}
A scalar target can use the pointwise covariance formula above, but a reported response curve requires a simultaneous band. We use a Gaussian or residual-multiplier sup-$t$ critical value based on the estimated joint covariance, with Bonferroni as a conservative fallback when that covariance is unstable. Appendix~\ref{app:inference_details} gives the construction, finite-sample coverage diagnostics, power and minimum-detectable-effect calculations, sequential-monitoring cautions, and post-selection conditions.

The confirmatory design, target, horizon, covariance estimator, and tuning rules should be fixed using pilot information or a separated training sample. When the same confirmatory outcomes are used to select among designs or tuning parameters, valid reporting requires sample splitting, simultaneous inference over the pre-specified menu, or an explicit post-selection procedure. The appendix states these requirements and separates them from the estimator-efficiency comparison that follows.

The HW-LP criterion optimizes the design for a specified estimator and reporting object. The closed-form formulas use the path-adjusted least-squares LP because it is transparent, design-invariant under Assumption~\ref{ass:po_identification}, and robust to misspecifying a full dynamic outcome model. They are not a semiparametric Cram\'{e}r--Rao bound over all possible estimators.

If the researcher is willing to specify more structure, the same design logic applies with a different estimator-specific covariance matrix. With a known serial-error covariance $\Sigma$, for example, the feasible GLS distributed-lag estimator has covariance proportional to $(X'\Sigma^{-1}X)^{-1}$ rather than the OLS/HAC covariance $Q^{-1}\Omega Q^{-1}$. With a correctly specified VAR or transfer-function model, system estimation can dominate unrestricted LPs in variance, while unrestricted LPs can have lower specification bias when the model is wrong \citep{PlagborgMollerWolf2021,LiPlagborgMollerWolf2024}. Smooth, Bayesian, ridge, or other regularized LP estimators similarly replace $\widehat V(r)$ by the covariance or posterior risk of the regularized estimator \citep{BarnichonBrownlees2019,FerreiraMirandaAgrippinoRicco2025}. When a ridge-regularized estimator is used to stabilize an ill-conditioned design, the relevant object is the regularized target MSE, not variance alone: $\operatorname{tr}\{W\operatorname{MSE}(r,\lambda)\}=\operatorname{tr}\{WV(r,\lambda)\}+b(r,\lambda)'Wb(r,\lambda)$, where $b(r,\lambda)$ is the regularization bias for the reported object. The HW-LP selector should then minimize this pre-specified $(r,\lambda)$ MSE grid; the closed-form variance rule is the $\lambda=0$ benchmark. Thus a better estimator under iid assignment could dominate path-adjusted OLS under a persistent assignment for some data-generating processes. The contribution here is narrower: once the reporting object and estimator class are specified, HW-LP gives the assignment rule that minimizes that estimator's target risk within the candidate design class.

Table~\ref{tab:gls_ols_optimality} gives a compact numerical illustration. With known AR(1) residual correlation $\rho=0.5$, the OLS and GLS criteria agree for isolated and smooth cumulative targets, but they differ for the delayed-window contrast: the OLS benchmark selects $r=0.38$, while the feasible-GLS criterion selects $r=0.55$. Choosing the OLS persistence for the GLS estimator raises the GLS target risk by about nine percent. The stylized example makes the estimator-specific point concrete: persistence is not a property of the assignment class alone; it is a property of the reporting object and the planned estimator.

\begin{table}[t]
\centering
\caption{OLS and feasible-GLS persistence rules under AR(1) residual correlation.}
\label{tab:gls_ols_optimality}
\begin{tabular}{lccccc}
\toprule
Target & OLS $r^\star$ & GLS $r^\star$ & OLS loss at GLS $r$ & GLS loss at OLS $r$ & GLS iid rel. risk \\
\midrule
Immediate & 0.00 & 0.00 & 1.000 & 1.000 & 1.000 \\
Cumulative & 0.90 & 0.90 & 1.000 & 1.000 & 5.326 \\
Delayed & 0.38 & 0.55 & 1.095 & 1.090 & 1.831 \\
Rebound & 0.31 & 0.32 & 1.000 & 1.000 & 1.195 \\
\bottomrule
\end{tabular}
\begin{minipage}{0.92\linewidth}
Notes: The illustration uses $H=8$, the fine grid $[-0.90,0.90]$, and a stationary AR(1) error correlation $\rho=0.5$ with known covariance. OLS uses the homoskedastic Markov information matrix. GLS uses the large-sample precision matrix induced by the AR(1) inverse covariance. Ratios are normalized by the target-specific optimum for the same estimator.
\end{minipage}
\end{table}

Operationally, this means the design stage should not hide estimator choice. A protocol that plans to report GLS, VAR-based IRFs, smooth LPs, or Bayesian LPs should compute $\operatorname{tr}\{W\widehat V_{\mathcal E}(d)\}$ for that estimator $\mathcal E$, not reuse the homoskedastic OLS closed form. The OLS formulas remain useful as a benchmark because they isolate the assignment-spectrum channel and provide a diagnostic for when persistence itself is the dominant design margin.

The same estimator-specific interpretation covers GMM. A stacked path-adjusted LP can be written as a moment problem with moment vector $m_t(\theta)=(Z_t u_{t,h}(\theta))_{h=0}^H$, where $Z_t$ contains the residualized current and lagged assignment terms and any controls. Ordinary path-adjusted least squares corresponds to a particular identity or equation-by-equation weighting of these moments. Optimally weighted GMM, continuously updated GMM, or empirical likelihood would instead use the long-run covariance of the stacked moments, and the resulting design risk is the GMM sandwich risk rather than the OLS HW-LP risk \citep{Hansen1982,HansenHeatonYaron1996,Owen1988}. These estimators can dominate identity-weighted LP when the moment covariance is well specified, and their overidentifying-restriction diagnostics have design-dependent power. In such an analysis, the switchback persistence should be chosen by minimizing the planned GMM risk; the closed-form OLS persistence rule is the benchmark for identity-weighted path-adjusted LP, not an efficiency bound for all moment estimators.

A Bayesian decision-theoretic design is another valid specialization of the same principle. If the analyst has credible prior information about smoothness, sign restrictions, monotone decay, or cross-platform shrinkage, the design objective should be the prior or posterior-predictive expected loss, for example
\[
    d_\pi^\star\in\arg\min_{d\in\mathcal D} E_\pi\{L(\widehat{\theta}_d,\theta)\},
\]
where $L$ may be squared error for a reported contrast, a threshold loss for a regulatory decision, or a utility loss for a platform decision. Bayesian $D$- or $A$-optimal design criteria and empirical-Bayes switchback rules are natural counterparts \citep{ChalonerVerdinelli1995,AtkinsonDonevTobias2007,XiongChinTaylor2024}. Such priors can change the optimal persistence because they shrink or constrain the response curve across horizons. The main criterion remains frequentist and reporting-object based, but the implementation rule is modular: replace $\widehat V(d)$ or the loss function by the posterior risk implied by the planned Bayesian estimator.

Robust LP estimators fit into the same modular template. Half-hourly load, revenue, and clickstream outcomes can have heavy tails, outages, holidays, or equipment failures, so the $4+\delta$ moment condition in Assumption~\ref{ass:hac} is an assumption to be checked rather than a free empirical fact. If the pre-analysis plan specifies median, quantile, Huber, or trimmed LP estimation, the design criterion should use the corresponding influence-function or sandwich covariance instead of the OLS covariance \citep{Huber1964,KoenkerBassett1978,HampelEtAl1986}. Likewise, pilot preprocessing rules---outage flags, holiday flags, winsorization thresholds, or trimming fractions---should be fixed before the confirmatory experiment and repeated in sensitivity checks. The closed-form OLS rule then becomes a benchmark for clean finite-variance designs, while robust or distributionally conservative designs replace $\widehat V(d)$ by the risk of the robust estimator or by a worst-case covariance over a declared ambiguity set.

We study average response targets. If the analyst instead reports a quantile local projection at a fixed quantile $\tau$ and the standard sparsity condition holds, the leading sandwich variance has the form $\tau(1-\tau)f_\varepsilon\{F_\varepsilon^{-1}(\tau)\}^{-2} Q_H(r)^{-1}$ under the same lagged-assignment design. Thus, for a fixed horizon contrast $c$, this homoskedastic quantile-LP benchmark selects the same persistence as the OLS HW-LP rule; the scalar density factor changes scale, not the optimal $r$. This equivalence should not be overread. Peak-load, tail-risk, distributional-welfare, or conditional quantile treatment-effect reporting objects generally change the target vector, aggregation level, subgroup weights, or covariance estimator, and should be designed by replacing $V(d)$ with the covariance or loss for that distributional estimand rather than by reusing the mean-LP rule mechanically.

Applied researchers often report the full response curve rather than a single scalar contrast. This is a different design objective from the scalar cumulative target. If equal marginal precision across horizons is desired, the benchmark weight is $W=I$ and the criterion is $\operatorname{tr}\{V(r)\}$. Under the homoskedastic Markov benchmark this criterion favors iid assignment, not cumulative-target persistence. The numerical full-curve comparison uses the same candidate design menu, while Section~\ref{sec:simulations} gives the inference rule and Appendix~\ref{app:inference_details} gives the simultaneous-band construction.

\section{Empirical calibration: Low Carbon London}\label{sec:lcl}

We evaluate the design rules in a high-frequency residential electricity setting calibrated to the Low Carbon London smart-meter data \citep{StrbacEtAl2024}. The exercise is semi-synthetic: it is not an estimate of the causal effect of the observed dynamic time-of-use tariff. Instead, the observed data provide realistic half-hourly residential-load dynamics, and known dynamic response paths are injected into those dynamics so that different assignment designs can be compared against known targets.

The empirical analysis distinguishes three design questions. A balanced-persistence comparison isolates the persistence margin in the same finite design class as the analytical benchmark. Sparse active-share designs show how target-specific persistence changes under an event budget. Historical tariff schedules are evaluated as fixed design paths using the same information criterion. The distinction separates the persistence choice for a given target, its modification under a sparse budget, and the informativeness of a previously implemented schedule for that target. Only the first two blocks are directly comparable in semi-synthetic target-MSE units; the historical-schedule block is a fixed-schedule information diagnostic rather than a replay-based loss comparison.

\subsection{Data and calibration design}
The calibrated evaluation uses the full 2013 trial year and the normal-tariff households as the baseline source. The outcome is the group-average half-hourly electricity consumption series in kWh per half-hour. The aggregate panel contains 17,520 half-hour observations. The normal-tariff group has roughly 4,000--4,400 available household readings for most half-hours, and the dynamic-tariff group has roughly 1,000--1,100 available readings for most half-hours. Aggregation is by timestamp and tariff group; available readings are averaged within each group and timestamp. The calibrated normal-tariff baseline series has mean 0.214 kWh per half-hour and standard deviation 0.076.

The pre-specified Monte Carlo design uses 10,000 replications for each scenario-design pair and common random numbers across designs within each scenario. The horizon is $H=8$, corresponding to four hours at half-hour frequency. The four response scenarios are immediate reduction, persistent reduction, delayed reduction, and rebound. Scenario names describe the injected response path used for evaluation; design selection uses only the reporting weights $c$, the feasible persistence set, and the assignment covariance, not the injected magnitudes of $g$.

The evaluation fixes 10,000 replications per scenario-design pair, common random numbers across designs within each scenario, the horizon $H=8$, the balanced candidate set $r\in\{-0.7,0,0.3,0.7\}$ plus target-specific HW-LP candidates, sparse active shares $p_0\in\{0.05,0.10,0.14\}$, and the observed tariff indicators used only for fixed-schedule diagnostics. The generic-MSE selector summarized in Table~\ref{tab:selector-reconciliation} uses the same retained finite design menu as the target-specific selector and changes only the objective from $c'\widehat V_d c$ to $E\|\hat g-g\|^2$. The scenario menu was fixed before the full run to span four qualitatively distinct response shapes--front-loaded, smooth cumulative, delayed, and sign-changing rebound--rather than being selected from observed treatment effects. Common random numbers are implemented by using the same innovation stream within a scenario and changing only the design transformation; this is why paired comparisons have much smaller Monte Carlo noise than unpaired comparisons. The replication count was fixed ex ante.

\subsection{Balanced and calibrated design selection}
The primary comparison keeps the balanced Markov benchmark to isolate the persistence margin. We compare iid assignment, negatively persistent assignment with $r=-0.7$, a moderate Markov switchback with $r=0.3$, a persistent Markov switchback with $r=0.7$, and the application-feasible closed-form HW-LP rule. The main metric is target MSE normalized so that the best design in each scenario has value one. Because common random numbers are used, the table also reports paired Monte Carlo standard errors for the HW-LP design relative to the best design and a 95 percent tie flag.

\begin{table}[H]
\centering
\caption{Relative target MSE in the Low Carbon London empirical calibration.}
\label{tab:lcl-key-results}
\begin{tabular}{lrrrrrrrl}
\toprule
Scenario & iid & alt. & mod. & pers. & HW-LP & HW-LP $r$ & Paired SE & Tie \\
\midrule
Immediate & 1.000 & 1.953 & 1.088 & 1.941 & 1.000 & 0.000 & 0.000 & yes \\
Persistent & 2.646 & 12.38 & 1.657 & 1.000 & 1.000 & 0.700 & 0.000 & yes \\
Delayed & 4.784 & 27.17 & 2.722 & 1.000 & 2.224 & 0.382 & 3.1e-08 & no \\
Rebound & 1.084 & 5.397 & 1.000 & 2.311 & 1.025 & 0.314 & 7.0e-09 & yes \\
\bottomrule
\end{tabular}
\begin{minipage}{0.93\linewidth}
Notes: Entries are target MSE relative to the best-performing design within each scenario. The evaluation uses the full 2013 normal-tariff aggregate baseline dynamics and 10,000 replications for each scenario-design pair. Common random numbers are used across designs within each scenario. The paired standard error is for the loss difference between the HW-LP design and the best design in that scenario. The tie column reports whether this paired difference is within two paired standard errors.
\end{minipage}
\end{table}

Table~\ref{tab:lcl-key-results} evaluates each pre-specified design under known injected response paths; selection results are reported separately below. The delayed-reduction scenario is reported as a stress test for the closed-form rule. The homoskedastic Markov benchmark selects $r=0.382$, while the retained calibrated loss is minimized by the persistent design $r=0.7$. This is a substantive negative finding for the closed-form benchmark as a field rule: the analytical formula is useful for understanding the persistence mechanism, but it should not be used mechanically when residualized finite-sample covariance, serial dependence, or calendar dynamics differ from the stylized Toeplitz benchmark. The calibrated implementation in Proposition~\ref{prop:calibrated_selector} is designed for exactly this case: the selected persistence should be based on $c'\widehat V_d c$ over the retained feasible designs.

The remaining scenarios are closer to the analytical benchmark. For the immediate-reduction target, the HW-LP rule selects $r=0$, which coincides with iid assignment and delivers the lowest target MSE. For the persistent-reduction target, the application-feasible HW-LP rule selects the upper application endpoint, $r=0.7$, and this is also the best design among the retained comparisons. For the rebound scenario, the closed-form HW-LP rule selects moderate persistence and is statistically tied with the fixed moderate design under the paired Monte Carlo comparison.

The balanced comparison is the primary evidence for scalar targets because it most closely matches the closed-form theory. The sparse-budget and historical-schedule analyses extend the same design logic to additional constraints and fixed schedules rather than replacing the balanced benchmark.

The full-covariance evaluation retains the full response vector $\hat g$ in each Monte Carlo draw and computes the empirical covariance of the path-adjusted LP estimator for each retained design. This makes it possible to evaluate the finite-candidate calibrated selector
\[
    \widehat d_c\in\arg\min_{d\in\mathcal D} c'\widehat V_d c,
\]
where $\mathcal D$ is the finite set of retained designs. The selector uses the covariance of the estimator, not the realized squared error or the injected magnitudes of the response path. Table~\ref{tab:selector-reconciliation} reports the closed-form benchmark, the calibrated selector, and the best target-MSE design side by side. The best target-MSE design is observable only in this semi-synthetic exercise; it is included to show whether the feasible covariance selector points in the same direction as the known target loss.

\begin{table}[H]
\centering
\caption{Persistence selected by the benchmark, calibrated covariance, target loss, and generic curve loss.}
\label{tab:selector-reconciliation}
\begin{tabular}{lrrrr}
\toprule
Scenario & Benchmark $r$ & Calibrated $r$ & Target-best $r$ & Generic rel. MSE \\
\midrule
Immediate & 0.000 & 0.000 & 0.000 & 1.941 \\
Persistent & 0.700 & 0.700 & 0.700 & 1.000 \\
Delayed & 0.382 & 0.700 & 0.700 & 1.000 \\
Rebound & 0.314 & 0.300 & 0.300 & 2.311 \\
\bottomrule
\end{tabular}

\begin{minipage}{0.92\linewidth}
Notes: The benchmark column reports the analytical Markov persistence. The calibrated selector minimizes $c'\widehat V_d c$ over the retained finite design menu. The target-best persistence is observable only because the semi-synthetic exercise uses known injected response paths. The final column reports target MSE under the full-curve-MSE selector relative to the target-specific minimum.
\end{minipage}
\end{table}

The calibrated selector agrees with the closed-form benchmark for the immediate, persistent, and rebound targets, but it replaces the delayed-window benchmark with the persistent design. The generic full-curve-MSE selector is efficient for persistent and delayed targets, yet it over-persistently designs the immediate and rebound scalar targets. This divergence reflects the target-specific design criterion: persistence should be chosen for the reporting object, not for a generic curve loss.

The same target-MSE comparison yields a field decision. The benchmark and calibrated candidates coincide for the immediate and persistent targets. For the delayed target, the retained benchmark proxy has relative target MSE 2.224 and should be replaced by the calibrated persistent design. For the rebound target, the benchmark proxy has relative target MSE 1.025 and remains within the pre-specified 5 percent tolerance. The known semi-synthetic target loss is used only to evaluate the rule; field selection itself uses the covariance criterion.

The paired standard errors in Table~\ref{tab:lcl-key-results} are small in raw target-MSE units because common random numbers remove the shared baseline-load component. They are not evidence that the exercise is deterministic. The relative paired standard-error scale is economically interpretable: it is near zero when HW-LP and the best retained design are the same collapsed design, about 3.4 percent in the delayed stress test, and about 2.0 percent in the rebound comparison.

\subsection{Sparse-treatment and robustness diagnostics}
The balanced comparison isolates the persistence margin, but demand-response programs usually impose an active-share budget. Table~\ref{tab:sparse-hwlp} evaluates sparse unbalanced Markov HW-LP designs with active shares $p_0\in\{0.05,0.10,0.14\}$. These designs use the same target-specific persistence logic but operate under smaller information scale $p_0(1-p_0)$. The sparse block is both a limitation check and an implementation exercise. When $p_0=0.05$, the delayed-target relative MSE is 12.09: changing persistence cannot compensate for an event budget that leaves too few active half-hours. In such settings the practical conclusion is not that one should search harder within the first-order Markov class, but that the experiment needs a larger active share, a longer trial, a stronger prior/model-based estimator, or a more modest reporting target. The HW-LP rule allocates the available information across horizons; it does not create information that the active-share budget removes.

\begin{table}[H]
\centering
\caption{Sparse active-share HW-LP designs in the Low Carbon London evaluation.}
\label{tab:sparse-hwlp}
\begin{tabular}{lrrr}
\toprule
Scenario & $p_0$ & $r$ & Rel. MSE \\
\midrule
Immediate & 0.05 & 0.000 & 5.339 \\
Immediate & 0.10 & 0.000 & 2.858 \\
Immediate & 0.14 & 0.000 & 2.136 \\
Persistent & 0.05 & 0.700 & 5.347 \\
Persistent & 0.10 & 0.700 & 2.856 \\
Persistent & 0.14 & 0.700 & 2.118 \\
Delayed & 0.05 & 0.382 & 12.093 \\
Delayed & 0.10 & 0.382 & 6.221 \\
Delayed & 0.14 & 0.382 & 4.720 \\
Rebound & 0.05 & 0.314 & 5.314 \\
Rebound & 0.10 & 0.314 & 2.754 \\
Rebound & 0.14 & 0.314 & 2.098 \\
\bottomrule
\end{tabular}
\begin{minipage}{0.78\linewidth}
Notes: Entries are target MSE relative to the best design among all retained designs for the scenario. The sparse HW-LP design fixes active share $p_0$ and chooses target-specific persistence within the unbalanced Markov class.
\end{minipage}
\end{table}

Appendix~\ref{app:lcl-details} reports five diagnostics that qualify the main comparison without changing its target-MSE interpretation: calendar-state sensitivity, historical tariff schedules evaluated as fixed assignment paths, finite design-matrix conditioning, injected-path scale, and pilot-window stability. The calendar-state exercise shows that a common persistence recommendation can range from iid assignment to about $r=0.69$ when reporting weights change across weekday and weekend states. The historical schedule diagnostic confirms that the observed tariff signals are operationally realistic but sparse and persistent, which makes some multi-horizon contrasts information-poor. These are information diagnostics rather than causal estimates of the historical tariff.

The pilot-window checks support the calibrated-selector implementation. Three-month or longer aggregate baseline windows produce stable recommendations for all four targets, whereas one-month windows can under-select the cumulative target's boundary persistence. Appendix Tables~\ref{tab:pilot_length_sensitivity}, \ref{tab:loadstate_calibration_sensitivity}, and \ref{tab:synthetic_transportability} report the numerical evidence. The exercise remains a semi-synthetic covariance stress test: it evaluates known injected response paths under realistic half-hourly load dynamics and does not estimate household-level elasticities or forecast contemporary demand-response magnitudes.

\section{Interpretation and external validity}\label{sec:limitations}
The results are benchmark and implementation tools for a pre-specified dynamic reporting object. The scope is narrower than the set of all dynamic-design problems. Three boundaries matter.

\subsection{Estimands, estimators, and assignment mechanisms}
The main target is an assigned-path, reduced-form LP response for a discrete-time treatment path. Structural price elasticities, welfare primitives, market-clearing counterfactuals, mechanism decompositions, distributional welfare, household-level heterogeneous effects, individual treatment effects, and continuous-horizon response curves require different estimands or additional assumptions. Similarly, GLS, VAR or transfer-function estimation, GMM, Bayesian or smooth LPs, Huber or quantile LPs, and regularized large-horizon LPs should use their own estimator-specific covariance or posterior risk rather than mechanically reusing the OLS/HAC benchmark. The finite-memory benchmark treats the lag coefficients $g_\ell$ as time-invariant over the experiment. If $g_\ell$ varies smoothly with season, temperature, or adoption trends, the path-adjusted LP targets a time-averaged response under the chosen weights; abrupt regime changes instead require stratified analysis or time-varying-parameter estimators before applying the calibrated selector. The HW-LP logic is modular in this respect: specify the reporting object, estimator, and loss weights first, then evaluate the corresponding design risk.

First-order Markov switchbacks are attractive because they are operationally transparent, easy to implement, and directly control expected run length and active-spell counts. They are not globally optimal over higher-order Markov, multisine, adaptive, finite-sequence, business-loss-aware, or continuous-time designs. Multi-arm or dose-response switchbacks also fall outside the binary closed form because their transition structure is higher-dimensional and the information matrix is generally block-Toeplitz. Appendix Figure~\ref{fig:multiarm-information} shows the distinction in a three-arm diagnostic: symmetric switching retains a separable scalar structure after contrast coding, while directional cycling produces matrix-valued lag blocks. The calibrated covariance selector remains applicable once the realized or model-implied block information matrix is computed. Richer finite-sequence algorithms can improve risk when the exact calendar grid and balance constraints are fixed, as Table~\ref{tab:finite_sequence_benchmark} illustrates. If the assignment follows a continuous-time switching process and is sampled at interval $\Delta$, the discrete persistence should be interpreted through $r\simeq \exp(-\lambda\Delta)$; changing the analysis frequency changes the persistence scale and requires a new design calculation. Irregularly sampled outcomes--event-driven trades, jittered impressions, or sparse session-level logs--require preprocessing to a regular analysis grid or a continuous-time HW-LP formulation; the discrete Toeplitz benchmark assumes fixed-interval observations. The reporting horizon $H$ is a pre-specified reporting resolution, not an estimate of the true biological or economic memory. If a deployment requires 12--48 hour rebound targets or smooth response curves, the analysis should jointly pre-specify the target horizon, regularization, and inference adjustment rather than treating the $H=8$ benchmark as universal. Cross-sectional interference, two-stage clustered switchbacks, state-dependent persistence, online learning, group-sequential stopping, and platform bandits are compatible with the broad idea of target-specific design, but they replace the stationary Toeplitz matrix with a realized, time-inhomogeneous, or policy-dependent information matrix.

\subsection{External validity and field constraints}
Field use also imposes constraints that are not statistical optimality conditions: participant burden, business loss, settlement windows, capacity obligations, maximum bill-risk protections, consent and withdrawal rules, privacy budgets, secure aggregation, audit logs, and regulator-facing disclosure. These constraints should enter the feasible design set or the loss function before the experiment begins. The Low Carbon London analysis is a semi-synthetic aggregate covariance stress test based on one public demand-response setting, not a transportability claim for other populations, climates, technologies, or household subgroups. For new deployments, the recommended procedure is to choose the reporting object and weight matrix, select a pilot-free or pilot-calibrated assignment rule, use randomization-first inference in high-persistence regimes, and document the realized assignment, exclusions, numerical diagnostics, and finite-sample balance reports.

\section{Conclusion}
Dynamic experiments should be designed for the dynamic objects researchers plan to report, and the field recommendation should be calibrated to the estimator and covariance environment actually used. We develop a horizon-weighted local projection criterion that maps a target response curve or contrast into an assignment design problem. In balanced Markov switchback designs, the criterion is analytically tractable: the information matrix is AR(1)-Toeplitz, its inverse is tridiagonal, and the optimal assignment persistence has a closed form. That formula is the benchmark, not the universal prescription.

The benchmark results show that there is no universally optimal assignment persistence. Within the benchmark Markov class, iid assignment is appropriate for isolated horizon effects. Persistent assignment can be efficient for smooth cumulative effects. Moderate or alternating assignment can be useful for rebound and other sign-changing contrasts. The omitted-carryover analysis further cautions against choosing extreme persistence solely because it lowers variance for a smooth target: persistence is the variance-allocation margin, while buffer horizons are the main analysis-design tool for protecting reported contrasts from omitted tails.

The results should be interpreted as design guidance for experiments that can choose the temporal pattern of assignment after pre-specifying a dynamic reporting object. The empirical results separate three design tasks: use the analytical benchmark when the main issue is persistence itself, use the sparse-budget comparison when the design is constrained by event frequency or fatigue, and use the historical-schedule diagnostic when the relevant question is whether an already deployed tariff path was informative for the target of interest. These three evidence blocks answer related but distinct design questions and should not be collapsed into a single ranking. The closed-form rules are benchmark formulas; in field settings with heterogeneous residual variance, calendar controls, or serial dependence, the same objective is implemented using pilot, HAC, residual-bootstrap, or calibrated-baseline covariance estimates. The semi-synthetic Low Carbon London exercise therefore evaluates designs under realistic load dynamics with known injected response paths, rather than estimating the causal effect of the observed tariff schedule. The LCL results calibrate design risk under realistic half-hourly covariance; they are not a forecast of contemporary demand-response magnitudes.

The benchmark comparisons and Low Carbon London calibrated evaluation illustrate that these differences are empirically meaningful. Mismatching the assignment design to the reported dynamic object can substantially inflate target MSE. In sparse demand-response settings, the active-share budget should be separated from the persistence choice: the budget determines information scale, while persistence determines how information is allocated across horizons. Extending the closed-form rule to multi-arm or dose-response switchbacks is a natural next step. The three-arm diagnostic in Appendix Figure~\ref{fig:multiarm-information} shows why the extension is nontrivial: the information matrix is block-Toeplitz and directional switching can invalidate a scalar persistence summary, although the calibrated selector still applies once that block matrix is formed. The implication is that the experimental design should be chosen after specifying the local-projection target, not before.

\appendix

\section{Design extensions and numerical benchmarks}\label{app:design_extensions}

\subsection{Weighting matrices and spectral characterization}
Corollary~\ref{cor:linear_transforms} is useful in practice. A researcher may first estimate a response curve and then report a cumulative sum, a delayed-window average, or a policy-path summary. The design criterion is obtained by mapping that reporting object back to an effective weight matrix on the response curve.

The weight matrix $W$ is a decision-theoretic input, not a free tuning parameter to be chosen after seeing estimates. A scalar report fixes $W=cc'$. A full-curve report with $W=I$ encodes the value judgment that each horizon is equally important on the chosen scale. If the policy decision weights early, peak, or late horizons differently, those weights should be declared before the design is selected, and the sensitivity should be reported. Table~\ref{tab:weight_sensitivity} gives a benchmark sensitivity calculation for $H=8$: cumulative reporting favors high persistence, whereas full-curve, early-horizon, and front-loaded reporting favor iid assignment in the homoskedastic Markov benchmark.

\begin{table}[H]
\centering
\caption{Sensitivity of the benchmark persistence rule to the reporting weights.}
\label{tab:weight_sensitivity}
\begin{tabular}{p{0.24\linewidth}p{0.20\linewidth}ccc}
\toprule
Reporting object & Weight or contrast & Selected $r^*$ & iid rel. risk & $r=0.90$ rel. risk\\
\midrule
Immediate scalar & $e_0$ & 0.000 & 1.000 & 5.263\\
Cumulative scalar & $\mathbf 1$ & 0.950 & 7.468 & 1.179\\
Alternative delayed window & avg $h=4,5,6$ & 0.382 & 1.342 & 4.307\\
Alternative rebound contrast & early $-$ late & 0.431 & 1.404 & 3.509\\
Full curve uniform & diag uniform & 0.000 & 1.000 & 8.579\\
Front-loaded curve & diag $e^{-h/2}$ & 0.000 & 1.000 & 7.799\\
Late-loaded curve & diag $e^{-(H-h)/2}$ & 0.000 & 1.000 & 7.799\\
\bottomrule
\end{tabular}
\begin{minipage}{0.94\linewidth}
\emph{Notes:} The table evaluates $\operatorname{tr}\{WQ_H(r)^{-1}\}$ or $c'Q_H(r)^{-1}c$ on a fine grid for $H=8$ and $r\in[-0.95,0.95]$. Relative risks are normalized by the row-specific optimum. The entries show why $W$ is a pre-analysis-plan choice: a cumulative scalar target favors high persistence, while full-curve or early-horizon reporting favors iid assignment in the homoskedastic benchmark. The alternative delayed and rebound rows are sensitivity contrasts and are distinct from the empirical response scenarios in Section~\ref{sec:lcl}.
\end{minipage}
\end{table}

For a general symmetric matrix $W\succeq0$, $\mathcal R_W(r)=\operatorname{tr}\{W\sigma^2Q_H(r)^{-1}\}$. Because $Q_H(r)^{-1}$ is tridiagonal, the risk depends only on the diagonal and adjacent off-diagonal elements of $W$:
In practice, this means that the design problem is driven by which adjacent horizons matter jointly in the reported object. Smooth reporting weights reward persistence; oscillating weights reward separation across nearby horizons. When the experiment also faces an active-share budget, the same persistence logic applies after conditioning on the budgeted share.
\[
    \operatorname{tr}\{WQ_H(r)^{-1}\}=\frac{4}{1-r^2}\left[a_W+b_Wr^2-2d_Wr\right],
\]
where $a_W=\operatorname{tr}(W)$, $b_W=\sum_{j=1}^{H-1}W_{jj}$, and $d_W=\sum_{j=1}^{H}W_{j-1,j}$. Thus the same closed-form characterization applies: isolated horizons and full-curve reporting favor iid assignment in the benchmark, smooth cumulative reporting favors persistence, and oscillating rebound contrasts favor moderate persistence or alternation depending on the sign of adjacent-horizon weights. This guidance is kept in prose to avoid duplicating the headline risk figure and the decision tables below.

The same target-dependence can be read in frequency-domain language. Let $C(z)=\sum_{j=0}^{H}c_jz^j$ be the lag polynomial associated with the reported contrast. The centered Markov assignment has spectral density
\[
    f_A(\omega;r)=\frac{1}{4}\frac{1-r^2}{|1-re^{-i\omega}|^2}.
\]
For the finite contrast vector, the following identity links the spectral expression to the finite Toeplitz inverse.
\begin{lemma}[Spectral endpoint identity]\label{lem:spectral_endpoint}
Let $C(z)=\sum_{j=0}^{H}c_jz^j$ and let
\[
    f_A(\omega;r)=\frac{1}{4}\frac{1-r^2}{|1-re^{-i\omega}|^2},\qquad |r|<1.
\]
Then
\[
    c'Q_H(r)^{-1}c
    =\frac{1}{2\pi}\int_{-\pi}^{\pi}\frac{|C(e^{-i\omega})|^2}{f_A(\omega;r)}\,d\omega
    -\frac{4r^2}{1-r^2}(c_0^2+c_H^2).
\]
Equivalently, the integral equals the exact finite-$H$ quadratic form plus the endpoint term $4r^2(c_0^2+c_H^2)/(1-r^2)$.
\end{lemma}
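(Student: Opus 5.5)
The plan is to compute both sides of the identity in closed form, using Proposition~\ref{prop:info} for the finite quadratic form and Parseval's identity for the spectral integral, and then subtract. For the finite side nothing new is needed: by Proposition~\ref{prop:info} and \eqref{eq:closed_form_risk},
\[
    c'Q_H(r)^{-1}c=\frac{4}{1-r^2}\left[a+br^2-2dr\right],
\]
with $a=\sum_{j=0}^H c_j^2$, $b=\sum_{j=1}^{H-1}c_j^2$ and $d=\sum_{j=1}^H c_{j-1}c_j$. The one bookkeeping step is to rewrite $b=a-c_0^2-c_H^2$. This makes the endpoint terms explicit and gives
\[
    c'Q_H(r)^{-1}c=\frac{4}{1-r^2}\left[(1+r^2)a-2dr\right]-\frac{4r^2}{1-r^2}(c_0^2+c_H^2).
\]

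For the spectral side, I would invert the given density directly:
\[
    \frac{1}{f_A(\omega;r)}=\frac{4}{1-r^2}\,|1-re^{-i\omega}|^2=\frac{4}{1-r^2}\left(1+r^2-2r\cos\omega\right).
\]
This is a trigonometric polynomial of degree one and is bounded because $|r|<1$. The integral therefore splits into two Fourier moments of $|C(e^{-i\omega})|^2$. Expanding $|C(e^{-i\omega})|^2=\sum_{j,k}c_jc_ke^{-i(j-k)\omega}$ and using orthogonality of the exponentials on $[-\pi,\pi]$ gives
\[
    \frac{1}{2\pi}\int_{-\pi}^{\pi}|C(e^{-i\omega})|^2\,d\omega=a,
\]
\[
    \frac{1}{2\pi}\int_{-\pi}^{\pi}|C(e^{-i\omega})|^2\cos\omega\,d\omega=\tfrac12\left(\sum_{j}c_jc_{j+1}+\sum_j c_{j+1}c_j\right)=d.
\]
Hence the integral equals $\frac{4}{1-r^2}[(1+r^2)a-2rd]$.

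Subtracting the finite expression from this one leaves exactly $\frac{4r^2}{1-r^2}(a-b)=\frac{4r^2}{1-r^2}(c_0^2+c_H^2)$, which is the stated endpoint term. The equivalent phrasing in the lemma is just this equation rearranged.

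The main point needing care is the Fourier moment for the lag-one coefficient. Only the pairs with $|j-k|=1$ survive, and each adjacent product is counted once from $e^{i\omega}$ and once from $e^{-i\omega}$; these combine with the factor $\tfrac12$ in $\cos\omega$ to give $d$, not $2d$. Conceptually, the integral is the infinite-Toeplitz (Whittle) approximation in which every horizon sees both neighbours. Horizons $0$ and $H$ each lack one neighbour in the finite tridiagonal inverse, so they carry weight $1$ instead of $1+r^2$. That missing $r^2c_0^2+r^2c_H^2$, scaled by $4/(1-r^2)$, is the entire discrepancy between the two sides.
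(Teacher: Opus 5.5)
Your proposal is correct and follows the paper's proof essentially step for step. The paper also writes $1/f_A$ as $\frac{4}{1-r^2}|1-re^{-i\omega}|^2$ and uses Parseval to reduce the integral to $\frac{4}{1-r^2}\{(1+r^2)\sum_j c_j^2-2r\sum_j c_{j-1}c_j\}$; it then subtracts the tridiagonal quadratic form \eqref{eq:closed_form_risk} to isolate $4r^2(c_0^2+c_H^2)/(1-r^2)$, and your explicit check that the lag-one Fourier moment equals $d$ rather than $2d$ is a detail the paper leaves implicit.
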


Thus the frequency-domain expression is not a competing approximation to the time-domain formula; it is the same quadratic form plus a transparent finite-endpoint correction. For diffuse targets whose endpoint weights do not dominate $\sum_jc_j^2$, the relative correction is $O(1/H)$ as the reported horizon grows. The design intuition is classical: smooth cumulative contrasts emphasize low frequencies and favor persistent assignment, while alternating or rebound-shaped contrasts have higher-frequency content and can favor alternation or moderate persistence. The endpoint correction also has a useful Toeplitz-spectral reading. As $H$ grows, the eigenvalue distribution of the AR(1)-Toeplitz information matrix is governed by the Szego spectral limit, while the finite-$H$ endpoints account for the residual boundary contribution in Lemma~\ref{lem:spectral_endpoint}. In that sense, the correction captures the finite-horizon edge contribution that disappears for diffuse contrasts but can matter for endpoint-heavy targets. Appendix Figure~\ref{fig:spectral-sampling} visualizes the corresponding finite-Toeplitz eigenvalues and sampling-frequency mapping.

The calibrated selector has the same spectral reading after replacing the homoskedastic Markov covariance by the estimator covariance used in the field design. A pilot HAC, residualized covariance, residual bootstrap, or realized-schedule calculation estimates how the score spectrum differs from the benchmark $f_A(\omega;r)$. The calibrated risk can then be read as weighting the target filter $|C(e^{-i\omega})|^2$ by this residualized spectrum rather than by the homoskedastic assignment spectrum alone. When residual load dynamics or calendar residualization reshape the low- and high-frequency parts of the score spectrum, the calibrated minimizer can move away from the closed-form $r_c^{\mathrm{int}}$. This is why the spectral identity is used as a benchmark interpretation, while the field recommendation is made from the covariance of the planned estimator.

The benchmark is a discrete-time design problem on a pre-specified analysis grid. If assignment follows a continuous-time two-state switching process with rate $\lambda$ and is sampled every $\Delta$ units of time, the discrete persistence is approximately $r=\exp(-\lambda\Delta)$ under the convention that $\lambda$ indexes the autocorrelation decay. Consequently, changing from half-hourly to five-minute or one-minute data without changing the underlying physical switching rate changes the numerical value of $r$. The HW-LP formula is therefore not sampling-frequency invariant by itself: the sampling grid, feasible minimum run length, and reporting horizons must be fixed before the Markov persistence rule is interpreted. Continuous-time switchback designs based on diffusion or controlled Markov-process likelihoods would replace the finite Toeplitz matrix by a continuous-time information operator and are a separate design problem \citep{AitSahalia2002}. The discrete rule here is the finite-grid benchmark used for the planned analysis frequency, not an Ito-calculus limit theorem.

\subsection{Implementation and numerical diagnostics}
The analytical rule is used as a benchmark, not as an automatic field prescription. The implementation uses a small set of pre-specified tolerances: switch from the closed-form benchmark to the calibrated selector when the calibrated relative-risk gain exceeds 20 percent; recompute the selector under one-half and twice the capped HAC bandwidth; start buffer-horizon checks at $H=K_0+2$ and stop when target variance rises by more than 10 percent or numerical conditioning exceeds the pre-specified cap; and record a tighter feasible-endpoint check when the selected target lies on a persistence boundary. These tolerances are design-protocol inputs rather than new optimality claims. With pilot covariance errors of order $T_p^{-1/2}$ and grid error controlled at the same order as in Corollary~\ref{cor:pilot_length}, relative risk gaps below roughly two pilot-standard-error units should be treated as non-decisive.

The appropriate design route depends on the available calibration information. Without a local pilot, the closed form is a conservative benchmark rather than a field guarantee. With historical baseline or pilot information, calibrated covariance selection is preferable. A separated pilot and confirmatory stage gives the cleanest inference but is most expensive. Rolling deployments require adaptive-inference tools rather than the simple two-stage coverage proposition.

The reporting horizon is a reporting choice, not a hidden true-memory estimate. When the target itself changes with the horizon--for example, an all-reported-horizons cumulative target--the benchmark persistence recommendation can move to the feasible boundary and the condition number rises. When the scientific target is a fixed short delayed or rebound window and the analyst only pads additional nuisance horizons, the recommendation is more stable but numerical conditioning still deteriorates as the reported nuisance horizon grows. The joint $(H,r)$ sensitivity grid provides the corresponding numerical diagnostic.

The pre-specified switch threshold is reported as a sensitivity parameter rather than as a universal constant. At thresholds of 2, 5, 10, and 20 percent, the delayed target always replaces the benchmark with the calibrated selector; the immediate and cumulative targets never do. The rebound target switches only at the 2 percent threshold because its benchmark loss is 2.5 percent above the calibrated candidate. Under the descriptive approximation $\mathrm{SE}\simeq T_p^{-1/2}$ and a two-pilot-standard-error separation rule, thresholds of 2, 5, 10, and 20 percent correspond to effective pilot-information scales of roughly 10,000, 1,600, 400, and 100. These calculations are diagnostics for pre-analysis planning, not a formal optimal-threshold result.

The implementation defaults are fixed before the confirmatory experiment. When the pilot sample is large enough, the same protocol can replace fixed choices of $H$, $K_0$, and $\lambda$ with pre-specified selectors: AIC or BIC for lag and buffer length, blocked or rolling time-series cross-validation for prediction-oriented horizon and ridge choices, and sensitivity grids for local-bias radii \citep{Akaike1974,Schwarz1978,HansenLunde2005,BergmeirHyndmanKoo2018}. Those selectors should be run on pilot data or on a separated training fold. If the confirmatory outcomes are used to choose $\widehat H$, $\widehat K_0$, $\widehat\lambda$, or the target menu, the fixed-design intervals in Section~\ref{sec:simulations} no longer have their nominal interpretation without sample splitting, simultaneous multiverse bands, or post-selection intervals. The HW-LP rule therefore treats model selection as part of the design protocol, not as an informal robustness step after the reported effect has been inspected.

The closed-form expressions should be evaluated as quadratic forms, not by forming an explicit inverse in finite precision. The numerical implementation therefore constructs $Q_H(r)$, solves linear systems using symmetric positive-definite routines when the matrix is well conditioned, and reports the relative residual $\|Qx-c\|_2/(\|Q\|_2\|x\|_2+\|c\|_2)$ for any solve used in a design diagnostic. The default pre-analysis rule is to restrict the feasible set further if the candidate design has $\kappa_2(Q_H(r))>10^{10}$ and to treat $\kappa_2(Q_H(r))>10^{12}$ as a hard numerical exclusion unless a regularized estimator has been pre-specified. These thresholds are looser than the operational run-length constraints used in the LCL analysis, and they make explicit that persistence close to one is a numerical as well as statistical boundary. Table~\ref{tab:numerical_stability} recomputes the relevant scales for $H=8$ and $T=17{,}520$; the condition numbers remain modest at the LCL cap, while the raw HAC bandwidth would be impractically long without the effective-sample-size cap.

\begin{table}[htbp]
\centering
\caption{Numerical conditioning and capped HAC-bandwidth diagnostics.}
\label{tab:numerical_stability}
\begin{tabular}{rrrrrrrr}
\toprule
$r$ & $\kappa_2(Q_8)$ & $\log_{10}\kappa$ & Raw $b_0$ & $T_{\mathrm{eff}}$ & Capped $b_T$ & $b_T/T_{\mathrm{eff}}$ & Cap binds? \\ 
\midrule
0.00 & 1.0 & 0.00 & 34 & 17520 & 34 & 0.002 & no \\ 
0.70 & 22.5 & 1.35 & 192 & 3092 & 192 & 0.062 & no \\ 
0.90 & 124.8 & 2.10 & 642 & 922 & 231 & 0.251 & yes \\ 
0.95 & 294.5 & 2.47 & 1317 & 449 & 112 & 0.249 & yes \\ 
0.98 & 814.9 & 2.91 & 3343 & 177 & 44 & 0.249 & yes \\ 
0.99 & 1686.5 & 3.23 & 6720 & 88 & 22 & 0.250 & yes \\ 
\bottomrule
\end{tabular}

\begin{minipage}{0.92\linewidth}
Notes: $Q_8(r)=\frac14(r^{|i-j|})_{i,j=0}^8$. The digits-lost column is the first-order double-precision diagnostic $\log_{10}\kappa_2(Q_8)$. Raw $b_0$ is the persistence-adjusted starting bandwidth; capped $b_T$ applies the nearest-integer version of the finite-sample cap $0.25T_{\mathrm{eff}}$. Cap binding is interpreted as a warning to use calibrated covariance, bootstrap, or a smaller feasible persistence interval.
\end{minipage}
\end{table}

The cap-binding column makes the inference trigger explicit. Around the high-persistence region used by smooth cumulative targets, the cap begins to bind. Once it binds, the capped bandwidth is no longer an estimator of an asymptotically optimal truncation lag; it is a warning that the assignment path contains too few effective episodes for HAC normal approximations to be the primary inferential route. This is why cap binding, high persistence, and low active-spell counts are treated as the same randomization-first regime.

\subsection{Alternative assignment mechanisms and finite-sequence designs}
The closed-form Markov rule is an analytically transparent benchmark rather than an exhaustive search over feasible assignment processes. This part compares it with richer stochastic inputs, fixed-block alternatives, finite-sequence search, boundary diagnostics, and calibrated selection. The scope distinction is operational: richer design classes may improve target risk, but they should be evaluated with the same pre-specified HW-LP loss and with their own conditioning and feasibility diagnostics.

The closed-form rule is a conditional optimum within an analytically restricted assignment class. It is not claimed to dominate higher-order Markov, periodic, multisine, adaptive, multi-arm, or fully optimized input designs. The role of the Markov benchmark is to provide an interpretable mapping from a dynamic reporting target to assignment persistence. More elaborate design classes can improve on this benchmark when additional structure is specified. The same HW-LP objective can be evaluated for such designs by replacing $Q_H(r)$ with the appropriate empirical or model-implied information matrix. Appendix Figure~\ref{fig:multiarm-information} gives a scoped three-arm example: uniform switching yields separable scalar lag blocks after orthonormal arm coding, while directional cycling creates cross-arm blocks and can make the scalar persistence rule materially suboptimal for sign-changing targets.

Table~\ref{tab:nonmarkov_comparison} gives a focused comparison with deterministic fixed-block candidates. It is not a search over all higher-order, multisine, or adaptive designs, but it quantifies the scope statement. Long fixed blocks improve on the application-capped first-order Markov benchmark for the smooth cumulative target, while they are substantially worse for immediate, delayed, and rebound contrasts. The cumulative improvement also comes with a high design-matrix condition number, so lower benchmark risk need not imply easier finite-sample inference. The comparison reinforces the central claim: Markov persistence is a transparent benchmark, and richer design classes should be evaluated against the same pre-specified HW-LP loss rather than presumed uniformly better. Section~\ref{sec:lcl} adds a lightweight generic-MSE selector comparison, motivated by adaptive A/B-testing objectives, to separate target-specific design from target-agnostic curve-MSE optimization.

\begin{table}[htbp]
\centering
\caption{First-order Markov benchmark versus fixed-block non-Markov candidates.}
\label{tab:nonmarkov_comparison}
\begin{tabular}{lrrrr}
\toprule
Target & Markov $r^*$ & Best block $L$ & Rel. risk & Cond. no. \\
\midrule
Immediate & 0.000 & 12 & 3.750 & 55.8 \\
Cumulative & 0.700 & 24 & 0.622 & 158.7 \\
Delayed & 0.382 & 12 & 2.683 & 55.8 \\
Rebound & 0.314 & 12 & 2.891 & 55.8 \\
\bottomrule
\end{tabular}

\begin{minipage}{0.88\linewidth}
Notes: The Markov column minimizes the benchmark risk over $r\in[-0.7,0.7]$. Fixed-block candidates alternate $L$ treated half-hours and $L$ control half-hours, with $L\in\{1,2,3,4,6,8,12,16,24\}$; the table reports the best block length for each target. Relative risk is fixed-block risk divided by the Markov optimum for the same target. The condition number is for the corresponding finite lag-information matrix; large values, such as the cumulative block-$24$ row, warn that a low scalar risk can coexist with fragile finite-sample inversion.
\end{minipage}
\end{table}

Finite-sequence optimization asks a different question from the stationary Markov benchmark: it conditions on the exact calendar length and searches over realized assignment paths. The comparison below therefore reports both target risk and finite-design diagnostics rather than treating a locally optimized sequence as a universal assignment law.

Table~\ref{tab:finite_sequence_benchmark} adds finite-sequence benchmarks at two sample lengths and compares the theoretical Markov risk with realized Markov paths, fixed blocks, and coordinate-exchange finite sequences. The exercise clarifies the trade-off within the Markov class: first-order Markov designs are transparent, verifiable, and easy to constrain by expected run length, while finite-sequence algorithms can improve scalar target risk when the exact horizon, calendar grid, and balance constraints are fixed before the experiment. The realized-path distribution also shows that a Markov rule is a distribution over finite design matrices, not a guarantee that every realized path matches the expected Toeplitz benchmark. Appendix Table~\ref{tab:target_finite_path_dispersion} extends this check to the four reporting targets and to the sign-changing alternating contrast at the planning and LCL sample lengths.

\begin{table}[htbp]
\centering
\caption{Finite-sequence benchmarks for the cumulative target.}
\label{tab:finite_sequence_benchmark}
\begin{tabularx}{\textwidth}{@{}>{\raggedright\arraybackslash}Xrrrrrr@{}}
\toprule
\multicolumn{7}{@{}l}{\textit{Panel A: $T=160$}} \\
Design & Risk & \shortstack{Rel. to\\Markov} & \shortstack{Cond.\\no.} & \shortstack{Markov\\p10} & \shortstack{Markov\\median} & \shortstack{Markov\\p90} \\
\midrule
iid expectation & 36.00 & 3.732 & 1.0 & -- & -- & -- \\
HW-LP Markov expectation ($r=0.70$) & 9.65 & 1.000 & 22.5 & -- & -- & -- \\
Realized Markov $r=0.70$ paths & -- & -- & -- & 0.963 & 1.186 & 1.523 \\
Best well-conditioned fixed block ($L=20$) & 6.33 & 0.657 & 138.9 & -- & -- & -- \\
Coordinate-exchange finite sequence & 6.44 & 0.667 & 85.8 & -- & -- & -- \\
Unconstrained coordinate exchange (20 starts, best) & 4.22 & 0.438 & 92.0 & -- & -- & -- \\
\addlinespace
\multicolumn{7}{@{}l}{\textit{Panel B: $T=500$}} \\
\midrule
iid expectation & 36.00 & 3.732 & 1.0 & -- & -- & -- \\
HW-LP Markov expectation ($r=0.70$) & 9.65 & 1.000 & 22.5 & -- & -- & -- \\
Realized Markov $r=0.70$ paths & -- & -- & -- & 1.066 & 1.198 & 1.359 \\
Best well-conditioned fixed block ($L=20$) & 6.56 & 0.680 & 128.4 & -- & -- & -- \\
Coordinate-exchange finite sequence & 6.61 & 0.685 & 99.9 & -- & -- & -- \\
Unconstrained coordinate exchange (20 starts, best) & 4.12 & 0.427 & 477.3 & -- & -- & -- \\
\bottomrule
\end{tabularx}

\begin{minipage}{0.94\linewidth}\emph{Notes:} The finite-sequence exercise uses centered binary sequences, $H=8$, the cumulative contrast, exact finite design matrices, balance-preserving coordinate exchanges, and a maximum run length of 20 unless noted otherwise. Realized Markov rows report finite-sequence risk relative to the theoretical Markov $r=0.70$ benchmark after filtering for treatment share between 0.45 and 0.55 and maximum run length 20. The unconstrained row removes the run-length cap and is included only as a global-search warning. The unconstrained coordinate-exchange entries use 20 random starts; the corresponding best/median/worst risks are 4.222/4.458/4.750 at $T=160$ and 4.117/4.286/4.690 at $T=500$.
\end{minipage}
\end{table}

The unconstrained coordinate-exchange rows provide a limiting finite-sequence comparison. Removing the run-length cap improves scalar cumulative-target risk at both $T=160$ and $T=500$, but the resulting paths create long episodes, higher condition numbers, and less transparent assignment rules than the Markov class. The multistart line in the notes is included because coordinate exchange is a local optimizer: a single local optimum should not be interpreted as a global finite-sequence bound. The Markov rule is therefore justified as a transparent, implementable, and verifiable design class, not as a global finite-sequence optimum.

Smooth cumulative targets also illustrate the difference between an interior statistical optimum and an operational cap. In the benchmark formula the uncapped cumulative optimum approaches the positive boundary; therefore, when the field feasible set is $[-0.7,0.7]$, the reported cumulative recommendation is $r_{\max}$ rather than an interior optimum. Table~\ref{tab:cumulative_corner} reports the resulting sensitivity to the chosen value of $r_{\max}$. The diagnostic is most important when the target hits a feasibility boundary: the design recommendation is then partly an operational statement about maximum allowable episode length.

\begin{table}[htbp]
\centering
\caption{Cumulative-target corner sensitivity to $r_{\max}$.}
\label{tab:cumulative_corner}
\begin{tabular}{rrrrr}
\toprule
$r_{\max}$ & Selected $r^\star$ & Rel. risk vs $r_{\max}=0.70$ & Expected active spells & $\kappa_2(Q_8)$ \\ 
\midrule
0.50 & 0.50 & 1.520 & 2190 & 7.7 \\ 
0.60 & 0.60 & 1.244 & 1752 & 12.7 \\ 
0.70 & 0.70 & 1.000 & 1314 & 22.5 \\ 
0.80 & 0.80 & 0.783 & 876 & 45.5 \\ 
0.90 & 0.90 & 0.589 & 438 & 124.8 \\ 
\bottomrule
\end{tabular}

\begin{minipage}{0.88\linewidth}
Notes: The cumulative contrast uses $H=8$ in the balanced homoskedastic Markov benchmark. Relative risk is normalized to the $r_{\max}=0.70$ recommendation. Expected active spells use $Tp(1-p)(1-r)$ with $T=17{,}520$ and $p=1/2$.
\end{minipage}
\end{table}

\begin{proposition}[Risk consistency of calibrated HW-LP selectors]\label{prop:calibrated_selector}
Let $R_W(r)=\operatorname{tr}\{WV(r)\}$ and let $\widehat R_W(r)=\operatorname{tr}\{W\widehat V(r)\}$ on a compact feasible set $\mathcal R$. Suppose
\[
    \sup_{r\in\mathcal R}|\widehat R_W(r)-R_W(r)|\to_p0.
\]
If $\widehat r_W$ is an $o_p(1)$ approximate minimizer of $\widehat R_W$ over $\mathcal R$, then
\[
    R_W(\widehat r_W)-\inf_{r\in\mathcal R}R_W(r)\to_p0.
\]
If the oracle minimizer $r_W^\star$ is unique and separated, then $\widehat r_W\to_p r_W^\star$.
\end{proposition}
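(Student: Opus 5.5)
The argument is the standard argmin-consistency argument for M-estimators, applied to the risk function instead of a criterion function of data. Write $\Delta_T=\sup_{r\in\mathcal R}|\widehat R_W(r)-R_W(r)|$, so $\Delta_T\to_p0$ by hypothesis. When the uniform convergence comes from Proposition~\ref{prop:hac_uniform} under Assumption~\ref{ass:hac}, its final display supplies this condition directly. Let $R^\star=\inf_{r\in\mathcal R}R_W(r)$. The approximate-minimizer property says $\widehat R_W(\widehat r_W)\le\inf_{r\in\mathcal R}\widehat R_W(r)+\epsilon_T$ with $\epsilon_T=o_p(1)$.

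\textbf{Risk consistency.} The first claim follows from a three-term chain. For any $r\in\mathcal R$,
\[
R_W(\widehat r_W)\le \widehat R_W(\widehat r_W)+\Delta_T\le \widehat R_W(r)+\epsilon_T+\Delta_T\le R_W(r)+\epsilon_T+2\Delta_T .
\]
Taking the infimum over $r$ gives
\[
0\le R_W(\widehat r_W)-R^\star\le \epsilon_T+2\Delta_T=o_p(1).
\]
This step uses neither compactness nor continuity; it also avoids assuming that $\inf_r\widehat R_W$ is attained, since only $\epsilon_T$-minimization is used.

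\textbf{Consistency of the selector.} For the second claim, interpret ``unique and separated'' as the well-separated-minimum condition
\[
\inf_{r\in\mathcal R:\,|r-r_W^\star|\ge\eta}R_W(r)>R_W(r_W^\star)\quad\text{for every }\eta>0 .
\]
Alternatively, derive this condition from uniqueness together with continuity of $R_W$ on the compact set $\mathcal R$. Continuity holds in the Markov benchmark because $V(r)=Q_H(r)^{-1}\Omega_H(r)Q_H(r)^{-1}$, $Q_H(r)$ is continuous and uniformly nonsingular on compact interior sets, and $\Omega_H(r)$ is continuous under the Lipschitz condition in Assumption~\ref{ass:hac}. The derivation goes as follows. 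A continuous function on the compact set $\{r\in\mathcal R:|r-r_W^\star|\ge\eta\}$ attains its minimum. That minimum strictly exceeds $R^\star$ by uniqueness. Call the gap $\delta_\eta>0$. Then
\[
\Pr(|\widehat r_W-r_W^\star|\ge\eta)\le\Pr\{R_W(\widehat r_W)-R^\star\ge\delta_\eta\}\to0
\]
by the first part.

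\textbf{Expected obstacle.} The only real difficulty is precision about hypotheses rather than a technical step. First, ``separated'' must be stated as the well-separated-minimum condition, or justified by continuity plus compactness as above; otherwise parameter consistency can fail even though risk consistency holds. Second, measurability of $\widehat r_W$ should be handled with outer probability if needed.
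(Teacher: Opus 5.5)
Your proof is correct and follows the paper's argument: the same chain $R_W(\widehat r_W)\le \widehat R_W(\widehat r_W)+o_p(1)\le \widehat R_W(r)+o_p(1)\le R_W(r)+o_p(1)$, the trivial reverse inequality, and then standard well-separated argmin consistency. Your version is slightly more careful. Comparing with an arbitrary $r$ and then taking the infimum avoids the paper's implicit assumption that $\arg\min_{r\in\mathcal R}R_W(r)$ is nonempty, and you write out the separation step that the paper only cites.
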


\begin{corollary}[Pilot length and approximate oracle selection]\label{cor:pilot_length}
Suppose the feasible set is implemented as a finite design grid $\mathcal D_T$ with $|\mathcal D_T|$ bounded or growing polynomially in $T_p$, and suppose a pilot sample of length $T_p$ produces covariance estimates satisfying
\[
    \max_{r\in\mathcal D_T}\|\widehat V(r)-V(r)\|=O_p(T_p^{-1/2}).
\]
Then $\max_{r\in\mathcal D_T}|\widehat R_W(r)-R_W(r)|=O_p(T_p^{-1/2})$. Consequently, any plug-in minimizer has oracle regret $O_p(T_p^{-1/2})$ over the grid. If the continuous risk surface is Lipschitz with constant $L_R$ and the grid mesh is $\Delta_T=\sup_{r\in\mathcal R}\min_{u\in\mathcal D_T}|r-u|$, the continuous-interval regret bound becomes
\[
    O_p(T_p^{-1/2})+2L_R\Delta_T.
\]
Choosing $\Delta_T=O(T_p^{-1/2})$ keeps the grid approximation error at the pilot-estimation rate. Thus pilot uncertainty enters the design rule through the accuracy of the covariance surface; when $T_p$ is small relative to the planned experiment, the calibrated selector should be interpreted as a noisy pilot recommendation rather than as an oracle design.
\end{corollary}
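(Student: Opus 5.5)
The argument has three steps. First comes a linear-algebra bound that turns covariance error into risk error. Then an argmin-comparison argument gives oracle regret on the grid. Finally a Lipschitz comparison handles the continuous interval. Throughout, $W$ is fixed, finite-dimensional and positive semidefinite.

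\textbf{Step 1: risk error.} For any $r$,
\[
|\widehat R_W(r)-R_W(r)|=|\operatorname{tr}\{W(\widehat V(r)-V(r))\}|\le \|W\|_*\,\|\widehat V(r)-V(r)\|,
\]
by the trace (Hölder) duality inequality, with $\|\cdot\|_*$ the dual norm. Here I take $\|\cdot\|$ to be the operator norm, so $\|W\|_*$ is the nuclear norm. Since all norms on $(H+1)\times(H+1)$ matrices are equivalent, any fixed choice of norm only changes constants. Taking the maximum over $r\in\mathcal D_T$ gives
\[
\max_{r\in\mathcal D_T}|\widehat R_W(r)-R_W(r)|=O_p(T_p^{-1/2}).
\]
The same constant works uniformly over the grid, so the growth of $|\mathcal D_T|$ plays no role here. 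The polynomial-growth hypothesis matters only for justifying the assumed uniform covariance rate in the first place, for example through a union bound with moment conditions. It is not needed once that rate is assumed.

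\textbf{Step 2: grid regret.} Let $\widehat r$ minimize $\widehat R_W$ over $\mathcal D_T$, and let $r_T^\dagger$ minimize $R_W$ over $\mathcal D_T$. Write $\epsilon_T=\max_{r\in\mathcal D_T}|\widehat R_W(r)-R_W(r)|$. The standard three-term decomposition gives
\[
R_W(\widehat r)-R_W(r_T^\dagger)=[R_W(\widehat r)-\widehat R_W(\widehat r)]+[\widehat R_W(\widehat r)-\widehat R_W(r_T^\dagger)]+[\widehat R_W(r_T^\dagger)-R_W(r_T^\dagger)].
\]
The middle bracket is $\le0$ because $\widehat r$ minimizes $\widehat R_W$. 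Each outer bracket is at most $\epsilon_T$, so the regret is at most $2\epsilon_T=O_p(T_p^{-1/2})$. If $\widehat r$ is only an approximate minimizer with slack $O_p(T_p^{-1/2})$, the same bound holds.

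\textbf{Step 3: continuous interval.} Let $r^\star$ attain $\inf_{\mathcal R}R_W$; it exists by compactness and continuity of $V(r)$ on the interior set. Choose $u\in\mathcal D_T$ with $|u-r^\star|\le\Delta_T$. The Lipschitz property gives
\[
R_W(r_T^\dagger)\le R_W(u)\le R_W(r^\star)+L_R\Delta_T.
\]
Combining with Step 2,
\[
R_W(\widehat r)-\inf_{\mathcal R}R_W\le 2\epsilon_T+L_R\Delta_T,
\]
which is within the stated bound $O_p(T_p^{-1/2})+2L_R\Delta_T$. The factor $2L_R\Delta_T$ is the conservative version you get if the Lipschitz step is applied on both sides. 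Setting $\Delta_T=O(T_p^{-1/2})$ then gives regret $O_p(T_p^{-1/2})$.

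\textbf{Main obstacle.} The step that needs care is justifying $L_R<\infty$ on $\mathcal R$. In the benchmark case I would argue directly from the closed form \eqref{eq:Rc} and its $\operatorname{tr}\{WQ_H^{-1}\}$ analogue: this is a rational function of $r$ whose denominator $1-r^2$ is bounded away from zero on a compact interior set, so its derivative is bounded. For the general sandwich $Q^{-1}\Omega Q^{-1}$, I would use the uniform nonsingularity in Assumption~\ref{ass:hac} together with Lipschitz continuity of $\Omega_H(r)$ in $r$. The corollary takes $L_R$ as given, so I would only remark on this point rather than prove it.
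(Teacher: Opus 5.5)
Your proof is correct and follows essentially the paper's route. The paper gives no separate proof of this corollary. It relies on the chain $R_W(\widehat r)\le\widehat R_W(\widehat r)+\epsilon_T\le\widehat R_W(r_T^\dagger)+\epsilon_T\le R_W(r_T^\dagger)+2\epsilon_T$ from the proof of Proposition~\ref{prop:calibrated_selector}, with the $o_p(1)$ terms sharpened to $O_p(T_p^{-1/2})$ by a trace-duality bound and then a Lipschitz grid-approximation step, just as you do. Your grid term $L_R\Delta_T$ is slightly tighter than the stated $2L_R\Delta_T$, which is harmless since the corollary only claims an upper bound.
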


Proposition~\ref{prop:calibrated_selector} is the formal bridge from the closed-form benchmark to field implementation. The researcher can estimate $\widehat V(r)$ using a pilot experiment, a residual bootstrap, a HAC covariance calculation, or a calibrated baseline model, and then minimize the same HW-LP objective. For cumulative and rebound contrasts, the full covariance of the estimated response curve matters; diagonal marginal standard deviations are only a diagnostic approximation.

\subsection{Target robustness and information criteria}
In some experiments the exact reporting object may not be known before the experiment, even though a small menu of plausible dynamic objects is known. For example, a demand-response experiment may be designed to learn an immediate reduction, a cumulative reduction, or a rebound contrast. The HW-LP criterion can be used in this setting without changing the estimating equations.

Let $\mathcal C=\{c^{(1)},\ldots,c^{(K)}\}$ be a finite menu of horizon-weight vectors. Define the normalized target-menu loss
\[
    L_{\mathcal C}(r)=\max_{c\in\mathcal C}\frac{\mathcal R_c(r)}{\inf_{u\in\mathcal R}\mathcal R_c(u)}.
\]
The menu-robust design chooses
\[
    r_{\mathcal C}^{\star}\in\arg\min_{r\in\mathcal R}L_{\mathcal C}(r).
\]
This criterion is not meant to replace pre-specification of a primary estimand. Rather, it is useful when a researcher wants one assignment design that is not badly mismatched to a small set of pre-specified dynamic summaries. The normalization is a competitive-ratio choice: each target is scaled by the best risk attainable for that target within the same feasible set. This avoids letting a high-variance target dominate solely because it is measured on a larger scale. A Bayesian average risk $E_\pi[\mathcal R_c(r)]$, an absolute-regret criterion, or a Pareto-frontier report may be preferable when the researcher has credible prior weights over targets; the max-relative criterion is used here as a conservative default for target menus without such prior weights.

\begin{proposition}[Finite-menu robust HW-LP design]\label{prop:menu_robust}
Suppose $\mathcal R$ is compact and $\mathcal C$ is finite. If $\mathcal R_c(r)$ is continuous in $r$ for each $c\in\mathcal C$ and $\inf_{u\in\mathcal R}\mathcal R_c(u)>0$, then a menu-robust design $r_{\mathcal C}^{\star}$ exists. If $\widehat{\mathcal R}_c(r)$ satisfies
\[
    \max_{c\in\mathcal C}\sup_{r\in\mathcal R}|\widehat{\mathcal R}_c(r)-\mathcal R_c(r)|\to_p0,
\]
and $\widehat r_{\mathcal C}$ is an $o_p(1)$ approximate minimizer of the plug-in menu loss, then
\[
    L_{\mathcal C}(\widehat r_{\mathcal C})-\inf_{r\in\mathcal R}L_{\mathcal C}(r)\to_p0.
\]
\end{proposition}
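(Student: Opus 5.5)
Existence follows from the extreme value theorem once $L_{\mathcal C}$ is shown to be continuous on the compact set $\mathcal R$; consistency follows the pattern of Proposition~\ref{prop:calibrated_selector}, applied to the normalized max-loss rather than to a single risk. For each $c\in\mathcal C$ let $m_c=\inf_{u\in\mathcal R}\mathcal R_c(u)$. Since $\mathcal R_c$ is continuous on a compact set, $m_c$ is attained and finite. By hypothesis $m_c>0$. Each ratio $\mathcal R_c(r)/m_c$ is therefore continuous in $r$. A maximum of finitely many continuous functions is continuous, so $L_{\mathcal C}$ is continuous on $\mathcal R$ and attains its minimum. This gives a menu-robust design $r_{\mathcal C}^\star$.

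For consistency I would define the plug-in loss as
\[
\widehat L_{\mathcal C}(r)=\max_{c\in\mathcal C}\frac{\widehat{\mathcal R}_c(r)}{\widehat m_c},\qquad \widehat m_c=\inf_{u\in\mathcal R}\widehat{\mathcal R}_c(u),
\]
which is the natural reading of ``plug-in menu loss''. The first step transfers uniform convergence to the normalizers. The map $f\mapsto\inf f$ is $1$-Lipschitz in the sup norm, so $|\widehat m_c-m_c|\le\sup_r|\widehat{\mathcal R}_c(r)-\mathcal R_c(r)|\to_p0$ for each $c$, and hence jointly over the finite menu.

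The second step, which I expect to be the main obstacle, controls the ratio uniformly. The denominator is bounded away from zero only in probability, so we need an event with probability approaching one on which $\widehat m_c\ge m_c/2$ for all $c$. On that event,
\[
\left|\frac{\widehat{\mathcal R}_c}{\widehat m_c}-\frac{\mathcal R_c}{m_c}\right|\le \frac{|\widehat{\mathcal R}_c-\mathcal R_c|}{\widehat m_c}+\mathcal R_c\,\frac{|\widehat m_c-m_c|}{\widehat m_c m_c}.
\]
We also use $M_c=\sup_{\mathcal R}\mathcal R_c<\infty$, which holds by continuity on a compact set. The right-hand side is then bounded uniformly in $r$ by
\[
2\,\frac{\sup_r|\widehat{\mathcal R}_c-\mathcal R_c|}{m_c}+2M_c\,\frac{|\widehat m_c-m_c|}{m_c^2}=o_p(1).
\]
Because $|\max_c x_c-\max_c y_c|\le\max_c|x_c-y_c|$, this gives $\Delta_n:=\sup_r|\widehat L_{\mathcal C}(r)-L_{\mathcal C}(r)|=o_p(1)$.

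The final step is the standard argmin sandwich. Suppose $\widehat L_{\mathcal C}(\widehat r_{\mathcal C})\le\inf_r\widehat L_{\mathcal C}(r)+o_p(1)$. Then
\[
L_{\mathcal C}(\widehat r_{\mathcal C})\le \widehat L_{\mathcal C}(\widehat r_{\mathcal C})+\Delta_n\le \widehat L_{\mathcal C}(r^\star_{\mathcal C})+\Delta_n+o_p(1)\le L_{\mathcal C}(r^\star_{\mathcal C})+2\Delta_n+o_p(1).
\]
Combined with the trivial lower bound $L_{\mathcal C}(\widehat r_{\mathcal C})\ge\inf_r L_{\mathcal C}(r)$, this gives the claimed $o_p(1)$ regret.

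A remark handles the alternative reading in which the plug-in loss keeps the oracle normalizers $m_c$. In that case the second step reduces to dividing by fixed positive constants, and the same argument applies with less work.
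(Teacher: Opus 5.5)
Your proposal is correct and follows the same route as the paper's proof. It establishes existence from continuity of the normalized max-loss on compact $\mathcal R$, transfers uniform convergence to the plug-in normalized loss using positivity of the denominators $\inf_{u\in\mathcal R}\mathcal R_c(u)$, and closes with the standard argmin sandwich; your explicit ratio bound on the event $\{\widehat m_c\ge m_c/2\}$ and the Lipschitz properties of $\inf$ and $\max$ just fill in the step the paper asserts in one sentence.
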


Table~\ref{tab:menu_robust} illustrates the idea for the application-feasible interval $[-0.7,0.7]$ and a menu containing immediate, cumulative, delayed-window, and rebound targets. The menu-robust design is not optimal for any single target, but it limits the worst relative loss across the menu. This is a useful diagnostic when an experiment will report several dynamic summaries or when the primary target is selected from a pre-analysis menu.

\begin{table}[t]
    \centering
    \caption{Finite-menu robust design over common dynamic targets.}
    \label{tab:menu_robust}
    \begin{tabular}{lrrrrrr}
\toprule
Design & $r$ & immediate & cumulative & delayed & rebound & max loss \\
\midrule
iid & 0.000 & 1.00 & 3.73 & 1.34 & 1.19 & 3.73 \\
alternating & -0.700 & 1.96 & 19.21 & 6.37 & 4.81 & 19.21 \\
moderate & 0.300 & 1.10 & 2.20 & 1.02 & 1.00 & 2.20 \\
persistent & 0.700 & 1.96 & 1.00 & 1.46 & 1.55 & 1.96 \\
menu-robust & 0.539 & 1.41 & 1.41 & 1.08 & 1.14 & 1.41 \\
\bottomrule
\end{tabular}

    \begin{minipage}{0.92\linewidth}
    Notes: Entries are relative benchmark risks, normalized by the target-specific optimum over $r\in[-0.7,0.7]$. The menu is immediate, cumulative, delayed-window, and rebound targets. The menu-robust design minimizes the maximum relative loss across the menu.
    \end{minipage}
\end{table}

Table~\ref{tab:compromise_cum_delayed} reports a pairwise formulation of the same idea for a single experiment that must serve both cumulative and delayed-window summaries. Optimizing for one target alone can be costly for the other. In the calibrated pairwise comparison, the compromise design $r=0.66$ raises cumulative risk by 9.4 percent and delayed risk by 19.4 percent, while avoiding the larger single-target mismatch costs. The table is a planning diagnostic: if both targets are truly primary and neither compromise cost is acceptable, the researcher should pre-specify separate experiments or a minimax design over the target menu.

\begin{table}[t]
    \centering
    \caption{Cost of a single design serving cumulative and delayed targets.}
    \label{tab:compromise_cum_delayed}
    \begin{tabularx}{\textwidth}{@{}>{\raggedright\arraybackslash}Xrrrrr@{}}
\toprule
Design & $r$ & \shortstack{Cumulative\\rel. risk} & \shortstack{Delayed\\rel. risk} & Max loss & Sum loss \\
\midrule
Cumulative-optimal design & 0.70 & 1.000 & 1.450 & 1.450 & 2.450 \\
Delayed-optimal design & 0.45 & 1.310 & 1.000 & 1.310 & 2.310 \\
Pairwise compromise & 0.66 & 1.094 & 1.194 & 1.194 & 2.288 \\
\bottomrule
\end{tabularx}

    \begin{minipage}{0.92\linewidth}
    Notes: Relative risks are normalized by each target's target-specific calibrated optimum in the same candidate design menu. The compromise row minimizes the sum of the two normalized risks among the evaluated candidates. The diagnostic quantifies the cost of reporting both targets from one assignment design.
    \end{minipage}
\end{table}

The scalar HW-LP loss is the right object only after the decision maker has fixed the reported statistical target and any non-statistical costs. Field designs often also trade off statistical precision, participant fatigue, business or revenue loss, operational simplicity, equity, and welfare. The HW-LP objective can be embedded in standard multi-objective design tools: weighted-sum scalarization, lexicographic rules, $\epsilon$-constraints, or goal programming \citep{CharnesCooper1961,Miettinen1999}. Figure~\ref{fig:pareto-frontier} illustrates the idea for the cumulative target. The horizontal axis is normalized statistical risk, the vertical axis is an illustrative persistent-exposure loss proxy based on expected episode length, and the point labels show Markov persistence. The figure is not a welfare model; it is a diagnostic that makes clear that a variance-optimal persistence is one point on a frontier rather than a universal field recommendation.

The illustrative Pareto frontier is displayed in Appendix Figure~\ref{fig:pareto-frontier} to keep the main design section focused on the scalar HW-LP criterion.

The scalar HW-LP criterion is a $C$-optimality criterion for the pre-specified LP contrast $c'g$. Classical $D$-, $A$-, and $E$-optimality answer different questions: they optimize the volume, average marginal variance, or worst principal-axis variance of the full response-vector confidence ellipsoid. In the balanced Markov benchmark, those full-vector criteria favor iid assignment over the feasible interval used in the application, while target-specific $C$-optimality can favor persistent or moderate persistence. Appendix Table~\ref{tab:classical_optimality} gives the deterministic comparison. The point is not that $D$-, $A$-, or $E$-optimality is wrong; it is that they target full-vector information rather than the scalar dynamic object that the pre-analysis plan says will be reported. A Bayesian or KL-information design would similarly be appropriate when the scientific object is expected information gain or posterior contraction rather than frequentist precision for a named LP contrast \citep{Lindley1956,Pukelsheim2006,Amari2016}.

A Bayesian counterpart minimizes prior-averaged posterior risk rather than the frequentist variance of a named contrast. With a diffuse Gaussian prior and squared-error loss for the same LP contrast, the leading quadratic term reduces to the HW-LP criterion. With informative smoothness, sign-restriction, monotonicity, or shrinkage priors, the prior-averaged loss effectively replaces $W$ by a prior-weighted loss matrix and can move the Bayes-optimal persistence toward designs that emphasize the favored response shapes. Thus the closed-form rule is the frequentist benchmark; Bayesian designers should compute the same design menu using their posterior or prior-predictive risk.

\section{Inference details}\label{app:inference_details}

\subsection{Simultaneous confidence bands and full-curve inference}
Pointwise intervals are the starting point, but dynamic reporting often asks for a curve or a target menu. These objects require joint critical values rather than a collection of marginal horizon-by-horizon intervals.

For a response curve with $q$ reported horizons, let $D=\operatorname{diag}(\widehat V_\pi)$ and let $\widehat R=D^{-1/2}\widehat V_\pi D^{-1/2}$ be the estimated correlation matrix. A Bonferroni band uses
\[
    \hat g_j \pm z_{1-\alpha/(2q)}\sqrt{\widehat V_{\pi,jj}/T},\qquad j=0,\ldots,q-1.
\]
A maximum-$t$ band instead draws $Z\sim N(0,\widehat R)$, takes $k_{1-\alpha}$ as the $(1-\alpha)$ quantile of $\max_j|Z_j|$, and reports
\[
    \hat g_j \pm k_{1-\alpha}\sqrt{\widehat V_{\pi,jj}/T}.
\]
The same construction applies after premultiplying by any reporting matrix $L$ for cumulative windows or target menus.

For full-curve reporting, marginal horizon-by-horizon intervals should not be presented as a simultaneous band. A pre-analysis plan that reports $W=I$ should also specify a sup-$t$ band. Let $D=\operatorname{diag}(\widehat V)$ and define the studentized process
\[
    T_h=\frac{\widehat g_h-g_h}{\sqrt{\widehat V_{hh}/T}},\qquad h=0,\ldots,H.
\]
The simultaneous $1-\alpha$ band uses a critical value $k_{1-\alpha}$ satisfying
\[
    \Pr\left(\max_{0\le h\le H}|T_h|\le k_{1-\alpha}\mid \widehat V\right)\approx 1-\alpha,
\]
computed either by Gaussian simulation from $N(0,D^{-1/2}\widehat V D^{-1/2})$ or by design-resampling the realized switchback assignment. The band is
\[
    \widehat g_h \pm k_{1-\alpha}\sqrt{\widehat V_{hh}/T},\qquad h=0,\ldots,H.
\]
This is the local-projection analogue of simultaneous IRF bands such as those advocated by \citet{MontielOleaPlagborgMoller2021}; Bonferroni is a conservative fallback when the joint covariance is unstable. A deterministic benchmark diagnostic shows why this matters: for $H=8$, marginal 95 percent bands have only about 63--68 percent simultaneous coverage over the full curve under common Markov covariance benchmarks, while simulated sup-$t$ critical values around 2.74--2.77 restore 95 percent simultaneous coverage. Appendix Table~\ref{tab:supt_finite_sample} adds a finite-sample switchback diagnostic: at $T=2000$, pointwise coverage remains near 95 percent, marginal bands cover the full curve only about 62--69 percent of the time, and Gaussian or residual-multiplier sup-$t$ bands restore simultaneous coverage to about 94--95 percent. If a paper reports several scalar targets such as immediate, cumulative, delayed, and rebound effects and treats any rejection as confirmatory, the pre-analysis plan should either designate one primary target or apply a target-level family-wise adjustment, such as Bonferroni or a joint max-$t$ calculation over the target contrasts.

\subsection{Power, sequential testing, and design selection}\label{subsec:power_mde}
MSE and power are the same design comparison expressed on different scales. For a two-sided level-$\alpha$ test with desired power $1-\beta$, the approximate minimum detectable effect for $\theta_c=c'g$ is
\[
    \operatorname{MDE}_{1-\beta}(r)=\{z_{1-\alpha/2}+z_{1-\beta}\}\sqrt{c'V(r)c/T}.
\]
Thus minimizing $c'V(r)c$ also minimizes the squared MDE for the same target, but reporting the MDE is more useful for grant, IRB, and regulator-facing design documents. Table~\ref{tab:power_mde} and Figure~\ref{fig:power-mde} translate the benchmark risk into 80 percent power MDEs. The table reports MDEs in residual-standard-deviation units and, for scale only, the percent of mean load when the pilot residual standard deviation is 20 percent of mean load. Users should replace that ratio by their own pilot estimate.

Randomization-first inference solves the validity problem in high-persistence regimes, but it can be less powerful when the design generates very few active spells. Table~\ref{tab:fisher_power_cost} reports a finite-spell power diagnostic for a cumulative effect of 0.15 in a $T=2000$ planning exercise. This diagnostic addresses a different question from the ideal-coverage table: it asks what is lost in power when the exact randomization route replaces the normal approximation. At $r=0.99$, the spell count is only five and the Fisher power falls to 65 percent in the diagnostic. Thus high persistence should be justified by both target-risk gains and a power/MDE calculation, not by variance minimization alone.

\begin{table}[t]
\centering
\caption{Power cost of randomization-first inference in high-persistence regimes.}
\label{tab:fisher_power_cost}
\begin{tabular}{rrrrr}
\toprule
$r$ & Active spells & Normal power & Fisher power & MDE inflation \\ 
\midrule
0.00 & 500 & 15.0\% & 12.5\% & 3.46 \\ 
0.70 & 150 & 52.5\% & 45.8\% & 1.51 \\ 
0.95 & 25 & 84.2\% & 78.3\% & 1.02 \\ 
0.99 & 5 & 88.3\% & 65.0\% & 1.19 \\ 
\bottomrule
\end{tabular}

\begin{minipage}{0.88\linewidth}
Notes: The diagnostic uses the finite-spell planning calculation: $T=2000$, a cumulative effect of 0.15, and a two-sided 5 percent test. MDE inflation is the approximate factor needed to reach 80 percent power under the Fisher-power curve, using the same normal-score translation as Table~\ref{tab:power_mde}. The diagnostic provides a planning warning rather than a replacement for design-specific randomization power simulation.
\end{minipage}
\end{table}

\begin{table}[t]
    \centering
    \caption{Power translation: 80 percent MDEs for common LP targets.}
    \label{tab:power_mde}
    \begin{tabular}{llrrrr}
\toprule
Target & Design & $r$ & MDE/$\sigma$ & MDE if $\sigma/\mu=.20$ & rel. MDE \\
\midrule
Immediate & HW-LP & 0.000 & 0.042 & 0.85\% & 1.00 \\
Immediate & iid & 0.000 & 0.042 & 0.85\% & 1.00 \\
Immediate & persistent & 0.700 & 0.059 & 1.19\% & 1.40 \\
Cumulative & HW-LP & 0.700 & 0.066 & 1.31\% & 1.00 \\
Cumulative & iid & 0.000 & 0.127 & 2.54\% & 1.93 \\
Cumulative & persistent & 0.700 & 0.066 & 1.31\% & 1.00 \\
Delayed & HW-LP & 0.382 & 0.063 & 1.27\% & 1.00 \\
Delayed & iid & 0.000 & 0.073 & 1.47\% & 1.16 \\
Delayed & persistent & 0.700 & 0.077 & 1.53\% & 1.21 \\
Rebound & HW-LP & 0.314 & 0.078 & 1.55\% & 1.00 \\
Rebound & iid & 0.000 & 0.085 & 1.69\% & 1.09 \\
Rebound & persistent & 0.700 & 0.097 & 1.94\% & 1.25 \\
\bottomrule
\end{tabular}
    \begin{minipage}{0.92\linewidth}
    Notes: MDEs are computed from the homoskedastic Markov benchmark over $r\in[-0.7,0.7]$ with $H=8$, $T=17{,}520$, $\alpha=0.05$, and power 80 percent. The percent column is illustrative and equals MDE/$\sigma$ multiplied by $100\times0.20$; it should be rescaled by the pilot residual-SD-to-mean-load ratio in a field application.
    \end{minipage}
\end{table}

\begin{figure}[t]
    \centering
    \includegraphics[width=0.78\textwidth]{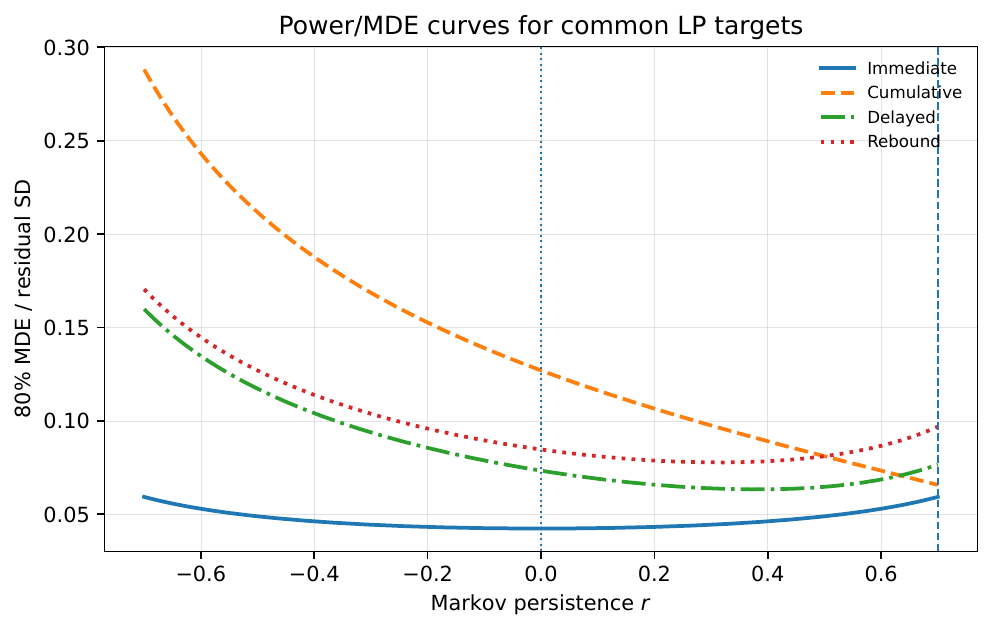}
    \caption{Power curves expressed as 80 percent MDEs in residual-standard-deviation units. The MDE scale is monotone in the square root of the HW-LP risk, so the variance-optimal design is also MDE-optimal for a fixed target and test.}
    \label{fig:power-mde}
\end{figure}

Sequential monitoring changes the inferential layer but not the role of the HW-LP information calculation. If interim analyses are planned, the fixed-$T$ critical value in the MDE formula should be replaced by the boundary implied by a pre-specified group-sequential, alpha-spending, or anytime-valid procedure \citep{Wald1945,Pocock1977,OBrienFleming1979,HowardEtAl2021}. At interim time $t_j$, the HW-LP variance calculation supplies the information fraction, approximately $I_j/I_J \simeq t_j\{c'V(r)c\}^{-1}/T\{c'V(r)c\}^{-1}=t_j/T$ for a fixed design and stable covariance, while calibrated implementations can use the realized covariance at each look. For a pre-specified effect size $\Delta_0$, the fixed-design sample-size analogue of the MDE formula is
\[
    T^*(r)=\frac{(z_{1-\alpha/2}+z_{1-\beta})^2 c'V(r)c}{\Delta_0^2}.
\]
Early stopping for efficacy, futility, or safety should therefore be specified as a monitoring rule layered on the chosen design, with the alpha-spending or e-process boundary replacing the one-shot normal critical value. Optimizing the interim schedule itself is a distinct sequential-design problem.

The recommended procedure is to evaluate candidate designs on a training or pilot sample, fix the assignment rule, and run the confirmatory experiment under that rule. The calibrated selector is data-dependent, so it must be part of the pre-specified design procedure rather than an informal after-the-fact design choice. A pre-analysis plan should record the map $\phi:\mathcal P\to\mathcal D$ from pilot information to the assignment design, including the pilot window, grid, covariance estimator, thresholds, and fallback rules. Conditional on that logged procedure and a separate confirmatory experiment, ordinary fixed-design coverage remains the relevant target.

Let $\mathcal D$ be a finite set of candidate designs and let a pilot sample $\mathcal P$ select
\[
    \widehat d=\phi(\mathcal P)\in\mathcal D,
\]
where $\phi$ includes the target weights, feasible set, bandwidth rule, buffer horizon, ridge constant, and closed-form-versus-calibrated switching threshold. If the confirmatory experiment is then run under $\widehat d$ and the pilot outcomes are not reused to estimate $g$, ordinary design-specific inference remains valid after selection.

\begin{proposition}[Two-stage design selection and coverage]\label{prop:post_selection_coverage}
Suppose $\mathcal D$ is finite and the pilot-selected design $\widehat d=\phi(\mathcal P)$ is chosen before the confirmatory experiment begins. For each fixed $d\in\mathcal D$, let $C_d$ be a confidence set for $\theta_c=c'g$ computed from the confirmatory sample under design $d$. If
\[
    \inf_{d\in\mathcal D}\Pr\{\theta_c\in C_d\mid \widehat d=d\}=1-\alpha+o(1),
\]
with the probability taken over the confirmatory assignment and outcome process, then
\[
    \Pr\{\theta_c\in C_{\widehat d}\}=1-\alpha+o(1).
\]
\end{proposition}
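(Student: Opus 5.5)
The argument is a law-of-total-probability decomposition over the finite design set $\mathcal D$. I would write
\[
    \Pr\{\theta_c\in C_{\widehat d}\}=\sum_{d\in\mathcal D}\Pr\{\widehat d=d\}\,\Pr\{\theta_c\in C_d\mid \widehat d=d\},
\]
using that on the event $\{\widehat d=d\}$ the reported set $C_{\widehat d}$ coincides with $C_d$. Only designs with $\Pr\{\widehat d=d\}>0$ enter the sum, so the conditional probabilities are well defined where they matter. Finiteness of $\mathcal D$ makes this a finite mixture, with no measurability issues beyond those for each fixed $d$.

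Next I would bound each conditional coverage uniformly. The displayed hypothesis gives a lower bound $\inf_d\Pr\{\theta_c\in C_d\mid\widehat d=d\}\ge 1-\alpha-\epsilon_T$ with $\epsilon_T\to0$. Since the mixture weights sum to one, this gives $\Pr\{\theta_c\in C_{\widehat d}\}\ge 1-\alpha-\epsilon_T$.

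For the matching upper bound I would read the hypothesis as stating that each conditional coverage equals $1-\alpha+o(1)$, uniformly over the finitely many $d$. Here the pilot separation does the real work. Because $\widehat d=\phi(\mathcal P)$ is fixed before the confirmatory experiment, and the confirmatory assignment is drawn from the locked law for $d$ independently of $\mathcal P$ (the confirmatory outcomes are not reused in $\phi$), the conditional coverage given $\widehat d=d$ equals the unconditional fixed-design coverage of $C_d$. That coverage is $1-\alpha+o(1)$ by the fixed-design results, namely Proposition~\ref{prop:covariance} with consistent HAC by Proposition~\ref{prop:hac_uniform}, or exact randomization inference in near-boundary regimes. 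With $\max_{d}|\Pr\{\theta_c\in C_d\mid\widehat d=d\}-(1-\alpha)|=o(1)$, the convex combination differs from $1-\alpha$ by at most that maximum, which is $o(1)$.

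The main obstacle is interpretive rather than technical. The hypothesis as written is an infimum, which only gives the lower half of the two-sided statement. So I would state explicitly that the $o(1)$ is uniform over $\mathcal D$, which is automatic when $\mathcal D$ is finite and each design's coverage converges. I would also justify the conditional-equals-unconditional step through independence of the confirmatory randomization from the pilot. If that independence fails, for example because pilot and confirmatory residuals are serially dependent across the stage boundary, I would instead assume the conditional statement directly, as the proposition does.
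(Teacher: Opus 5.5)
Your proposal is correct and follows the paper's own argument, which is the law of iterated expectations: condition on the pilot event $\{\widehat d=d\}$ and apply uniform fixed-design coverage over the finite candidate set. You are also right that the displayed infimum condition alone gives only the lower bound, so the two-sided conclusion needs $\max_{d}|\Pr\{\theta_c\in C_d\mid\widehat d=d\}-(1-\alpha)|=o(1)$; this is exactly the reading implicit in the paper's phrase ``uniform fixed-design coverage.''
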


The proof is the law of iterated expectations: condition on the pilot event $\{\widehat d=d\}$ and use the uniform fixed-design coverage over the finite candidate set. Thus pilot uncertainty affects regret and power, as in Corollary~\ref{cor:pilot_length}, but it need not distort size when the pilot and confirmatory samples are separated and the selection rule is logged. This is the recommended protocol for two-stage experiments.

The result does not cover using the same realized experimental outcomes both to choose among multiple designs, horizons, bandwidths, ridge constants, or local-bias radii and to report an unadjusted confidence interval. If $H$, $K_0$, $M_0$, $\lambda$, $b_T$, or the target menu is tuned after inspecting the confirmatory estimates, the analysis should either use sample splitting, report simultaneous bands over the pre-specified multiverse, or use selection-adjusted inference in the spirit of post-selection simultaneous intervals and conditional selection intervals \citep{BerkEtAl2013,LeeEtAl2016}. A pre-analysis plan should therefore record: the target or menu $W$, the candidate design set $\mathcal D$, the pilot window, the feasible persistence interval, the HAC bandwidth rule and sensitivity grid, the buffer-horizon stopping rule, the ridge grid, and whether the final report is design-invariant path-adjusted LP or a design-specific policy-path LP.

\setcounter{table}{0}
\setcounter{figure}{0}
\setcounter{equation}{0}
\renewcommand{\thetable}{A.\arabic{table}}
\renewcommand{\thefigure}{A.\arabic{figure}}
\makeatletter
\renewcommand{\theHtable}{A.\arabic{table}}
\renewcommand{\theHfigure}{A.\arabic{figure}}
\renewcommand{\theHequation}{A.\arabic{equation}}
\makeatother
\renewcommand{\theequation}{A.\arabic{equation}}
\section{Additional Monte Carlo evidence}\label{app:additional_diagnostics}
\subsection{Finite-path, calibration, and inference diagnostics}

\begin{table}[htbp]
\centering
\caption{Target-specific finite randomization-path risk dispersion.}
\label{tab:target_finite_path_dispersion}
\begin{tabular}{llrrrrr}
\toprule
Target & $T$ & $r^\star$ & p10 & median & p90 & p90 cond. no. \\
\midrule
Immediate & 2,000 & 0.000 & 1.002 & 1.004 & 1.007 & 1.3 \\
Immediate & 17,520 & 0.000 & 1.000 & 1.000 & 1.001 & 1.1 \\
Cumulative & 2,000 & 0.700 & 0.937 & 1.003 & 1.075 & 26.4 \\
Cumulative & 17,520 & 0.700 & 0.978 & 0.999 & 1.026 & 23.7 \\
Delayed & 2,000 & 0.382 & 0.989 & 1.003 & 1.017 & 5.2 \\
Delayed & 17,520 & 0.382 & 0.996 & 1.001 & 1.006 & 4.8 \\
Rebound & 2,000 & 0.314 & 0.964 & 1.001 & 1.046 & 4.0 \\
Rebound & 17,520 & 0.314 & 0.987 & 1.000 & 1.014 & 3.6 \\
Alternating & 2,000 & -0.700 & 0.935 & 1.001 & 1.074 & 26.4 \\
Alternating & 17,520 & -0.700 & 0.977 & 1.000 & 1.024 & 23.6 \\
\bottomrule
\end{tabular}

\begin{minipage}{0.92\linewidth}
Notes: The diagnostic simulates finite stationary Markov assignment paths and recomputes the exact lagged-assignment Gram matrix for each path. Reported values are quantiles of finite-path target risk divided by the corresponding population Markov risk for the same target and persistence. The calculation uses $H=8$ and 500 simulated paths per target and sample length. The Alternating row uses the sign-changing weights $c_h=(-1)^h$ and the feasible endpoint $r^\star=-0.70$.
\end{minipage}
\end{table}

\begin{table}[htbp]
    \centering
    \caption{Target-specific $C$-optimality versus full-vector information criteria.}
    \label{tab:classical_optimality}
    \begin{tabular}{lrrrrrr}
\toprule
Target & C-opt $r$ & D-opt $r$ & A-opt $r$ & E-opt $r$ & iid/C & D/C \\
\midrule
Immediate & 0.000 & 0.000 & 0.000 & 0.000 & 1.00 & 1.00 \\
Cumulative & 0.700 & 0.000 & 0.000 & 0.000 & 3.73 & 3.73 \\
Delayed & 0.382 & 0.000 & 0.000 & 0.000 & 1.34 & 1.34 \\
Rebound & 0.314 & 0.000 & 0.000 & 0.000 & 1.19 & 1.19 \\
\bottomrule
\end{tabular}
    \begin{minipage}{0.92\linewidth}
    Notes: The feasible interval is $[-0.7,0.7]$ and $H=8$. $C$-optimality minimizes $c'Q_H(r)^{-1}c$ for the row target. $D$-, $A$-, and $E$-optimality minimize determinant, trace, and maximum eigenvalue criteria for the full response vector. The last columns report scalar $C$-risk relative to the target-specific optimum.
    \end{minipage}
\end{table}

\begin{table}[!htbp]
\centering
\caption{Residualized aggregate calibration stability across pilot windows.}
\label{tab:residualized_split_stability}
\begin{tabular}{lcccc}
\toprule
Pilot window & Target & Pilot $r^\star$ & Evaluation oracle $r^\star$ & Evaluation relative risk \\
\midrule
Jan--Jun pilot & Immediate & 0.00 & 0.00 & 1.000 \\
Jan--Jun pilot & Cumulative & 0.90 & 0.90 & 1.000 \\
Jan--Jun pilot & Delayed & 0.79 & 0.77 & 1.003 \\
Jan--Jun pilot & Rebound & 0.27 & 0.29 & 1.001 \\
Jul--Dec pilot & Immediate & 0.00 & 0.00 & 1.000 \\
Jul--Dec pilot & Cumulative & 0.90 & 0.90 & 1.000 \\
Jul--Dec pilot & Delayed & 0.77 & 0.79 & 1.008 \\
Jul--Dec pilot & Rebound & 0.29 & 0.27 & 1.001 \\
\bottomrule
\end{tabular}
\begin{minipage}{0.92\linewidth}\emph{Notes:} The aggregate baseline load is residualized on hour-of-day, day-of-week, and month indicators. The persistence grid is $[-0.90,0.90]$ in increments of 0.01. The table estimates residual autocovariances on one half of 2013, selects the calibrated persistence on that pilot half, and evaluates the resulting target risk on the other half. Relative risk is normalized by the evaluation-half oracle over the same first-order Markov grid.
\end{minipage}
\end{table}

\begin{table}[!htbp]
\centering
\caption{Aggregate load-state calibration sensitivity.}
\label{tab:loadstate_calibration_sensitivity}
\begin{tabularx}{\textwidth}{@{}>{\raggedright\arraybackslash}Xlccc@{}}
\toprule
Calibration sample & Target & Fine-grid $r^\star$ & Gain vs iid & \shortstack{Relative risk at\\pooled $r^\star$} \\
\midrule
Pooled & Immediate & 0.00 & 1.00 & 1.000 \\
Pooled & Cumulative & 0.90 & 4.47 & 1.000 \\
Pooled & Delayed & 0.79 & 4.22 & 1.000 \\
Pooled & Rebound & 0.28 & 1.16 & 1.000 \\
Low-load periods & Immediate & 0.00 & 1.00 & 1.000 \\
Low-load periods & Cumulative & 0.90 & 4.14 & 1.000 \\
Low-load periods & Delayed & 0.81 & 4.71 & 1.004 \\
Low-load periods & Rebound & 0.25 & 1.14 & 1.002 \\
Middle-load periods & Immediate & 0.00 & 1.00 & 1.000 \\
Middle-load periods & Cumulative & 0.90 & 4.87 & 1.000 \\
Middle-load periods & Delayed & 0.77 & 3.99 & 1.003 \\
Middle-load periods & Rebound & 0.28 & 1.17 & 1.000 \\
High-load periods & Immediate & 0.00 & 1.00 & 1.000 \\
High-load periods & Cumulative & 0.90 & 3.66 & 1.000 \\
High-load periods & Delayed & 0.78 & 4.12 & 1.001 \\
High-load periods & Rebound & 0.27 & 1.16 & 1.000 \\
\bottomrule
\end{tabularx}%

\begin{minipage}{0.92\linewidth}\emph{Notes:} This diagnostic uses only the aggregate baseline series described in Section~\ref{sec:lcl}. Low-, middle-, and high-load periods are defined by terciles of the aggregate baseline load, after residualization on hour, weekday, and month indicators. The exercise tests whether calibrated residual autocovariances change the persistence recommendation across load states; it is not a household-level heterogeneous-treatment-effect analysis.
\end{minipage}
\end{table}

\begin{table}[htbp]
\centering
\caption{Synthetic transportability diagnostic across residual dynamics.}
\label{tab:synthetic_transportability}
\begin{tabularx}{\textwidth}{@{}>{\raggedright\arraybackslash}Xccc>{\raggedright\arraybackslash}X@{}}
\toprule
Residual DGP & Mean ratio & Max ratio & Changed targets & Calibrated $r^*$: I/C/D/R \\
\midrule
iid & 1.00 & 1.00 & 0/4 & 0.00/0.90/0.38/0.31 \\
AR(1), $\rho=0.5$ & 1.02 & 1.09 & 1/4 & 0.00/0.90/0.55/0.33 \\
AR(1), $\rho=0.8$ & 1.11 & 1.43 & 1/4 & 0.00/0.90/0.69/0.31 \\
AR(2), $(0.5,0.2)$ & 1.06 & 1.26 & 1/4 & 0.00/0.90/0.64/0.32 \\
seasonal 48-lag proxy & 1.07 & 1.29 & 1/4 & 0.00/0.90/0.65/0.32 \\
volatility-cluster proxy & 1.05 & 1.18 & 1/4 & 0.00/0.90/0.60/0.33 \\
Student-$t$(df=5) heavy-tail proxy & 1.04 & 1.15 & 1/4 & 0.00/0.90/0.58/0.33 \\
two-regime switching variance proxy & 1.08 & 1.34 & 1/4 & 0.00/0.90/0.67/0.32 \\
compound heavy-tail/regime proxy & 1.07 & 1.27 & 1/4 & 0.00/0.90/0.64/0.32 \\
\bottomrule
\end{tabularx}

\begin{minipage}{0.92\linewidth}
Notes: Risk ratios compare using the closed-form benchmark design with using the calibrated design under the listed residual autocovariance. Values are averaged over immediate, cumulative, delayed, and rebound contrasts. I/C/D/R reports calibrated persistence for those four targets on the fine grid $[-0.90,0.90]$. The Student-$t$(df=5), two-regime, and compound rows are synthetic robustness probes for tail, volatility-state, and joint tail-regime sensitivity; they are not a second field-data validation.
\end{minipage}
\end{table}

\begin{table}[htbp]
\centering
\caption{Pilot-length sensitivity for calibrated aggregate covariance selection.}
\label{tab:pilot_length_sensitivity}
\begin{tabular}{llcccc}
\toprule
Pilot length & Target & Windows & $r^\star$ range & Median $r^\star$ & Max eval. rel. risk \\
\midrule
1 month & Immediate & 12 & 0.00 & 0.00 & 1.000 \\
1 month & Cumulative & 12 & 0.84--0.90 & 0.90 & 1.175 \\
1 month & Delayed & 12 & 0.75--0.82 & 0.78 & 1.022 \\
1 month & Rebound & 12 & 0.24--0.30 & 0.28 & 1.004 \\
3 months & Immediate & 4 & 0.00 & 0.00 & 1.000 \\
3 months & Cumulative & 4 & 0.90 & 0.90 & 1.000 \\
3 months & Delayed & 4 & 0.77--0.80 & 0.79 & 1.006 \\
3 months & Rebound & 4 & 0.27--0.29 & 0.28 & 1.001 \\
6 months & Immediate & 2 & 0.00 & 0.00 & 1.000 \\
6 months & Cumulative & 2 & 0.90 & 0.90 & 1.000 \\
6 months & Delayed & 2 & 0.77--0.79 & 0.78 & 1.008 \\
6 months & Rebound & 2 & 0.27--0.29 & 0.28 & 1.001 \\
12 months & Immediate & 1 & 0.00 & 0.00 & 1.000 \\
12 months & Cumulative & 1 & 0.90 & 0.90 & 1.000 \\
12 months & Delayed & 1 & 0.79 & 0.79 & 1.000 \\
12 months & Rebound & 1 & 0.28 & 0.28 & 1.000 \\
\bottomrule
\end{tabular}

\begin{minipage}{0.92\linewidth}
Notes: The diagnostic residualizes the aggregate baseline series on hour-of-day, day-of-week, and month indicators, selects persistence on each pilot window, and evaluates the resulting relative risk on the complementary window; the full-year row is evaluated on the full sample. The one-month rows summarize 12 monthly pilots, the three-month rows summarize four calendar quarters, and the six-month rows summarize the two half-year splits. 
\end{minipage}
\end{table}

\begin{table}[htbp]
\centering
\caption{Finite-sample simultaneous-band coverage diagnostic.}
\label{tab:supt_finite_sample}
\begin{tabularx}{\textwidth}{@{}>{\raggedright\arraybackslash}Xrrrrrrr@{}}
\toprule
Method & $r$ & $T$ & Reps & \shortstack{Pointwise\\coverage} & \shortstack{Marginal\\simultaneous} & \shortstack{Sup-$t$\\critical} & \shortstack{Sup-$t$\\simultaneous} \\
\midrule
Gaussian covariance & 0.00 & 2000 & 2000 & 95.0\% & 62.3\% & 2.76 & 94.7\% \\
Gaussian covariance & 0.70 & 2000 & 2000 & 94.8\% & 66.0\% & 2.74 & 95.3\% \\
Gaussian covariance & 0.90 & 2000 & 2000 & 94.9\% & 66.3\% & 2.74 & 95.2\% \\
residual multiplier bootstrap & 0.00 & 2000 & 1000 & 94.7\% & 62.1\% & 2.74 & 93.9\% \\
residual multiplier bootstrap & 0.70 & 2000 & 1000 & 95.0\% & 66.7\% & 2.72 & 95.1\% \\
residual multiplier bootstrap & 0.90 & 2000 & 1000 & 94.9\% & 68.6\% & 2.70 & 94.5\% \\
\bottomrule
\end{tabularx}

\begin{minipage}{0.94\linewidth}Notes: The Gaussian-covariance rows use 2,000 Gaussian switchback simulations with $T=2{,}000$, $H=8$, true response zero, and the balanced Markov assignment law. The residual-multiplier rows add 1,000 outer simulations and 149 multiplier draws per replication, using the realized design matrix and OLS residuals to compute a conditional sup-$t$ critical value. Pointwise coverage averages horizon-by-horizon marginal coverage; the two simultaneous columns report whether all nine horizons are covered in a replication. The exercise checks the finite-sample cost of presenting marginal LP bands as curve-level bands.
\end{minipage}
\end{table}

\begin{table}[htbp]
\centering
\caption{Sparse-budget cumulative-target finite randomization-path risk dispersion.}
\label{tab:sparse_finite_path}
\begin{tabular}{rrrrrrr}
\toprule
$p_0$ & $T$ & $r_{\max}$ & median & p90 & p90 cond. no. & p10 active spells \\
\midrule
0.05 & 2{,}000 & 0.825 & 1.034 & 1.817 & 86.9 & 12 \\
0.05 & 17{,}520 & 0.825 & 1.007 & 1.197 & 64.6 & 131 \\
\addlinespace
0.10 & 2{,}000 & 0.815 & 1.022 & 1.417 & 70.8 & 27 \\
0.10 & 17{,}520 & 0.815 & 1.003 & 1.111 & 57.2 & 272 \\
\addlinespace
0.14 & 2{,}000 & 0.806 & 1.009 & 1.303 & 61.6 & 40 \\
0.14 & 17{,}520 & 0.806 & 1.004 & 1.091 & 52.2 & 386 \\
\addlinespace
0.25 & 2{,}000 & 0.778 & 1.006 & 1.155 & 46.4 & 75 \\
0.25 & 17{,}520 & 0.778 & 0.999 & 1.047 & 41.0 & 704 \\
\bottomrule
\end{tabular}

\begin{minipage}{0.94\linewidth}Notes: Each row uses 2,000 stationary unbalanced Markov paths, $H=8$, the cumulative target $c=\mathbf 1$, and the largest persistence allowed by a six-half-hour expected active-spell cap. Risk ratios divide the exact realized-path risk, computed from centered assignments $A_t-p_0$, by the corresponding population Markov risk. The condition number is for the realized lagged-assignment Gram matrix. Active spells count entries into the active state, including an initially active path. No simulated Gram matrix was singular.
\end{minipage}
\end{table}

\subsection{Multi-objective and multi-arm diagnostics}

\begin{figure}[htbp]
    \centering
    \includegraphics[width=0.86\textwidth]{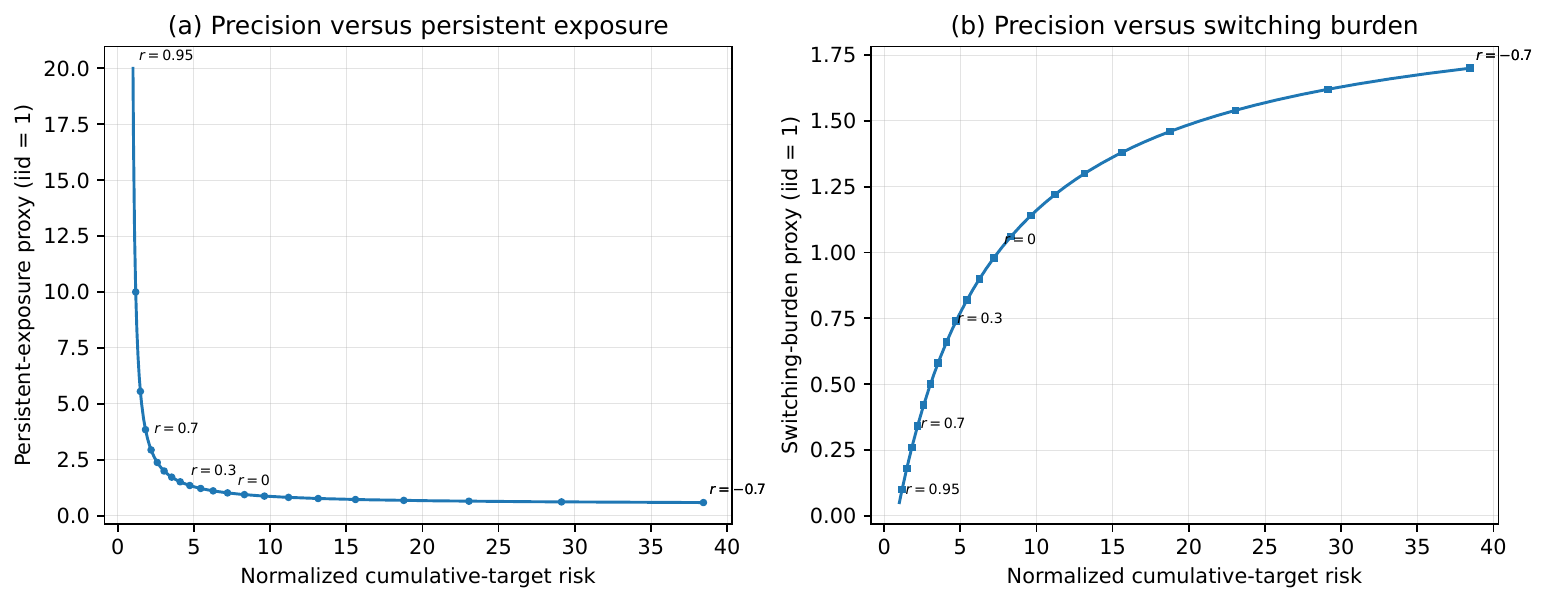}
    \caption{Illustrative Pareto frontier for cumulative-target precision versus non-statistical costs. Precision risk is normalized to the best cumulative-target Markov risk on the grid. The persistent-exposure proxy is normalized to iid assignment and rises with expected episode length; the switching-burden proxy is normalized to iid assignment and falls with persistence. The frontier is a planning diagnostic, not a structural business-loss estimate.}
    \label{fig:pareto-frontier}
\end{figure}

\begin{figure}[htbp]
    \centering
    \includegraphics[width=0.86\textwidth]{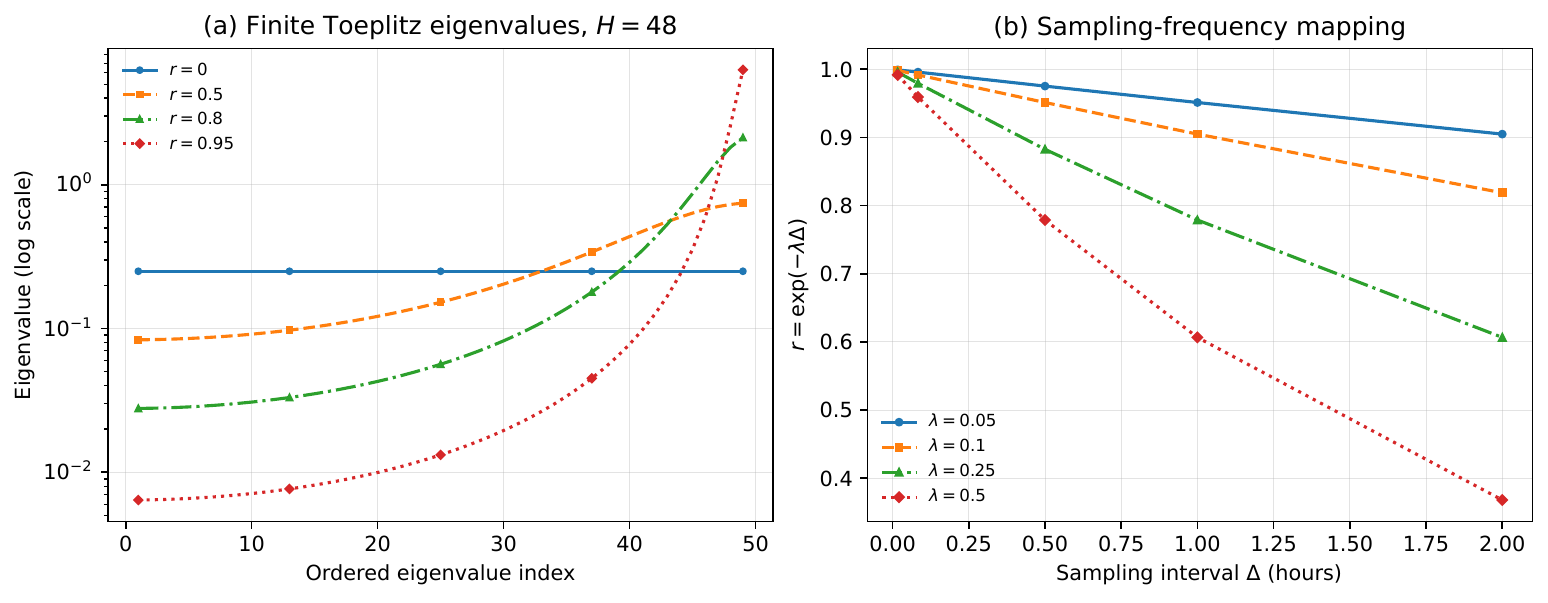}
    \caption{Spectral and sampling-frequency diagnostics. The left panel shows ordered eigenvalues of the finite AR(1)-Toeplitz information matrix $Q_{48}(r)$ for several persistence values. The right panel translates fixed continuous-time switching rates into the discrete persistence $r=\exp(-\lambda\Delta)$ as the sampling interval $\Delta$ changes.}
    \label{fig:spectral-sampling}
\end{figure}

For a stationary three-arm chain with uniform arm shares, stack two orthonormal Helmert contrasts at each lag. Under uniform switching among the two alternative arms, the lag-$k$ block is $(1/3)r_s^k I_2$, where $r_s=(3s-1)/2$ indexes the common stay probability $s$. A directional-bias parameter $\delta=0.80$, which allocates 90 percent of switches clockwise and 10 percent counterclockwise, preserves uniform shares but makes the lag blocks matrix-valued. Figure~\ref{fig:multiarm-information} compares the two cases at $H=8$ over the common feasible grid $r_s\in[-0.45,0.90]$. For any fixed unit arm-contrast direction, applying the symmetric-chain grid rule to the directional chain raises risk by 4.8 percent for the immediate target, 0.0 percent for the cumulative target, 2.3 percent for the delayed and rebound targets, and 45.1 percent for the alternating target. A sensitivity grid $\delta\in\{0,0.2,0.4,0.6,0.8\}$ shows that the alternating-target penalty is 1.00 through $\delta=0.4$, rises to 1.04 at $\delta=0.6$, and reaches 1.45 at $\delta=0.8$. Monte Carlo averages of the realized block Gram matrix, using 1,000 paths at $T=2{,}000$ and 250 paths at $T=17{,}520$, differ from the analytical block-Toeplitz matrix by at most 0.46 percent in relative Frobenius norm. This diagnostic establishes the need for a matrix-valued calibrated selector in sufficiently asymmetric multi-arm designs; it is not a general multi-arm optimality theorem.

\begin{figure}[htbp]
    \centering
    \includegraphics[width=0.98\textwidth]{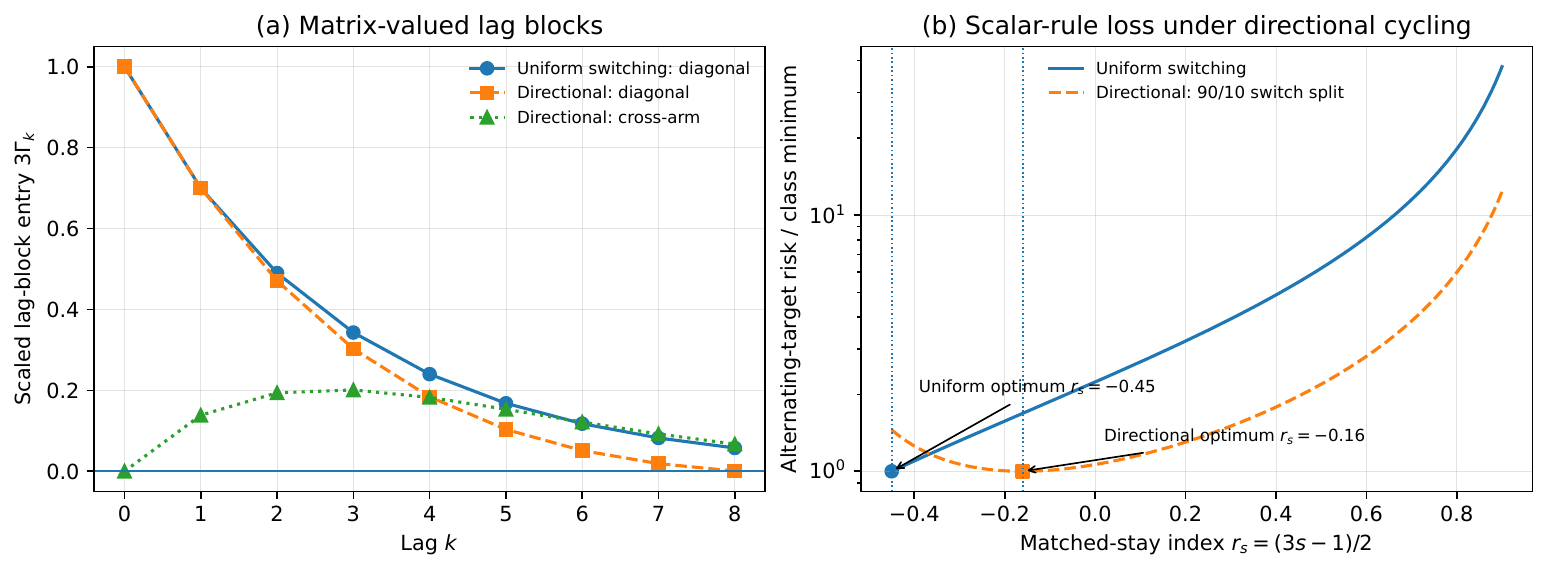}
    \caption{Three-arm block-Toeplitz information diagnostic. Panel (a) plots selected entries of the contrast-coded lag block $3\Gamma_k$ at matched-stay index $r_s=0.70$. Uniform switching yields scalar blocks, whereas $\delta=0.80$ (90 percent clockwise and 10 percent counterclockwise, conditional on switching) induces nonzero cross-arm entries. Panel (b) reports alternating-target risk normalized by the minimum on $r_s\in[-0.45,0.90]$ within each design class. The symmetric-chain grid rule selects $r_s=-0.45$; the directional chain selects $r_s=-0.16$, and imposing the symmetric rule raises directional-chain risk by 45.1 percent.}
    \label{fig:multiarm-information}
\end{figure}

The deterministic large-horizon conditioning grid supports the same conclusion: high-persistence, large-$H$ targets may require regularization or a shorter pre-specified target menu.

\section{Assumptions and proofs}\label{app:proofs}

\begin{assumption}[Benchmark assignment process]
The assignment process is a stationary balanced two-state Markov chain with $\Pr(A_t=1)=1/2$, $\Pr(A_t=A_{t-1})=s$, and $r=2s-1\in(-1,1)$.
\end{assumption}

\begin{assumption}[Finite-memory response benchmark]
For the analytical benchmark, the outcome satisfies
\[
    Y_t=\mu_t+\sum_{\ell=0}^{H}g_\ell A_{t-\ell}+\varepsilon_t,
\]
where $E[\varepsilon_t\mid A_{-\infty:\infty}]=0$. For the closed-form covariance result, we impose the conditional benchmark moments $E[\varepsilon_t^2\mid A_{-\infty:\infty}]=\sigma^2$ and $E[\varepsilon_t\varepsilon_s\mid A_{-\infty:\infty}]=0$ for $t\ne s$. The HAC extension replaces these benchmark moments with a standard mixing and long-run covariance condition for $X_t\varepsilon_t$.
\end{assumption}

\paragraph{Proof of Lemma \ref{lem:uniform_markov_mixing}.}
For a two-state stationary Markov chain with transition matrix $P(p,r)$, the eigenvalues are $1$ and $r$. On the centered one-dimensional subspace, $P^k$ therefore contracts at rate $|r|^k$. Because $p$ is bounded away from zero and one and the feasible set is bounded away from the nonergodic endpoints, there is a finite constant $C$ such that the total-variation distance between the conditional distribution of $A_{t+k}$ given $A_t$ and the stationary distribution is at most $C \rho_{\max}^k$ uniformly over the class. The alpha-mixing coefficient is bounded by this total-variation distance, giving the displayed bound for $A_t$. The lag vector $X_t=(\widetilde A_t,\ldots,\widetilde A_{t-H})$ only shifts the separation between sigma fields, so its coefficient is bounded by the same expression with $k-H$ for $k>H$. Standard product and measurable-transform inequalities for alpha-mixing processes give the stated score-process implication when the residual process is independent of assignment conditional on controls or jointly alpha-mixing with uniformly summable coefficients; see \citet[Ch.~14]{Davidson1994} and \citet[Ch.~3]{White2001} for these closure arguments.

\paragraph{Proof of Lemma \ref{lem:spell_lower_bound}.}
Let $B_t=\mathbf 1\{A_{t-1}=0,A_t=1\}$. Under stationarity,
\[
    E B_t=P(A_{t-1}=0)P(A_t=1\mid A_{t-1}=0)=(1-p)p(1-r).
\]
The finite-state Markov chain is ergodic for every fixed interior parameter, so the ergodic theorem gives $T^{-1}\sum_{t=2}^T B_t\to (1-p)p(1-r)$ almost surely. Uniformly over a compact interior class, $(1-p)p(1-r)$ is bounded below by a positive constant; hence $N_{1T}\ge cT$ with probability approaching one for any smaller constant $c$. If $r=r_T\to1$, the expected number of active-spell starts is of order $T p(1-p)(1-r_T)$. This motivates the operational many-spell diagnostic used above: require this quantity to diverge when appealing to many-episode normal approximations. If it is bounded, the assignment path has only a bounded expected number of active episodes.

\paragraph{Proof of Lemma \ref{lem:target_invariance}.}
The population least-squares coefficient in the path-adjusted regression is
\[
    \bar g=Q_X^{-1}E[X_tY_t].
\]
Substituting $Y_t=\alpha+X_t'g+\varepsilon_t$ and using $E[X_t\varepsilon_t]=0$ and the residualized convention for constants and controls gives
\[
    \bar g=Q_X^{-1}E[X_tX_t']g=g.
\]
The assignment mechanism therefore affects $Q_X$ and the covariance of the estimator, but not the population coefficient as long as $Q_X$ is nonsingular.

\paragraph{Proof of Proposition \ref{prop:naive_mixture}.}
By the centered finite-memory representation in Proposition~\ref{prop:naive_mixture},
\[
    Y_{t+h}=\mu_h+\sum_{\ell=0}^{H}g_\ell \widetilde A_{t+h-\ell}+\varepsilon_{t+h}.
\]
After residualizing constants and deterministic controls, the population coefficient from regressing $Y_{t+h}$ on $\widetilde A_t$ alone is
\[
    \beta_h^{\rm naive}(r)=\frac{\operatorname{Cov}(Y_{t+h},\widetilde A_t)}{\operatorname{Var}(\widetilde A_t)}.
\]
The covariance between $\widetilde A_t$ and $\widetilde A_{t+h-\ell}$ is $\sigma_A^2r^{|h-\ell|}$, and the error term is orthogonal to assignment. Dividing by $\sigma_A^2$ gives the stated mixture formula.

\paragraph{Proof of Proposition \ref{prop:unbalanced}.}
For a stationary two-state Markov chain with stationary treatment share $p$ and second eigenvalue $r$, the centered process $\widetilde A_t=A_t-p$ satisfies $E\widetilde A_t=0$, $\operatorname{Var}(\widetilde A_t)=p(1-p)$, and $E[\widetilde A_t\mid \widetilde A_{t-1}]=r\widetilde A_{t-1}$. Iterating gives $E[\widetilde A_t\widetilde A_{t-k}]=p(1-p)r^{|k|}$, which yields \eqref{eq:unbalanced_Q}. The transition probabilities displayed above are obtained by imposing the stationary probability $p$ and eigenvalue $r$ on a two-state transition matrix.

\paragraph{Proof of Corollary \ref{cor:budgeted_sparse}.}
Proposition~\ref{prop:unbalanced} gives $Q_H(p_0,r)=p_0(1-p_0)R_H(r)$. Inverting yields $Q_H(p_0,r)^{-1}=R_H(r)^{-1}/\{p_0(1-p_0)\}$. Hence the factor $p_0(1-p_0)$ affects the scale of the contrast variance but not the persistence minimizer when $p_0$ is fixed. The active-state transition probability is $p_0+(1-p_0)r$, so the probability of leaving the active state is $(1-p_0)(1-r)$. The active spell length, conditional on starting in the active state and including the first active period, is geometric on $\{1,2,\ldots\}$ with this exit probability. Its expectation is therefore $\{(1-p_0)(1-r)\}^{-1}$. Imposing an upper bound $L_{\max}$ gives the displayed inequality.

\paragraph{Proof of Corollary \ref{cor:active_share_budget}.}
For fixed persistence and target weights, Corollary~\ref{cor:budgeted_sparse} implies that the variance criterion is a positive constant times $[p(1-p)]^{-1}$. The function $p(1-p)$ is increasing on $(0,1/2]$ and maximized at $p=1/2$. The stated conclusions follow immediately.

\paragraph{Proof of Proposition \ref{prop:info}.}
Because the chain is balanced, $E[\widetilde A_t]=0$ and $\operatorname{Var}(\widetilde A_t)=1/4$. The transition matrix has second eigenvalue $r=2s-1$, hence $\operatorname{Corr}(\widetilde A_t,\widetilde A_{t-k})=r^k$. This yields $E[X_tX_t']=Q_H(r)$. The displayed inverse is the standard inverse of an AR(1) correlation matrix; multiplying it by $(r^{|i-j|})_{i,j}$ gives the identity.

\paragraph{Proof of Proposition \ref{prop:covariance}.}
Under the Markov assignment process, $T^{-1}\sum_tX_tX_t'\to_p Q_H(r)$. Conditional serial uncorrelatedness implies that the score process has zero autocovariances given the assignment path, so its limiting covariance is $\sigma^2Q_H(r)$. Hence the score term satisfies a central limit theorem,
\[
    T^{-1/2}\sum_tX_t\varepsilon_t\Rightarrow N(0,\sigma^2Q_H(r))
\]
in the homoskedastic martingale-difference benchmark. The usual least-squares expansion gives
\[
    \sqrt T(\hat g-g)=\left(T^{-1}\sum_tX_tX_t'\right)^{-1}T^{-1/2}\sum_tX_t\varepsilon_t+o_p(1),
\]
so the limiting covariance is $\sigma^2Q_H(r)^{-1}$. With serial correlation, $\sigma^2Q_H(r)$ is replaced by the long-run covariance $\Omega_H(r)$.

\paragraph{Proof of Proposition \ref{prop:optimal}.}
Substituting the tridiagonal inverse into $c'Q_H(r)^{-1}c$ gives
\[
    \mathcal R_c(r)\propto \frac{a+br^2-2dr}{1-r^2}.
\]
Differentiating yields
\[
    \mathcal R_c'(r)\propto (a+b)r-d(1+r^2).
\]
Thus any interior stationary point solves $dr^2-(a+b)r+d=0$. If $d=0$, the derivative has the sign of $r$ and the unique interior minimizer is $r=0$. If $0<|d|<(a+b)/2$, the quadratic has exactly one root in $(-1,1)$,
\[
    r^{\mathrm{int}}_c=\frac{(a+b)-\sqrt{(a+b)^2-4d^2}}{2d}.
\]
The second root lies outside $(-1,1)$, so the feasible optimum is the projection of $r^{\mathrm{int}}_c$ onto $\mathcal R$. If $|d|=(a+b)/2$, the stationary point is the boundary value $r=\operatorname{sign}(d)$ and is not feasible in the open Markov-chain parameter space. The objective approaches its infimum as $r$ approaches that boundary, so the feasible optimum is the nearest endpoint of $\mathcal R$. Finally, $|d|\le(a+b)/2$ follows from $2|c_{j-1}c_j|\le c_{j-1}^2+c_j^2$ summed over adjacent pairs.

\paragraph{Proof of Lemma \ref{lem:spectral_endpoint}.}
By Parseval's identity,
\[
\frac{1}{2\pi}\int_{-\pi}^{\pi}\frac{|C(e^{-i\omega})|^2}{f_A(\omega;r)}\,d\omega
=\frac{4}{1-r^2}\frac{1}{2\pi}\int_{-\pi}^{\pi}|1-re^{-i\omega}|^2|C(e^{-i\omega})|^2\,d\omega.
\]
Expanding $|1-re^{-i\omega}|^2=1+r^2-r(e^{i\omega}+e^{-i\omega})$ gives
\[
\frac{4}{1-r^2}\left\{(1+r^2)\sum_{j=0}^{H}c_j^2-2r\sum_{j=1}^{H}c_{j-1}c_j\right\}.
\]
Subtracting the tridiagonal inverse expression in \eqref{eq:closed_form_risk} leaves $4r^2(c_0^2+c_H^2)/(1-r^2)$, which proves the identity.

\paragraph{Proof of Proposition \ref{prop:omitted_bias}.}
The population coefficient from the truncated regression is
\[
    \bar g_H(r)=Q_{XX}(r)^{-1}E[X_tY_t].
\]
Since $Y_t=X_t'g+Z_t'\gamma+\varepsilon_t$ and $E[X_t\varepsilon_t]=0$,
\[
    \bar g_H(r)=g+Q_{XX}(r)^{-1}Q_{XZ}(r)\gamma.
\]
For a contrast $c'g$, the omitted-lag bias is
\[
    c'Q_{XX}(r)^{-1}Q_{XZ}(r)\gamma.
\]
Maximizing its square over $\|\gamma\|_2\le M$ gives the stated norm expression.

\paragraph{Closed-form part of Proposition \ref{prop:omitted_bias}.}
For the AR(1)-Toeplitz covariance matrix, the linear projection of $\widetilde A_{t-H-m}$ on $X_t=(\widetilde A_t,\ldots,\widetilde A_{t-H})'$ depends only on the last included lag:
\[
    E[\widetilde A_{t-H-m}\mid X_t]_{\rm lin}=r^m\widetilde A_{t-H}.
\]
Equivalently, the $m$th row of $Q_{ZX}(r)Q_{XX}(r)^{-1}$ is $r^m e_H'$. Multiplying by $c$ gives $c_Hr^m$. The worst-case squared bias over $\|\gamma\|_2\le M$ is the squared norm of this vector times $M^2$, which gives the displayed geometric sum.

\paragraph{Proof of Corollary \ref{cor:buffer}.}
If the reported weights are zero on the final estimated horizon, then $c_H=0$. Proposition \ref{prop:omitted_bias} implies that the omitted-tail projection vector has norm zero for the reported contrast. Hence the worst-case omitted-tail bias term is zero.

\paragraph{Proof of Proposition \ref{prop:calibrated_selector}.}
Let $r_W^\star\in\arg\min_{r\in\mathcal R}R_W(r)$. Approximate optimality and uniform convergence imply
\[
R_W(\hat r_W)\le \widehat R_W(\hat r_W)+o_p(1)
\le \widehat R_W(r_W^\star)+o_p(1)
\le R_W(r_W^\star)+o_p(1).
\]
The reverse inequality $R_W(\hat r_W)\ge \inf_{r\in\mathcal R}R_W(r)$ is deterministic, giving oracle-risk convergence. If the minimizer is unique and separated, standard argmin consistency gives $\hat r_W\to_p r_W^\star$.

\paragraph{Proof of Proposition \ref{prop:menu_robust}.}
Continuity of each $\mathcal R_c$ and compactness of $\mathcal R$ imply that the normalized loss $L_{\mathcal C}(r)$ is continuous and attains a minimum. Uniform convergence over the finite menu implies uniform convergence of the plug-in normalized loss because the denominators $\inf_{u\in\mathcal R}\mathcal R_c(u)$ are positive and their plug-in counterparts converge uniformly. Approximate optimality of $\widehat r_{\mathcal C}$ then gives
\[
    L_{\mathcal C}(\widehat r_{\mathcal C})
    \le \widehat L_{\mathcal C}(\widehat r_{\mathcal C})+o_p(1)
    \le \inf_{r\in\mathcal R}\widehat L_{\mathcal C}(r)+o_p(1)
    \le \inf_{r\in\mathcal R}L_{\mathcal C}(r)+o_p(1),
\]
while the reverse inequality holds by definition of the infimum. This proves oracle menu-risk convergence.

\paragraph{Proof of Proposition \ref{prop:realized_schedule}.}
Conditional on a fixed realized assignment path, the residualized least-squares estimator based on the Moore--Penrose inverse satisfies
\[
    \hat g-g=(X'X)^+X'\varepsilon
\]
for estimable linear combinations of $g$, where \(X'X=T\hat Q_T\). In the homoskedastic benchmark, \(E[\varepsilon\varepsilon'\mid X]=\sigma^2 I\), so the covariance on the estimable subspace is
\[
    \operatorname{Var}(\hat g\mid X)=\sigma^2(X'X)^+=\frac{\sigma^2}{T}\hat Q_T^+.
\]
If $c\in\operatorname{col}(X'X)$, the conditional variance of \(c'\hat g\) is therefore \((\sigma^2/T)c'\hat Q_T^+c\). If $c$ is not in that column space, the contrast is not fully estimable without regularization or additional identifying information; the ridge diagnostic above is then the conservative object to report.

\section{Pre-analysis protocol}\label{app:pap-checklist}
The pre-analysis plan should record the target, design menu, pilot or calibration source, tuning constants, estimator, inference route, and simulation design before the confirmatory experiment.
\begin{itemize}
    \item \emph{Target and loss.} State the primary LP contrast or curve, the weight matrix $W$, any secondary targets, and the target-level multiplicity rule.
    \item \emph{Design menu.} Record the feasible persistence interval, active-share or run-length constraints, non-Markov alternatives, and any threshold for switching from the benchmark to calibrated selection.
    \item \emph{Calibration and tuning.} Specify the pilot or historical source for $\widehat V(d)$, HAC bandwidth caps, ridge or regularization grid, buffer horizon, and numerical conditioning checks.
    \item \emph{Estimator and inference.} Pre-specify the LP estimator, residualization controls, construction of simultaneous bands, the randomization-first trigger, and a selection-adjusted or sample-split protocol.
    \item \emph{Documentation and governance.} Record the assignment seed, exclusions, preprocessing rules, analysis specification, numerical diagnostics, and finite-sample balance reports.
\end{itemize}

\section{Low Carbon London calibration details}\label{app:lcl-details}

\subsection{Calendar, tariff-state, and information diagnostics}
Residential demand response is often state dependent: weekday and weekend behavior, seasonal load, and peak-period load can imply different dynamic response shapes. To make this limitation quantitative, we use a small design-sensitivity calculation in which the reporting object is a weighted weekday/weekend menu. The calculation is not an estimate of heterogeneous treatment effects in the LCL data. It asks how the common persistence recommendation would move if the pre-analysis reporting target assigned different horizon-weight vectors to two calendar states. The common-persistence recommendation ranges from iid assignment ($r^*=0$) to persistent assignment ($r^*=0.693$), confirming that state-specific reporting objects can materially change the design rule.

The calculation clarifies the implementation choice. If a single common assignment process must be used, the researcher should pre-specify weights over state-specific targets and minimize the weighted HW-LP risk. If the platform can randomize with state-dependent persistence, the design problem becomes a time-inhomogeneous Markov assignment problem and must be evaluated with the realized state process and the corresponding residualized covariance matrix. Because those states are calendar states known before assignment, this extension can preserve exogeneity. It does, however, replace the stationary Toeplitz information matrix with a time-inhomogeneous design matrix, so closed-form persistence rules are generally lost and HAC inference must be based on the realized or calibrated score process. The rule should therefore be planned and logged rather than chosen adaptively from outcomes.

The observed Low Carbon London tariff schedule is not used for causal identification, but it is useful as a descriptive design benchmark. The tariff workbook yields high-price, low-price, and non-normal binary active-half-hour indicators. Their active shares, lag-one persistence, event counts, and active-spell lengths reveal signals that are sparse and persistent relative to balanced Markov assignment, precisely the kind of design feature that the HW-LP criterion is meant to evaluate.

We translate these observed schedules into the information-matrix language developed above. For each active-half-hour indicator, we compute
\[
    \widehat Q_{\mathrm{obs}}=(T-H)^{-1}\sum_t X_tX_t'
\]
after centering by the observed active share, and evaluate the same horizon-weighted contrasts used in the design comparisons. The entries are descriptive information diagnostics, not causal estimates. They show that the actual tariff signals are operationally realistic but information-poor for some LP contrasts because the active shares are small and the signals are highly persistent. This diagnostic is preferable to treating the historical tariff as an optimized experiment: the historical trial was not designed to minimize the present HW-LP criterion.

\begin{proposition}[Realized-schedule information diagnostic]\label{prop:realized_schedule}
Condition on a fixed assignment path and let $\widehat Q_T=T^{-1}\sum_t X_tX_t'$ be the empirical second-moment matrix of the residualized lagged assignment vector. In the homoskedastic benchmark, for any estimable contrast $c\in\operatorname{col}(\widehat Q_T)$, the conditional variance of the least-squares estimator of $c'g$ is proportional to $c'\widehat Q_T^+c/T$. If the contrast has components outside $\operatorname{col}(\widehat Q_T)$, the Moore--Penrose diagnostic is not a full variance for that contrast; a ridge-stabilized diagnostic $c'(\widehat Q_T+\lambda I)^{-1}c$ should be reported instead.
\end{proposition}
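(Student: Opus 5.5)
The plan is to work conditionally on the realized assignment path, so that the residualized lag matrix $X$ (rows $X_t'$) is fixed. In the homoskedastic benchmark the errors then satisfy $E[\varepsilon\mid X]=0$ and $E[\varepsilon\varepsilon'\mid X]=\sigma^2 I$. The argument is a finite-sample least-squares calculation: no Markov structure or asymptotics from Proposition~\ref{prop:covariance} is needed.

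\emph{Step 1 (estimator).} I would write the minimum-norm least-squares estimator as $\hat g=(X'X)^+X'Y$. Substituting $Y=Xg+\varepsilon$ gives $\hat g=(X'X)^+X'Xg+(X'X)^+X'\varepsilon$. Here $(X'X)^+X'X=P$ is the orthogonal projector onto $\operatorname{col}(X'X)=\operatorname{row}(X)$. For $c$ in that column space, $c'P=c'$, so $c'\hat g-c'g=c'(X'X)^+X'\varepsilon$. This shows the contrast is unbiased, i.e.\ estimable, exactly when $c\in\operatorname{col}(\widehat Q_T)$. Note that $\operatorname{col}(\widehat Q_T)=\operatorname{col}(X'X)$ because $\widehat Q_T=T^{-1}X'X$.

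\emph{Step 2 (variance).} The conditional variance is
\[
\sigma^2 c'(X'X)^+X'X(X'X)^+c=\sigma^2c'(X'X)^+c,
\]
using the Moore--Penrose identity $A^+AA^+=A^+$ with $A=X'X$. Since $(T\widehat Q_T)^+=T^{-1}\widehat Q_T^+$, this equals $(\sigma^2/T)\,c'\widehat Q_T^+c$, which is the claimed proportionality. I would add one remark: if $c'g$ is estimable, then by Gauss--Markov any other least-squares solution, i.e.\ any other generalized inverse, yields the same value of $c'\hat g$. The diagnostic therefore does not depend on using the minimum-norm choice.

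\emph{Step 3 (non-estimable case).} If $c\notin\operatorname{col}(\widehat Q_T)$, decompose $c=Pc+(I-P)c$ with $(I-P)c\ne0$. Then $E[c'\hat g\mid X]-c'g=-c'(I-P)g$. This bias is nonzero for some $g$ and is arbitrary over $g$, so $c'\widehat Q_T^+c$ captures only the variance of the estimable part $Pc$ and is not a full risk for $c'g$. For the ridge recommendation, $(\widehat Q_T+\lambda I)^{-1}$ exists for every $\lambda>0$, so the diagnostic is always finite. Moreover it grows like $\lambda^{-1}\|(I-P)c\|^2$ as $\lambda\to0$, which makes non-estimability visible rather than hidden.

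The main obstacle is not computational. It is making the estimability statement precise, in particular identifying $\operatorname{col}(\widehat Q_T)$ with the row space of $X$ and showing that the pseudoinverse variance is invariant to the choice of generalized inverse. I would handle both through the projector $P$ as above.
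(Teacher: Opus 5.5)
Your proposal is correct and matches the paper's proof: condition on the realized design, write $c'\hat g-c'g=c'(X'X)^+X'\varepsilon$ for $c\in\operatorname{col}(X'X)$, collapse the sandwich to $\sigma^2c'(X'X)^+c$ using $A^+AA^+=A^+$, and rescale with $X'X=T\widehat Q_T$. Your projector $P$ spells out what the paper only asserts: why the error representation holds on the estimable subspace, the bias $-c'(I-P)g$ off that subspace, and the $\lambda^{-1}\|(I-P)c\|^2$ blow-up of the ridge diagnostic. One small point is that the invariance across least-squares solutions comes from $c\in\operatorname{row}(X)$ being orthogonal to $\operatorname{null}(X)$ rather than from Gauss--Markov, though the claim itself is right.
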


Proposition~\ref{prop:realized_schedule} is the basis for the observed-schedule diagnostics: the historical tariff schedules are treated as fixed assignment paths and their lag matrices are evaluated against the same horizon-weighted targets as the candidate designs. For the ridge diagnostic we use $\lambda=\eta\lambda_{\max}(\widehat Q_T)$ with $\eta=10^{-3}$ as the default and report sensitivity at $10^{-4}$ and $10^{-2}$. This scale-free rule regularizes nearly unobserved lag directions without changing well-identified directions. A replay design that injects response paths under the historical indicators is useful as an additional sensitivity check, but we use the information diagnostic because it cleanly separates design informativeness from the particular replay normalization.

Finite-sample design diagnostics for representative candidate designs confirm the information-matrix mechanism behind the empirical results: alternating and persistent designs have larger condition numbers than iid assignment, while sparse designs have smaller active shares and therefore lower information scale. This compact diagnostic complements the scalar-MSE and selector tables without duplicating them.

The injected paths are calibrated to be economically visible but not extreme relative to the baseline load series. The response-path scale diagnostic measures the maximum absolute horizon effect relative to the normal-tariff baseline mean. Across scenarios, the maximum absolute horizon effect ranges from 7.0 to 16.4 percent of the baseline mean. The calibrated evaluation therefore translates the analytical design rules into a realistic high-frequency load environment while preserving known ground truth for the target response path.

Taken together, the empirical results imply a target-contingent decision rule. If the reported object is a single scalar target, the relevant comparison is Table~\ref{tab:lcl-key-results}. If the experiment must satisfy an active-share budget, the sparse Markov table provides the more relevant benchmark. If calendar-state response heterogeneity is central, the appendix state-specific sensitivity calculation indicates whether a common-persistence rule is robust. If the question is how informative an already implemented tariff schedule would have been for a modern LP target, the observed-schedule diagnostic is the appropriate object. These are related design questions, but they are not interchangeable.

The empirical lesson is the same as the theoretical one: specify the dynamic object first, then choose the assignment path. ATE-oriented randomization is appropriate when the reported object is contemporaneous. It is not generally appropriate when the object is cumulative, delayed, or rebound-shaped. Conversely, persistent assignment is useful for smooth dynamic responses but can be inefficient for contrasts that require separating adjacent horizons.

\subsection{Pilot-window stability and external validity}
The calibrated selector is intended to be estimated on pilot information and then used for a confirmatory design. To check whether the aggregate LCL calibration is dominated by one part of the year, Appendix Table~\ref{tab:residualized_split_stability} splits the residualized aggregate baseline into two pilot/evaluation windows. The baseline series is residualized on hour-of-day, day-of-week, and month indicators before the residual autocovariance is estimated. Selecting $r$ on one half and evaluating target risk on the other half leaves the calibrated recommendation unchanged for immediate and cumulative targets, moves the delayed recommendation only between $0.77$ and $0.79$, and moves the rebound recommendation only between $0.27$ and $0.29$. The largest evaluation loss in the table is below one percent. The fine grid therefore reduces the concern that the earlier stability result was driven by a coarse candidate menu. This does not establish household-level transportability, but it supports the use of the LCL exercise as an aggregate covariance stress test rather than a result specific to one sample split. Because the response paths are known by construction, the exercise evaluates covariance and design risk rather than structural demand responses. It therefore cannot validate structural demand recovery, household-level elasticities, or the historical tariff effect. The LCL results calibrate design risk under realistic half-hourly covariance; they are not forecasts of response magnitudes in contemporary demand-response programs.

Appendix Table~\ref{tab:pilot_length_sensitivity} reports a more granular pilot-length diagnostic using one-month, three-month, six-month, and full-year pilot windows. The one-month pilots are a demanding stress test: they occasionally under-select the cumulative target's upper-endpoint persistence and can lose up to 17.5 percent relative to the complementary-window oracle. For delayed and rebound targets, however, even the one-month windows have maximum evaluation losses of only 2.3 percent and 0.4 percent. Three-month or longer pilots are stable for all four targets, with delayed and rebound recommendations remaining in narrow ranges around the full-year calibrated values. This diagnostic supports a practical rule: use very short pilots for screening, but use at least a quarter of high-frequency baseline data when the cumulative target can bind at the feasible persistence cap.

Appendix Table~\ref{tab:loadstate_calibration_sensitivity} adds a second calibration check using load-state slices of the same residualized aggregate series. The pooled recommendation remains essentially stable across low-, middle-, and high-load periods: immediate targets select iid assignment, cumulative targets select the upper fine-grid endpoint, delayed targets select high persistence around $0.77$--$0.81$, and rebound targets select moderate persistence around $0.25$--$0.28$. Using the pooled recommendation within each load state costs at most about four tenths of one percent relative to the load-state oracle in this aggregate diagnostic. Because the exercise is based on aggregate baseline dynamics rather than household-level causal contrasts, the table is a residual-covariance stability check for the aggregate target rather than household-level heterogeneity evidence.

Appendix Table~\ref{tab:synthetic_transportability} adds a synthetic transportability diagnostic that is independent of the London baseline. It holds the Markov assignment class and four reporting targets fixed, but changes the residual score autocovariance across iid, AR(1), AR(2), seasonal, volatility-clustered, heavy-tailed, and regime-switching proxies. The calibrated selector equals the closed-form benchmark under iid residuals, but delayed-window recommendations move toward higher persistence under serially dependent or regime-switching residual dynamics. The heavy-tail, two-regime, and compound rows stress-test non-Gaussian tails, state-dependent variance, and their joint effect rather than adding a second field dataset; they are meant to probe transportability of the covariance logic, not contemporary demand-response magnitudes. The table therefore supports the calibrated design argument: the closed form explains the target-shape margin, while calibrated covariance selection decides whether realistic dynamics move the field recommendation.

The Low Carbon London exercise is a design evaluation rather than a causal analysis of the observed tariff. The observed data come from a 2013 residential demand-response setting that predates widespread electric-vehicle charging, behind-the-meter batteries, and smart-thermostat automation. They supply realistic half-hourly load dynamics and observed tariff schedules that can be evaluated as fixed design paths, but they are not a forecast of response magnitudes in contemporary demand-response programs with widespread electric-vehicle charging, behind-the-meter storage, or smart-thermostat automation. Appendix Table~\ref{tab:synthetic_transportability} partially addresses sensitivity to residual dynamics; external validity to current technologies still requires fresh field data. The treatment effects used to compare alternative assignment designs are injected and known. This structure lets the design risk be evaluated against a fixed target. The empirical section uses replication-level response estimates to compute full-covariance selector diagnostics, paired Monte Carlo standard errors, sparse active-share design comparisons, observed-tariff information diagnostics, and finite design-matrix diagnostics.

The Low Carbon London exercise is a semi-synthetic calibrated design evaluation. It uses the 2013 half-hourly residential-load panel to calibrate baseline dynamics and then injects known response paths under alternative assignment designs. The observed dynamic time-of-use tariff is not used as a source of causal identification. The normal-tariff households are used as the main baseline source to avoid building the calibration on households exposed to the dynamic tariff. Aggregation is by timestamp and tariff group, using available household readings at each half-hour. The evaluation fixes the panel frequency, household-reading ranges, target horizon, Monte Carlo replication counts, common-random-number policy, scenario definitions, and feasible persistence sets before comparing designs.

The calendar-state sensitivity and compact finite-design diagnostics show that the qualitative LCL selector conclusions are not driven by a single finite assignment matrix.

\section*{Data Availability Statement}
The empirical application uses the Low Carbon London Project: Data from the Dynamic Time-of-Use Electricity Pricing Trial, 2013, catalogued by the UK Data Service as Study 7857 and cited as \citet{StrbacEtAl2024}. The data contain half-hourly household electricity-consumption readings, timestamps, tariff-group identifiers, and recorded tariff schedules from the 2013 residential trial. We use the full calendar year and average non-missing readings within timestamp and tariff group, producing 17,520 half-hour observations. The normal-tariff group, with roughly 4,000--4,400 readings in most half-hours, provides the baseline series. The dynamic-tariff sample has roughly 1,000--1,100 readings in most half-hours; its recorded high-price, low-price, and non-normal tariff indicators are used only as fixed assignment paths in information diagnostics. The observed tariff is not used to identify its historical causal effect. Known response paths are injected into the normal-tariff baseline dynamics to compare assignment designs. No new observational data were collected for this study.

\bibliographystyle{plainnat}
\bibliography{references}

@article{AitSahalia2002,
  author = {A{\"\i}t-Sahalia, Y.},
  title = {{{Maximum likelihood estimation of discretely sampled diffusions: A closed-form approximation approach}}},
  journal = {{{Econometrica}}},
  year = {2002},
  volume = {70},
  number = {1},
  pages = {223--262}
}

@article{Akaike1974,
  author = {Akaike, H.},
  title = {{{A new look at the statistical model identification}}},
  journal = {{{IEEE Transactions on Automatic Control}}},
  year = {1974},
  volume = {19},
  number = {6},
  pages = {716--723}
}

@article{Allcott2011,
  author = {Allcott, H.},
  title = {{{Social norms and energy conservation}}},
  journal = {{{Journal of Public Economics}}},
  year = {2011},
  volume = {95},
  number = {9--10},
  pages = {1082--1095}
}

@article{AllcottRogers2014,
  author = {Allcott, H. and Rogers, T.},
  title = {{{The short-run and long-run effects of behavioral interventions: Experimental evidence from energy conservation}}},
  journal = {{{American Economic Review}}},
  year = {2014},
  volume = {104},
  number = {10},
  pages = {3003--3037}
}

@book{Amari2016,
  author = {Amari, S.},
  title = {{{Information Geometry and Its Applications}}},
  year = {2016},
  publisher = {Springer}
}

@article{Andrews1991,
  author = {Andrews, D. W. K.},
  title = {{{Heteroskedasticity and autocorrelation consistent covariance matrix estimation}}},
  journal = {{{Econometrica}}},
  year = {1991},
  volume = {59},
  number = {3},
  pages = {817--858}
}

@article{AndrewsMonahan1992,
  author = {Andrews, D. W. K. and Monahan, J. C.},
  title = {{{An improved heteroskedasticity and autocorrelation consistent covariance matrix estimator}}},
  journal = {{{Econometrica}}},
  year = {1992},
  volume = {60},
  number = {4},
  pages = {953--966}
}

@article{AronowSamii2017,
  author = {Aronow, P. M. and Samii, C.},
  title = {{{Estimating average causal effects under general interference, with application to a social network experiment}}},
  journal = {{{Annals of Applied Statistics}}},
  year = {2017},
  volume = {11},
  number = {4},
  pages = {1912--1947}
}

@article{AtheyEcklesImbens2018,
  author = {Athey, S. and Eckles, D. and Imbens, G. W.},
  title = {{{Exact p-values for network interference}}},
  journal = {{{Journal of the American Statistical Association}}},
  year = {2018},
  volume = {113},
  number = {521},
  pages = {230--240}
}

@incollection{AtheyImbens2017,
  author = {Athey, S. and Imbens, G. W.},
  title = {{{The econometrics of randomized experiments}}},
  booktitle = {{{Handbook of Economic Field Experiments}}},
  editor = {Banerjee, A. V. and Duflo, E.},
  volume = {1},
  pages = {73--140},
  publisher = {Elsevier},
  year = {2017}
}

@book{AtkinsonDonevTobias2007,
  author = {Atkinson, A. C. and Donev, A. N. and Tobias, R. D.},
  title = {{{Optimum Experimental Designs, with SAS}}},
  year = {2007},
  publisher = {Oxford University Press}
}

@article{AuerbachGorodnichenko2013,
  author = {Auerbach, A. J. and Gorodnichenko, Y.},
  title = {{{Output spillovers from fiscal policy}}},
  journal = {{{American Economic Review}}},
  year = {2013},
  volume = {103},
  number = {3},
  pages = {141--146}
}

@inproceedings{BakshyEcklesBernstein2014,
  author = {Bakshy, E. and Eckles, D. and Bernstein, M. S.},
  title = {{{Designing and deploying online field experiments}}},
  booktitle = {{{Proceedings of the 23rd International Conference on World Wide Web}}},
  pages = {283--292},
  year = {2014}
}

@article{BarnichonBrownlees2019,
  author = {Barnichon, R. and Brownlees, C.},
  title = {{{Impulse response estimation by smooth local projections}}},
  journal = {{{Review of Economics and Statistics}}},
  year = {2019},
  volume = {101},
  number = {3},
  pages = {522--530}
}

@article{BasseDingToulis2023,
  author = {Basse, G. W. and Ding, Y. and Toulis, P.},
  title = {{{Minimax designs for causal effects in temporal experiments with treatment habituation}}},
  journal = {{{Biometrika}}},
  year = {2023},
  volume = {110},
  number = {1},
  pages = {155--168}
}

@article{BergmeirHyndmanKoo2018,
  author = {Bergmeir, C. and Hyndman, R. J. and Koo, B.},
  title = {{{A note on the validity of cross-validation for evaluating autoregressive time series prediction}}},
  journal = {{{Computational Statistics \& Data Analysis}}},
  year = {2018},
  volume = {120},
  pages = {70--83}
}

@article{BerkEtAl2013,
  author = {Berk, R. and Brown, L. and Buja, A. and Zhang, K. and Zhao, L.},
  title = {{{Valid post-selection inference}}},
  journal = {{{Annals of Statistics}}},
  year = {2013},
  volume = {41},
  number = {2},
  pages = {802--837}
}

@article{BojinovRambachanShephard2021,
  author = {Bojinov, I. and Rambachan, A. and Shephard, N.},
  title = {{{Panel experiments and dynamic causal effects: A finite population perspective}}},
  journal = {{{Quantitative Economics}}},
  year = {2021},
  volume = {12},
  number = {4},
  pages = {1171--1196}
}

@article{BojinovSimchiLeviZhao2023,
  author = {Bojinov, I. and Simchi-Levi, D. and Zhao, J.},
  title = {{{Design and analysis of switchback experiments}}},
  journal = {{{Management Science}}},
  year = {2023},
  volume = {69},
  number = {7},
  pages = {3759--3777}
}

@article{BojinovShephard2019,
  author = {Bojinov, I. and Shephard, N.},
  title = {{{Time series experiments and causal estimands: Exact randomization tests and trading}}},
  journal = {{{Journal of the American Statistical Association}}},
  year = {2019},
  volume = {114},
  number = {528},
  pages = {1665--1682}
}

@misc{CarlsonShephard2026,
  author = {Carlson, J. and Shephard, N.},
  title = {{{When are time series predictions causal? The potential system and dynamic causal effects}}},
  year = {2026},
  eprint = {2603.20394},
  archivePrefix = {arXiv},
  note = {arXiv preprint arXiv:2603.20394.}
}

@article{ChalonerVerdinelli1995,
  author = {Chaloner, K. and Verdinelli, I.},
  title = {{{Bayesian experimental design: A review}}},
  journal = {{{Statistical Science}}},
  year = {1995},
  volume = {10},
  number = {3},
  pages = {273--304}
}

@book{CharnesCooper1961,
  author = {Charnes, A. and Cooper, W. W.},
  title = {{{Management Models and Industrial Applications of Linear Programming}}},
  year = {1961},
  publisher = {Wiley}
}

@book{Davidson1994,
  author = {Davidson, J.},
  title = {{{Stochastic Limit Theory}}},
  year = {1994},
  publisher = {Oxford University Press}
}

@article{FaruquiSergici2010,
  author = {Faruqui, A. and Sergici, S.},
  title = {{{Household response to dynamic pricing of electricity: A survey of 15 experiments}}},
  journal = {{{Journal of Regulatory Economics}}},
  year = {2010},
  volume = {38},
  number = {2},
  pages = {193--225}
}

@book{Fedorov1972,
  author = {Fedorov, V. V.},
  title = {{{Theory of Optimal Experiments}}},
  year = {1972},
  publisher = {Academic Press}
}

@article{FerreiraMirandaAgrippinoRicco2025,
  author = {Ferreira, L. N. and Miranda-Agrippino, S. and Ricco, G.},
  title = {{{Bayesian local projections}}},
  journal = {{{Review of Economics and Statistics}}},
  year = {2025},
  volume = {107},
  number = {5},
  pages = {1424--1438}
}

@book{GoodwinPayne1977,
  author = {Goodwin, G. C. and Payne, R. L.},
  title = {{{Dynamic System Identification: Experiment Design and Data Analysis}}},
  year = {1977},
  publisher = {Academic Press}
}

@book{HampelEtAl1986,
  author = {Hampel, F. R. and Ronchetti, E. M. and Rousseeuw, P. J. and Stahel, W. A.},
  title = {{{Robust Statistics: The Approach Based on Influence Functions}}},
  year = {1986},
  publisher = {Wiley}
}

@article{Hansen1982,
  author = {Hansen, L. P.},
  title = {{{Large sample properties of generalized method of moments estimators}}},
  journal = {{{Econometrica}}},
  year = {1982},
  volume = {50},
  number = {4},
  pages = {1029--1054}
}

@article{HansenHeatonYaron1996,
  author = {Hansen, L. P. and Heaton, J. and Yaron, A.},
  title = {{{Finite-sample properties of some alternative GMM estimators}}},
  journal = {{{Journal of Business \& Economic Statistics}}},
  year = {1996},
  volume = {14},
  number = {3},
  pages = {262--280}
}

@article{HansenLunde2005,
  author = {Hansen, P. R. and Lunde, A.},
  title = {{{A forecast comparison of volatility models: Does anything beat a GARCH(1,1)?}}},
  journal = {{{Journal of Applied Econometrics}}},
  year = {2005},
  volume = {20},
  number = {7},
  pages = {873--889}
}

@inproceedings{HohnholdOBrienTang2015,
  author = {Hohnhold, H. and O'Brien, D. and Tang, D.},
  title = {{{Focusing on the long-term: It's good for users and business}}},
  booktitle = {{{Proceedings of the 21st ACM SIGKDD International Conference on Knowledge Discovery and Data Mining}}},
  pages = {1849--1858},
  year = {2015},
  note = {doi:10.1145/2783258.2788583}
}

@article{HowardEtAl2021,
  author = {Howard, S. R. and Ramdas, A. and McAuliffe, J. and Sekhon, J.},
  title = {{{Time-uniform, nonparametric, nonasymptotic confidence sequences}}},
  journal = {{{Annals of Statistics}}},
  year = {2021},
  volume = {49},
  number = {2},
  pages = {1055--1080}
}

@article{Huber1964,
  author = {Huber, P. J.},
  title = {{{Robust estimation of a location parameter}}},
  journal = {{{Annals of Mathematical Statistics}}},
  year = {1964},
  volume = {35},
  number = {1},
  pages = {73--101}
}

@article{HudgensHalloran2008,
  author = {Hudgens, M. G. and Halloran, M. E.},
  title = {{{Toward causal inference with interference}}},
  journal = {{{Journal of the American Statistical Association}}},
  year = {2008},
  volume = {103},
  number = {482},
  pages = {832--842}
}

@book{ImbensRubin2015,
  author = {Imbens, G. W. and Rubin, D. B.},
  title = {{{Causal Inference for Statistics, Social, and Biomedical Sciences: An Introduction}}},
  year = {2015},
  publisher = {Cambridge University Press}
}

@article{InoueJordaKuersteiner2026,
  author = {Inoue, A. and Jord{\`a}, {\`O}. and Kuersteiner, G. M.},
  title = {{{Inference for local projections}}},
  journal = {{{The Econometrics Journal}}},
  year = {2026},
  volume = {29},
  number = {1},
  pages = {2--26}
}

@article{Ito2014,
  author = {Ito, K.},
  title = {{{Do consumers respond to marginal or average price? Evidence from nonlinear electricity pricing}}},
  journal = {{{American Economic Review}}},
  year = {2014},
  volume = {104},
  number = {2},
  pages = {537--563}
}

@article{ItoIdaTanaka2018,
  author = {Ito, K. and Ida, T. and Tanaka, M.},
  title = {{{Moral suasion and economic incentives: Field experimental evidence from energy demand}}},
  journal = {{{American Economic Journal: Economic Policy}}},
  year = {2018},
  volume = {10},
  number = {1},
  pages = {240--267}
}

@misc{JiaKallusYu2023,
  author = {Jia, S. and Kallus, N. and Yu, C. L.},
  title = {{{Clustered switchback experiments: Near-optimal rates under spatiotemporal interference}}},
  year = {2023},
  eprint = {2312.15574},
  archivePrefix = {arXiv},
  note = {arXiv preprint arXiv:2312.15574.}
}

@article{Jorda2005,
  author = {Jord{\`a}, {\`O}.},
  title = {{{Estimation and inference of impulse responses by local projections}}},
  journal = {{{American Economic Review}}},
  year = {2005},
  volume = {95},
  number = {1},
  pages = {161--182}
}

@article{Kiefer1959,
  author = {Kiefer, J.},
  title = {{{Optimum experimental designs}}},
  journal = {{{Journal of the Royal Statistical Society: Series B}}},
  year = {1959},
  volume = {21},
  number = {2},
  pages = {272--304}
}

@article{KilianKim2011,
  author = {Kilian, L. and Kim, Y. J.},
  title = {{{How reliable are local projection estimators of impulse responses?}}},
  journal = {{{Review of Economics and Statistics}}},
  year = {2011},
  volume = {93},
  number = {4},
  pages = {1460--1466}
}

@article{KoenkerBassett1978,
  author = {Koenker, R. and Bassett, G.},
  title = {{{Regression quantiles}}},
  journal = {{{Econometrica}}},
  year = {1978},
  volume = {46},
  number = {1},
  pages = {33--50}
}

@article{KohaviLongbothamSommerfieldHenne2009,
  author = {Kohavi, R. and Longbotham, R. and Sommerfield, D. and Henne, R. M.},
  title = {{{Controlled experiments on the web: Survey and practical guide}}},
  journal = {{{Data Mining and Knowledge Discovery}}},
  year = {2009},
  volume = {18},
  number = {1},
  pages = {140--181}
}

@book{KohaviTangXu2020,
  author = {Kohavi, R. and Tang, D. and Xu, Y.},
  title = {{{Trustworthy Online Controlled Experiments: A Practical Guide to A/B Testing}}},
  year = {2020},
  publisher = {Cambridge University Press}
}

@article{KolesarPlagborgMoller2025,
  author = {Koles{\'a}r, M. and Plagborg-M{\o}ller, M.},
  title = {{{Dynamic causal effects in a nonlinear world: The good, the bad, and the ugly}}},
  journal = {{{Journal of Business \& Economic Statistics}}},
  year = {2025},
  volume = {43},
  number = {4},
  pages = {737--754}
}

@article{LeeEtAl2016,
  author = {Lee, J. D. and Sun, D. L. and Sun, Y. and Taylor, J. E.},
  title = {{{Exact post-selection inference, with application to the lasso}}},
  journal = {{{Annals of Statistics}}},
  year = {2016},
  volume = {44},
  number = {3},
  pages = {907--927}
}

@article{Leung2022,
  author = {Leung, M. P.},
  title = {{{Causal inference under approximate neighborhood interference}}},
  journal = {{{Econometrica}}},
  year = {2022},
  volume = {90},
  number = {1},
  pages = {267--293}
}

@article{LiPlagborgMollerWolf2024,
  author = {Li, D. and Plagborg-M{\o}ller, M. and Wolf, C. K.},
  title = {{{Local projections vs. VARs: Lessons from thousands of DGPs}}},
  journal = {{{Journal of Econometrics}}},
  year = {2024},
  volume = {244},
  number = {2},
  pages = {105722}
}

@misc{LinDing2025,
  author = {Lin, Z. and Ding, P.},
  title = {{{Unifying regression-based and design-based causal inference in time-series experiments}}},
  year = {2025},
  eprint = {2510.22864},
  archivePrefix = {arXiv},
  note = {arXiv preprint arXiv:2510.22864.}
}

@article{Lindley1956,
  author = {Lindley, D. V.},
  title = {{{On a measure of the information provided by an experiment}}},
  journal = {{{Annals of Mathematical Statistics}}},
  year = {1956},
  volume = {27},
  number = {4},
  pages = {986--1005}
}

@book{Ljung1999,
  author = {Ljung, L.},
  title = {{{System Identification: Theory for the User}}},
  year = {1999},
  edition = {2nd},
  publisher = {Prentice Hall}
}

@article{Mehra1974,
  author = {Mehra, R. K.},
  title = {{{Optimal inputs for linear system identification}}},
  journal = {{{IEEE Transactions on Automatic Control}}},
  year = {1974},
  volume = {19},
  number = {3},
  pages = {192--200}
}

@book{Miettinen1999,
  author = {Miettinen, K.},
  title = {{{Nonlinear Multiobjective Optimization}}},
  year = {1999},
  publisher = {Springer}
}

@article{Mikusheva2007,
  author = {Mikusheva, A.},
  title = {{{Uniform inference in autoregressive models}}},
  journal = {{{Econometrica}}},
  year = {2007},
  volume = {75},
  number = {5},
  pages = {1411--1452}
}

@article{MontielOleaPlagborgMoller2021,
  author = {Montiel Olea, J. L. and Plagborg-M{\o}ller, M.},
  title = {{{Local projection inference is simpler and more robust than you think}}},
  journal = {{{Econometrica}}},
  year = {2021},
  volume = {89},
  number = {4},
  pages = {1789--1823}
}

@article{NewshamBowker2010,
  author = {Newsham, G. R. and Bowker, B. G.},
  title = {{{The effect of utility time-varying pricing and load control strategies on residential summer peak electricity use: A review}}},
  journal = {{{Energy Policy}}},
  year = {2010},
  volume = {38},
  number = {7},
  pages = {3289--3296}
}

@article{OBrienFleming1979,
  author = {O'Brien, P. C. and Fleming, T. R.},
  title = {{{A multiple testing procedure for clinical trials}}},
  journal = {{{Biometrics}}},
  year = {1979},
  volume = {35},
  number = {3},
  pages = {549--556}
}

@article{Owen1988,
  author = {Owen, A. B.},
  title = {{{Empirical likelihood ratio confidence intervals for a single functional}}},
  journal = {{{Biometrika}}},
  year = {1988},
  volume = {75},
  number = {2},
  pages = {237--249}
}

@article{Ozaki2018,
  author = {Ozaki, R.},
  title = {{{Follow the price signal: People's willingness to shift household practices in a dynamic time-of-use tariff trial in the United Kingdom}}},
  journal = {{{Energy Research \& Social Science}}},
  year = {2018},
  volume = {46},
  pages = {10--18},
  note = {doi:10.1016/j.erss.2018.06.008}
}

@article{PesaventoRossi2007,
  author = {Pesavento, E. and Rossi, B.},
  title = {{{Impulse response confidence intervals for persistent data: What have we learned?}}},
  journal = {{{Journal of Economic Dynamics and Control}}},
  year = {2007},
  volume = {31},
  number = {7},
  pages = {2398--2412}
}

@article{Phillips1998,
  author = {Phillips, P. C. B.},
  title = {{{Impulse response and forecast error variance asymptotics in nonstationary VARs}}},
  journal = {{{Journal of Econometrics}}},
  year = {1998},
  volume = {83},
  number = {1--2},
  pages = {21--56}
}

@article{PlagborgMollerWolf2021,
  author = {Plagborg-M{\o}ller, M. and Wolf, C. K.},
  title = {{{Local projections and VARs estimate the same impulse responses}}},
  journal = {{{Econometrica}}},
  year = {2021},
  volume = {89},
  number = {2},
  pages = {955--980}
}

@article{Pocock1977,
  author = {Pocock, S. J.},
  title = {{{Group sequential methods in the design and analysis of clinical trials}}},
  journal = {{{Biometrika}}},
  year = {1977},
  volume = {64},
  number = {2},
  pages = {191--199}
}

@book{PronzatoPazman2013,
  author = {Pronzato, L. and P{\'a}zman, A.},
  title = {{{Design of Experiments in Nonlinear Models: Asymptotic Normality, Optimality Criteria and Small-Sample Properties}}},
  year = {2013},
  publisher = {Springer}
}

@book{Pukelsheim2006,
  author = {Pukelsheim, F.},
  title = {{{Optimal Design of Experiments}}},
  year = {2006},
  publisher = {SIAM}
}

@misc{RambachanShephard2025,
  author = {Rambachan, A. and Shephard, N.},
  title = {{{When do common time series estimands have nonparametric causal meaning?}}},
  year = {2025},
  eprint = {1903.01637},
  archivePrefix = {arXiv},
  note = {arXiv preprint arXiv:1903.01637, revised January 2025}
}

@incollection{Ramey2016,
  author = {Ramey, V. A.},
  title = {{{Macroeconomic shocks and their propagation}}},
  booktitle = {{{Handbook of Macroeconomics}}},
  editor = {Taylor, J. B. and Uhlig, H.},
  volume = {2},
  pages = {71--162},
  publisher = {Elsevier},
  year = {2016}
}

@article{RameyZubairy2018,
  author = {Ramey, V. A. and Zubairy, S.},
  title = {{{Government spending multipliers in good times and in bad: Evidence from U.S. historical data}}},
  journal = {{{Journal of Political Economy}}},
  year = {2018},
  volume = {126},
  number = {2},
  pages = {850--901}
}

@book{RueHeld2005,
  author = {Rue, H. and Held, L.},
  title = {{{Gaussian Markov Random Fields: Theory and Applications}}},
  year = {2005},
  publisher = {Chapman \& Hall/CRC}
}

@article{SavjeAronowHudgens2021,
  author = {S{\"a}vje, F. and Aronow, P. M. and Hudgens, M. G.},
  title = {{{Average treatment effects in the presence of unknown interference}}},
  journal = {{{Annals of Statistics}}},
  year = {2021},
  volume = {49},
  number = {2},
  pages = {673--701}
}

@phdthesis{Schofield2015,
  author = {Schofield, J. R.},
  title = {{{Dynamic time-of-use electricity pricing for residential demand response: Design and analysis of the Low Carbon London smart-metering trial}}},
  year = {2015},
  school = {Imperial College London}
}

@article{Schwarz1978,
  author = {Schwarz, G.},
  title = {{{Estimating the dimension of a model}}},
  journal = {{{Annals of Statistics}}},
  year = {1978},
  volume = {6},
  number = {2},
  pages = {461--464}
}

@misc{StrbacEtAl2024,
  author = {Strbac, G. and Tindemans, S. H. and Woolf, M. and Bilton, M. and Carmichael, R. and Schofield, J. R.},
  title = {{{Low Carbon London Project: Data from the Dynamic Time-of-Use Electricity Pricing Trial, 2013 [data collection]}}},
  year = {2024},
  howpublished = {UK Data Service, Study 7857},
  note = {[dataset], doi:10.5255/UKDA-SN-7857-2}
}

@article{Wald1945,
  author = {Wald, A.},
  title = {{{Sequential tests of statistical hypotheses}}},
  journal = {{{Annals of Mathematical Statistics}}},
  year = {1945},
  volume = {16},
  number = {2},
  pages = {117--186}
}

@misc{Wang2026,
  author = {Wang, E.},
  title = {{{Local projections identify the same policy counterfactuals as empirical and structural models}}},
  year = {2026},
  eprint = {2409.09577},
  archivePrefix = {arXiv},
  note = {arXiv preprint arXiv:2409.09577, revised February 2026}
}

@book{White2001,
  author = {White, H.},
  title = {{{Asymptotic Theory for Econometricians}}},
  year = {2001},
  edition = {Revised},
  publisher = {Academic Press}
}

@article{Wolak2011,
  author = {Wolak, F. A.},
  title = {{{Do residential customers respond to hourly prices? Evidence from a dynamic pricing experiment}}},
  journal = {{{American Economic Review}}},
  year = {2011},
  volume = {101},
  number = {3},
  pages = {83--87}
}

@inproceedings{WuEtAl2026,
  author = {Wu, X. and Wen, Q. and Zhang, Y. and Zhu, H. and Li, T. and Shi, C.},
  title = {{{Designing time series experiments in A/B testing with Transformer reinforcement learning}}},
  booktitle = {{{International Conference on Learning Representations}}},
  year = {2026}
}

@misc{XiongChinTaylor2024,
  author = {Xiong, R. and Chin, A. and Taylor, S. J.},
  title = {{{Data-driven switchback experiments: Theoretical tradeoffs and empirical Bayes designs}}},
  year = {2024},
  eprint = {2406.06768},
  archivePrefix = {arXiv},
  note = {arXiv preprint arXiv:2406.06768.}
}

@article{YunusovTorriti2021,
  author = {Yunusov, T. and Torriti, J.},
  title = {{{Distributional effects of Time of Use tariffs based on electricity demand and time use}}},
  journal = {{{Energy Policy}}},
  year = {2021},
  volume = {156},
  pages = {112412},
  note = {doi:10.1016/j.enpol.2021.112412}
}

@book{Zarrop1979,
  author = {Zarrop, M. B.},
  title = {{{Optimal Experiment Design for Dynamic System Identification}}},
  year = {1979},
  publisher = {Springer}
}

@misc{ZengEtAl2026,
  author = {Zeng, Z. and Adjaho, C. and Bucarey, A. and Qin, C. and Zhang, R. and Hoban, P. and Johari, R. and Wager, S.},
  title = {{{Sequentially-rerandomized switchback experiments}}},
  year = {2026},
  eprint = {2604.02489},
  archivePrefix = {arXiv},
  note = {arXiv preprint arXiv:2604.02489.}
}

\end{document}